\newtheorem{definition}{Definition}[section]
\newtheorem{theorem}{Theorem}[section]
\newtheorem{lemma}{Lemma}[section]
\newtheorem{corollary}{Corollary}[section]
\newtheorem{proposition}{Proposition}[section]
\title{Canonical Logic Programs are Succinctly Incomparable with Propositional Formulas\thanks{Extended version of a paper with the same name in  KR2014.}}
\author{Yuping Shen}
\author{Xishun Zhao\thanks{Corresponding Author.}}
\affil{Institute of Logic and Cognition,\\
Department of Philosophy,\\
 Sun Yat-sen University,\\
510275 Guangzhou, P.R. China\\
\texttt{\{shyping, hsszxs\}@mail.sysu.edu.cn}}
\begin{document}
\maketitle

\begin{abstract}
\emph{Canonical (logic) programs} (CP) refer to normal logic programs  augmented with connective $not\ not$. In this paper we address the question of
 whether CP  are \emph{succinctly incomparable} with \emph{propositional formulas} (PF). Our main result shows that the PARITY problem, which
 can be polynomially represented in PF  but \emph{only} has exponential representations in CP. In other words, PARITY \emph{separates} PF from CP. Simply speaking, this means
 that exponential size blowup is generally inevitable  when  translating a set of formulas in PF into an equivalent program in CP (without introducing new variables). Furthermore,  since it has been shown by   Lifschitz and  Razborov  that there is also a problem that separates CP from PF (assuming $\mathsf{P}\nsubseteq \mathsf{NC^1/poly}$), it follows that CP and PF are
  indeed succinctly incomparable. From the view of the theory of computation,  the above result may also be considered as the separation of two \emph{models of computation}, i.e., we identify a language in $\mathsf{NC^1/poly}$ which is not in the set of languages computable by polynomial size CP programs.
\end{abstract}

\section{Introduction}
The  study of logic programs under answer set semantics, i.e., \emph{answer set programming} (ASP) \cite{Gelfond88thestable,lif08,DBLP:journals/cacm/BrewkaET11}, has been an active area in artificial intelligence  since the past decades. As a competing approach to SAT \cite{DBLP:series/faia/2009-185}, ASP has been successfully applied in many fields like Planning, Commonsense Reasoning, Scheduling, etc.

The relationship between logic programs and \emph{propositional formulas} (PF) gains a lot of attention in the literature. A well-known  theorem shown  by Lin \& Zhao \cite{Lin2004115} gives a method for translating a \emph{normal (logic) program} (LP) to a (logically) equivalent set of formulas in PF, without introducing additional variables. However, it has been observed that the translation may result in an exponential number of so-called \emph{loop formulas} in the worst case. In 2006, Lifschitz and Razborov proved that  such exponential  blowup is generally inevitable,  more precisely, they showed that (a variant of) the $\mathsf{P}$-complete \emph{problem} PATH has polynomial size representations in LP, however, it \emph{cannot} be  polynomially  represented in PF (assuming  $\mathsf{P}\nsubseteq \mathsf{NC^1/poly}$) \cite{Lifschitz:2006:WSM:1131313.1131316}. In other words, we say PATH \emph{separates} LP from PF.

As noted in \cite{Lifschitz:2006:WSM:1131313.1131316}, PF can be considered as  a special case of  \emph{(nondisjunctive) nested programs} (NLP) \cite{Lifschitz99nestedexpressions}, which is a  general  form  of programs that subsumes LP and some other kinds of programs. Therefore, NLP is \emph{stronger} than PF in terms of the \emph{succinctness} criterion (or the \emph{``comparative linguistics'' approach}) proposed in \cite{Gogic95thecomparative}:
\begin{quote}
That is, we consider formalism $A$ to be  stronger than formalism $B$ if and only if any knowledge base (KB) in $B$ has an equivalent KB in $A$ that is only polynomially longer, while
there is a KB in $A$ that can be translated to $B$ only with an exponential blowup.
\end{quote}
So the following footnote in \cite{lif08} seems convincing at first glance:
\begin{quote}
...ASP appears to be stronger than SAT in the sense of the ``comparative linguistics'' approach to knowledge
representation...
\end{quote}

However, since ASP involves many kinds of programs, the above statement probably needs  further clarification. Particularly, the so-called \emph{(nondisjunctive) canonical  programs} (CP)\footnote{Extends LP with connective $not\ not$.} \cite{Lee:2005:MCL:1642293.1642374,Lifschitz99nestedexpressions,conf/ijcai/LeeLY13}, is a ``minimal'' form of ASP that is equally expressive as PF, but looks more likely  \emph{not} succinctly stronger.  So a question naturally arises: \emph{Does there exist a problem  that separates CP from PF?} If there is such a problem, then CP and PF are  \emph{succinctly incomparable} (assuming $\mathsf{P}\nsubseteq \mathsf{NC^1/poly}$).

In this paper we address the question and give a \emph{positive} answer.  Our main result shows that the problem  PARITY  separates PF from CP. Simply speaking, this means an exponential size blowup is generally inevitable  when  translating a set of formulas in PF into an equivalent program in CP (without introducing new variables). The PARITY problem asks whether a binary string contains an odd number of 1's, and it is well-known  that (i) PARITY$\in\mathsf{NC^1/poly}$, i.e., it has polynomial representations in PF\footnote{$\mathsf{NC^1/poly}$ (or \emph{non-uniform} $\mathsf{NC^1}$) \emph{exactly} contains languages computable (i.e., representable) by polynomial size propositional formulas.} \cite{ccama,DBLP:books/daglib/0028687}, (ii) PARITY$\notin\mathsf{AC^0}$, i.e., it cannot be represented by \emph{polynomial} size boolean circuits with \emph{constant depth} and \emph{unbounded fan-in} \cite{DBLP:journals/mst/FurstSS84,Hastad:1986:AOL:12130.12132}.

To show PARITY separates PF from CP, we provide a procedure that simplifies  every PARITY  program $\Pi$ into a shorter, \emph{loop-free} program $\Pi'$. By Lin-Zhao Theorem (or the (generalized) \emph{Fages Theorem} \cite{fages94,Erdem:2003:TLP:986809.986814,You:2003:EAS:1630659.1630783}), $\Pi'$ is equivalent to its \emph{completion} $Comp(\Pi')$, the latter is essentially a constant depth, unbounded fan-in  circuit whose size is  polynomially bounded by $|\Pi'|$. According to   PARITY$\notin\mathsf{AC^0}$, these circuits must be of exponential size, consequently, there are no polynomial size CP  programs for PARITY.

From the view of the theory of computation,  the above result may also be considered as the separation of two \emph{models of computation} \cite{savage1998models}, i.e., we identify a language in $\mathsf{NC^1/poly}$ which is not in the set of languages computable by polynomial size CP programs. Based on the observation,  we point out more separation results on some classes of logic programs, e.g.,  PARITY separates logic programs with \emph{cardinality constraints} and \emph{choice rules} (CC) \cite{DBLP:journals/ai/SimonsNS02} from CP; assuming $\mathsf{P}\nsubseteq \mathsf{NC^1/poly}$,  CP and \emph{definite causal theories} (DT) \cite{mccain97thesis,Giunchiglia04nonmonotoniccausal} are succinctly incomparable;  \emph{two-valued} programs (TV) \cite{DBLP:conf/iclp/Lifschitz12} are strictly more succinct than CP and DT, etc.

The rest of the paper is organized as follows: Section \ref{sec:background} gives preliminaries to the semantics of canonical programs, the concepts of succinctness and the PARITY problem. In Section \ref{sec:boolean-completion} we briefly review the notation of boolean circuit, the completion semantics and the  Lin-Zhao theorem. Section \ref{sec:gen-simply-parity} illustrates how to simply an arbitrary PARITY program to be  loop-free and presents the main theorem. In Section \ref{sec:more-suc-result} we  discuss the importance of succinctness research and point out more results on a family of logic program classes. Conclusions are  drawn  in the last Section.

\section{Background}\label{sec:background}
\subsection{Canonical Programs}
The following notations are adopted from \cite{Lifschitz99nestedexpressions,Lee:2005:MCL:1642293.1642374}.
A \emph{rule element} $e$ is defined as $$e:= \top\ |\ \bot\ |\ x\ |\ not\ x\ |\ not\ not\ x$$ in which $\top,\bot$ are  $0$-ary connectives, $x$ is a \emph{(boolean) variable} (or an \emph{atom})  and $not$ is a unary connective\footnote{According to  \cite{Lifschitz99nestedexpressions},  $not\ not\ not\ x$ can be  replaced by $not\ x$.}.  A \emph{(nondisjunctive canonical) rule} is an expression of the form
\begin{equation}\label{SNE:rule}
H\leftarrow \ B
\end{equation}
where the \emph{head} $H$ is either a variable or the connective $\bot$, and the \emph{body} $B$ is a  finite set of rule elements.
 A \emph{canonical  program} (CP) $\Pi$ is a finite set of  rules, $\Pi$ is \emph{normal} if it contains no  connectives $not\ not$. A normal program $\Pi$ is \emph{basic}  if it contains no connectives $not$.  The following is a canonical program:
\begin{equation}\label{SNE:SNE1}
\begin{array}{c}
x_1 \leftarrow   not\ not\ x_1,\\
x_2 \leftarrow   not\ not\ x_2,\\
\end{array}
\begin{array}{l}
x_3 \leftarrow  not\ x_1, not\ x_2,\\
x_3  \leftarrow   x_1,\ x_2.\\
\end{array}
\end{equation}

The \emph{satisfaction relation} $\models$ between a set of variables $I$ and a rule element is defined as follows:
\begin{itemize}
\item $I\models \top$ and  $I\nvDash\bot$,
\item $I\models x$   iff $I\models not\ not\ x$ iff $x\in I$,
\item $I\models not\ x$ iff $x\notin I$.
\end{itemize}

Say  $I$ satisfies a set of rule elements $B$ if $I$ satisfies each rule elements in $B$. We say $I$ is \emph{closed} under a program $\Pi$, if $I$ is closed under every rule in $\Pi$, i.e.,
for each rule $H\leftarrow B\in\Pi$, $I\models H$ whenever $I\models B$. Let $\Pi$ be a basic program, $Cn(\Pi)$ denotes the  \emph{minimal} set (in terms of inclusion)  closed under  $\Pi$, we say $I$ is an \emph{answer set} of $\Pi$ if $I=Cn(\Pi)$.   Note that a  basic program has exactly one answer set.

The \emph{reduct} $\Pi^I$ of a canonical program $\Pi$ w.r.t. $I$ is a set of rules obtained from $\Pi$ via: (i) Replacing  each $not\ not\ x$  with $\top$ if $I\models x$, and with  $\bot$ otherwise; (ii) Replacing  each $not\ x$   with $\top$ if $I\nvDash x$, and with  $\bot$ otherwise. Observe that $\Pi^I$ must be a basic program. We say $I$ is an answer set of $\Pi$ if $I=Cn(\Pi^I)$, i.e., $I$ is an answer set of $\Pi^I$.

The following single rule canonical program $\Pi$:
\begin{equation}\label{KR:SNE-3}
 x  \leftarrow  not\ not\ x
\end{equation}
has two answer sets $\{x\}$ and $\emptyset$. To see this, check that $\Pi^{\{x\}}$ is  $\{x\leftarrow \top \}$, whose only answer set is $\{x\}$. Similarly,
$\Pi^{\emptyset}$ is $\{x\leftarrow \bot \}$, whose only answer set is $\emptyset$. For convenience, $\top$ in the body is often omitted.

For a set of rule elements $B$, define $var(B)=\{e\in B: e\mbox{ is a variable}\}$. E.g., $var(\{x_1, not\ x_2, not\ not\ x_3\})=\{x_1\}$.
Let $\Pi$ be a  program, by $var(\Pi)$ we denote the set of all variables involved in $\Pi$ and by $Ans(\Pi)$ we denote the set of all answer sets
of $\Pi$. E.g., let $\Pi$ be program (\ref{SNE:SNE1}), then $var(\Pi)=\{x_1,x_2,x_3\}$ and $Ans(\Pi)=\{\{x_1,x_2,x_3\},\{x_1\},\{x_2\},\{x_3\}\}$.
As a convention, by $\Pi_n$ we refer to a  program with $n$ variables  $\{x_1,\ldots,x_n\}$, i.e., $var(\Pi_n)=\{x_1,\ldots,x_n\}$. The \emph{size}  $|\Pi_n|$ of a program   $\Pi_n$,  is the number of rules in it.

\subsection{Problem Representation and Succinctness}
A \emph{string} is a finite sequence of \emph{bits} from $\{0,1\}$.    A string $w$ of length $n$ (i.e., $w\in\{0,1\}^n$) can be written as
 $w_1w_2\ldots w_n$, in which each bit  $w_i\in \{0,1\}$.
   Note that a string $w\in\{0,1\}^n$ defines a subset  of  variables $\{x_1,\ldots,x_n\}$, e.g., $1010$ stands for $\{x_1,x_3\}$. So a set of variables and a string is  regarded as the same. A \emph{problem}  (or \emph{language} ) $L$ is a set of strings.
%Given $w,w'\in\{0,1\}^n$, define $w\leq w'$ if $w_i\leq w'_i$ for every $i\in\{1,\dots,n\}$, e.g., $1010\leq 1110$.
\begin{definition}[Problem Representation]\label{def:LPcomp}
A problem $L$ can be \emph{represented}  in a class of  programs $\mathcal{C}$ (i.e., $L\in \mathcal{C}$), if  there exists a \emph{sequence}
of programs $\{\Pi_n\}$ ($n=1,2,\dots$) in $\mathcal{C}$  that \emph{computes} $L$, i.e.,  for every string $w\in\{0,1\}^n$,
$$w\in L \Leftrightarrow w\in Ans(\Pi_n).$$
 Moreover, say $L$ has polynomial  representations in $\mathcal{C}$ (i.e., $L\in \mbox{Poly-}\mathcal{C}$), if $L\in \mathcal{C}$ and $|\Pi_n|$ is bounded by a polynomial $p(n)$.
\end{definition}
The following concept is adopted from \cite{Gogic95thecomparative,DBLP:journals/ai/FrenchHIK13}.

\begin{definition}[Succinctness]\label{def:succint}
Let $\mathcal{C,C'}$ be two classes of  programs  and  for every problem $L$, $L\in \mathcal{C}\Leftrightarrow L\in \mathcal{C'}$.
 Say $\mathcal{C}$ is \emph{at least as succinct as} $\mathcal{C'}$ (i.e., $\mathcal{C'}\preceq \mathcal{C}$), if for every problem $L$,  $$\mbox{$L\in \mbox{Poly-}\mathcal{C'}$ $\Rightarrow$ $L\in \mbox{Poly-}\mathcal{C}$.}$$
 If  $\mbox{$L\in \mbox{Poly-}\mathcal{C}$ but $L\not\in \mbox{Poly-}\mathcal{C'}$}$ (i.e., $\mathcal{C}\not\preceq \mathcal{C'}$), then  $L$ \emph{separates} $\mathcal{C}$ from $\mathcal{C'}$.   If $\mathcal{C'}\preceq \mathcal{C}$ and $\mathcal{C}\not\preceq \mathcal{C'}$, then $\mathcal{C}$ is \emph{strictly more succinct than} $\mathcal{C'}$ (i.e., $\mathcal{C'}\prec \mathcal{C}$). Moreover, $\mathcal{C,C'}$ are \emph{succinctly incomparable}  if there is a problem $L$ separates $\mathcal{C}$ from $\mathcal{C'}$, and vice versa ( i.e.,$\mathcal{C}\not\preceq \mathcal{C'}$ and $\mathcal{C'}\not\preceq \mathcal{C}$).
\end{definition}
Please note that the above notions also apply to  formalisms like PF or boolean circuits, etc.
%---------------------------
\subsection{The PARITY Problem}
The PARITY problem is defined as:
$$\mbox{PARITY}=\{\mbox{Binary strings with an odd number of}\ 1\mbox{'s}\}.$$
We may simply call a string in PARITY an odd  string, and PARITY$_n$  denotes the set of  odd  strings of length $n$. Observe that PARITY$_n$ contains $2^{n-1}$ strings.
It is not hard to see that PARITY$_n$ for $n=1,2$  can be computed by normal programs $\Pi_1=\{x_1\leftarrow \}$ and $\Pi_2=\{x_1\leftarrow not\ x_2, x_2\leftarrow not\ x_1\}$ respectively. Since $Ans(\Pi_1)=\{1\}$ (i.e., $\{x_1\}$), and $Ans(\Pi_2)=\{10, 01\}$ (i.e., $\{x_1\},\{x_2\}$). However, as stated below,  PARITY$_n$ for $n\geq 3$ have no representations in normal programs.
\begin{theorem}[PARITY$\notin$LP]\label{PARITY-NLP}
PARITY  cannot be represented by normal programs.
\end{theorem}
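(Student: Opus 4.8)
The plan is to isolate a structural property that $Ans(\Pi)$ enjoys for \emph{every} normal program $\Pi$, and that PARITY$_n$ violates as soon as $n\ge 3$. The property is: \emph{$Ans(\Pi)$ is an antichain under set inclusion}, i.e.\ no answer set of a normal program is a proper subset of another answer set of the same program. Granting this, the theorem is immediate. For $n\ge 3$ the string $10\cdots 0$ (i.e.\ the set $\{x_1\}$) and the string $1110\cdots 0$ (i.e.\ $\{x_1,x_2,x_3\}$) both contain an odd number of $1$'s, hence both belong to PARITY$_n$, yet $\{x_1\}\subsetneq\{x_1,x_2,x_3\}$; so PARITY$_n$ is not an antichain, and therefore PARITY$_n\neq Ans(\Pi_n)$ for any normal $\Pi_n$, meaning no sequence $\{\Pi_n\}$ of normal programs computes PARITY.

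To establish the antichain property I would first record the (routine) fact that every answer set $I$ of a normal program $\Pi$ is a classical model of $\Pi$, reading $not$ as $\neg$: a rule of $\Pi$ deleted when forming the reduct $\Pi^I$ has some body literal $not\ a$ with $a\in I$, hence a body that $I$ falsifies, so $I$ satisfies it vacuously; a rule that survives has the same truth value under $I$ as its reduced (definite) form, which $I$ satisfies because $I=Cn(\Pi^I)$ is the least model of the definite program $\Pi^I$. Next, suppose $I,J\in Ans(\Pi)$ with $J\subseteq I$; I claim $J$ is a model of $\Pi^I$. Take any rule of $\Pi^I$, arising from a rule $H\leftarrow B$ of $\Pi$ that was not deleted, so every literal $not\ a$ occurring in $B$ has $a\notin I$, hence $a\notin J$. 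If $J$ satisfies the reduced (positive) body, then $J$ satisfies all of $B$ — the positive atoms directly, the negated atoms because they lie outside $J$ — so $J$ satisfies $H$ since $J$ is a classical model of $\Pi$; thus $J$ satisfies the reduced rule. As $Cn(\Pi^I)=I$ is the least model of $\Pi^I$, this gives $I\subseteq J$, and hence $I=J$, which is the antichain property.

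Finally I would treat the small cases to show the bound is tight: PARITY$_1=\{\{x_1\}\}$ and PARITY$_2=\{\{x_1\},\{x_2\}\}$ are trivially antichains and are computed by the normal programs $\Pi_1,\Pi_2$ already displayed, so no contradiction arises for $n\le 2$. I do not expect a genuine obstacle in this argument; the only place that needs care is the bookkeeping in the reduct step — keeping track of which rules of $\Pi$ are deleted in $\Pi^I$ and why the surviving ones behave monotonically when passing from $I$ to a subset $J$. It is worth emphasizing that the antichain property fails the moment the connective $not\ not$ is permitted — already $x\leftarrow not\ not\ x$ has answer sets $\emptyset\subsetneq\{x\}$ — which is precisely why representability of PARITY in CP, rather than in LP, is a subtle question and occupies the remainder of the paper.
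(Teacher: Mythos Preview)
Your proposal is correct and follows exactly the paper's own argument: invoke the anti-chain property of answer sets of normal programs and observe that for $n\ge 3$ the odd strings $\{x_1\}\subsetneq\{x_1,x_2,x_3\}$ witness that PARITY$_n$ is not an anti-chain. The only difference is that the paper simply cites the anti-chain property from \cite{Lifschitz:2006:WSM:1131313.1131316}, whereas you supply a self-contained proof of it; your derivation (answer sets are classical models, and a smaller answer set is closed under the reduct with respect to the larger one) is the standard one and is sound.
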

\begin{proof}
Suppose there is a normal program $\Pi_n$ that computes PARITY$_n$ for a fixed $n\geq 3$. Then  $\{x_1\}$ and $\{x_1,x_2,x_3\}$, which are two odd strings,  belong to $Ans(\Pi_n)$. However, this is impossible since it contradicts the \emph{anti-chain} property of $\Pi_n$ \cite{Lifschitz:2006:WSM:1131313.1131316}: if strings $I,I'\in Ans(\Pi_n)$ and $I\subseteq I'$ then $I=I'$.
\end{proof}

On the other hand, the anti-chain property is suppressed in CP. E.g.,  the answer set $111$ of program (\ref{SNE:SNE1}) is a superset of the other three
answer sets $100,010,001$. Clearly,  program (\ref{SNE:SNE1}) represents PARITY$_3$, moreover, it suggests a ``pattern'' for representing PARITY$_n$: The first part of the program
(e.g., the first two rules in (\ref{SNE:SNE1}))
generates all possible strings of $n-1$ bits, the second part identifies the last bit to produce an odd string.

Therefore, it is straightforward to give a sequence
of canonical programs $\{\Pi_n\}$ for PARITY$_n$. The following is a PARITY$_4$ program generated from the pattern:
\begin{equation}\label{equ:parity4-pattern}
\begin{array}{c}
x_1 \leftarrow   not\ not\ x_1,\\
x_2 \leftarrow   not\ not\ x_2,\\
x_3 \leftarrow   not\ not\ x_3,\\
\end{array}
 \begin{array}{c}
x_4  \leftarrow   x_1, x_2, not\ x_3,\\
x_4 \leftarrow   x_1, x_3, not\ x_2,\\
x_4 \leftarrow  x_2, x_3, not\ x_1,\\
x_4 \leftarrow  not\ x_1, not\ x_2, not\ x_3.\\
\end{array}
\end{equation}
Please note that the number of rules involved in the second part of the pattern grows exponentially, since the number of odd strings with the last bit $1$ grows exponentially.
\begin{theorem}[PARITY$\in$CP]\label{SNE:ub}
PARITY  can be represented by exponential size canonical programs.
\end{theorem}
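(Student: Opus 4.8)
The plan is to exhibit, for every $n$, an explicit canonical program $\Pi_n$ with $var(\Pi_n)=\{x_1,\dots,x_n\}$ that computes $\mathrm{PARITY}_n$, and whose size is exponential in $n$ (in fact $O(2^n)$ rules suffices). The construction is exactly the ``pattern'' already sketched in the excerpt: the first block of $n-1$ rules $x_i \leftarrow not\ not\ x_i$ for $i=1,\dots,n-1$ acts as a ``generator'' that allows each of $x_1,\dots,x_{n-1}$ to be freely in or out of an answer set, and a second block of rules with head $x_n$, one rule for each string $u\in\{0,1\}^{n-1}$ of even parity, of the form $x_n \leftarrow \{x_i : u_i=1\}\cup\{not\ x_i : u_i=0\}$, which forces $x_n$ into the answer set precisely when the first $n-1$ bits already have even parity (so that the whole string becomes odd).

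The steps I would carry out are as follows. First I would define $\Pi_n$ precisely as above and observe $|\Pi_n| = (n-1) + 2^{n-2}$, which is exponential, so the size claim is immediate once correctness is established. Second, I would verify correctness by a direct reduct computation in the style of the worked examples (\ref{KR:SNE-3}) and (\ref{SNE:SNE1}): fix an arbitrary $I\subseteq\{x_1,\dots,x_n\}$, write $I'=I\cap\{x_1,\dots,x_{n-1}\}$, and compute $\Pi_n^I$. Each generator rule $x_i\leftarrow not\ not\ x_i$ becomes $x_i\leftarrow\top$ if $x_i\in I$ and $x_i\leftarrow\bot$ otherwise; each body of a head-$x_n$ rule becomes $\top$ iff $I'$ equals the even-parity string $u$ indexing that rule, and $\bot$ otherwise. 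Hence $Cn(\Pi_n^I)$ contains exactly those $x_i$ ($i<n$) with $x_i\in I$, together with $x_n$ iff some head-$x_n$ rule fired, i.e. iff $I'$ has even parity. So $I=Cn(\Pi_n^I)$ holds iff $I=I'$ and $I'$ has odd parity, or $I=I'\cup\{x_n\}$ and $I'$ has even parity — in both cases exactly when $I$ (viewed as a string) is odd. This gives $Ans(\Pi_n)=\mathrm{PARITY}_n$, as required by Definition~\ref{def:LPcomp}. Finally I would note the small cases $n=1,2$ are covered by the programs already given (or trivially by the same pattern with the convention that an empty body is $\top$), so the sequence $\{\Pi_n\}_{n\ge 1}$ witnesses $\mathrm{PARITY}\in\mathrm{CP}$ with exponential size.

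There is essentially no hard part here; the statement is the ``easy direction'' (an explicit upper-bound construction), in contrast to the genuinely difficult lower bound that $\mathrm{PARITY}\notin\mathrm{Poly\text{-}CP}$ proved later via the $\mathsf{AC^0}$ argument. The only point requiring mild care is the reduct bookkeeping: one must make sure that when $x_n\in I$ the generator block does \emph{not} contain a rule $x_n\leftarrow not\ not\ x_n$ (it does not, by construction — the generator ranges only over $i\le n-1$), so that $x_n$ can enter $Cn(\Pi_n^I)$ only through a fired head-$x_n$ rule; and dually that when $I'$ has even parity but $I$ omits $x_n$, the unique applicable head-$x_n$ rule indeed fires, preventing $I$ from being closed. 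Checking these two boundary situations is what makes the equivalence an ``iff'' rather than a one-sided inclusion.
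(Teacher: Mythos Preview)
Your construction and verification are exactly the pattern the paper sketches just before the theorem (programs~(\ref{SNE:SNE1}) and~(\ref{equ:parity4-pattern})); the paper does not give a formal proof beyond that sketch, so you are simply filling in the details of the same argument. One small wording slip: after the reduct the head-$x_n$ bodies do not literally ``become $\top$ or $\bot$'' since positive literals $x_i$ survive unchanged---but your computation of $Cn(\Pi_n^I)$ is nonetheless correct, because those positive bodies are satisfied in the closure exactly when $I'$ matches $u$, which is what you use.
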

By PF we denote propositional formulas built on classical connectives $\{\wedge,\vee,\neg\}$ with boolean variables. Related concepts like \emph{satisfaction}, \emph{model} etc., are defined as usual. By $M(\phi)$ we denoted the set of models of $\phi$. The \emph{size $|\phi|$ of a formula $\phi$} is the number of connectives occur in it. PARITY$_n$ for $n=1,2$ can be represented by formulas $x_1$ and $(x_1\wedge\neg x_2)\vee(\neg x_2\wedge x_1)$. Furthermore, it is a textbook result that PARITY$_n$ for $n\geq 3$ has polynomial size formulas in PF, i.e., PARITY$\in\mathsf{NC^1/poly}$ (or Poly-PF) \cite{ccama,DBLP:books/daglib/0028687}.

\section{Boolean Circuits, Completion and PARITY$_n$ Programs for $n\leq 2$}\label{sec:boolean-completion}
\subsection{Boolean Circuits}
A \emph{(boolean) circuit}  is a directed, cycle-free graph where each node is either a \emph{gate} marked  with one of $\{\wedge,\vee,\neg\}$ or a boolean variable.  The \emph{in-degree} (resp. \emph{out-degree}) of a node is called its \emph{fan-in} (resp. \emph{fan-out}). A node marked with a variable always has fan-in 0 and is called an \emph{input}. The \emph{output} of the  circuit is one gate designated with fan-out 0.

 The \emph{value} of a circuit $C_n$ under inputs $x_1,\ldots,x_n$, denoted by $C_n(x_1,\ldots,x_n)$, is the value of the output  obtained from  an iterative calculation  through the inputs and the intermediate gates in the usual way. The \emph{size $|C_n|$ of a circuit} $C_n$ is the number of gates occur in it. The \emph{depth} of a circuit is the length of the longest path from an input to the output. We say a circuit \emph{computes} (or \emph{represents}) a problem $L\subseteq\{0,1\}^n$, if
$w\in L\Leftrightarrow C_n(w)=1.$
%\begin{comment}
%\begin{wrapfigure}{l}{4cm}
\begin{figure}%{l}{4cm}
\begin{center}
\includegraphics[scale=0.55]{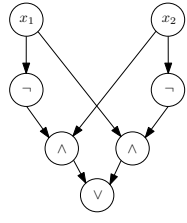}
\end{center}
\caption{A Parity$_2$ Circuit}\label{pic:parity2}
\end{figure}
E.g.,  a circuit $C_2$ that computes PARITY$_2$ is shown in Fig. \ref{pic:parity2}.
%\end{comment}
If  $L$ consists of strings of arbitrary lengths, then we introduce a sequence of circuits $\{C_n\} (n=1,2,\ldots$) to represent $L$, as indicated in Definition \ref{def:LPcomp}.

A circuit is said with \emph{bounded fan-in} if each gate has at most fain-in 2. If we do not have such restriction then the circuit is with \emph{unbounded fan-in}.
The \emph{class $\mathsf{AC^0}$} exactly contains all problems that can be computed  by a sequence of circuits $\{C_n\}$ in which the circuits  $C_n$ have  \emph{constant} depth and polynomial size $p(n)$.

 E.g., a sequence of polynomial size CNFs $\{\psi_n\}$ computes an $\mathsf{AC^0}$ language, in which a CNF is a \emph{conjunction} of \emph{clause} of the form
$(L_1\vee\ldots\vee L_m)$, where each $L_i$ is either a variable $x$ or a negated variable $\neg x$. Observe that  CNF has constant depth 2 ($\neg$ is usually not counted in the depth), and  each clause can be regarded as an unbounded fan-in gate $\vee$ with $m$ inputs. Note that $\{\psi_n\}$ cannot represent PARITY since PARITY$\notin\mathsf{AC^0}$.   For more details about circuits, please see \cite{ccama}.

\subsection{Completion and Related Theorems}
The \emph{completion} $Comp(\Pi)$ \cite{citeulike:411519,Erdem:2003:TLP:986809.986814} of  a canonical program $\Pi$,  consists of a set (or conjunction) of formulas\footnote{For convenience, we slightly abuse the connective $\equiv$ here.}:
\begin{itemize}
\item $x\equiv \tilde{B}_1\vee \tilde{B}_2\vee \cdots \vee \tilde{B}_m$, where $x\leftarrow B_1,\ldots,x\leftarrow B_m$ are all rules in $\Pi$ with head $x$, and each $\tilde{B}_i$ is
the conjunction of rule elements in $B_i$ with connective $not$ replaced by $\neg$,
\item $x\equiv\bot$, if $x$ is not a head of any rule in $\Pi$,
\item $\neg \tilde{B}$, if a rule $\bot\leftarrow B$ is in $\Pi$.
\end{itemize}
\begin{proposition}\label{prop:completion-poly-time}
Let $\Pi$ be an arbitrary canonical  program. Then $Comp(\Pi)$ is a   constant depth, unbounded fan-in circuit whose size  is polynomially bounded by $|\Pi|$.
\end{proposition}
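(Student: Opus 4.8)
The plan is to construct $Comp(\Pi)$ explicitly as a circuit and then bound its depth and its number of gates. First I would observe that, viewed as a set of formulas, $Comp(\Pi)$ is a conjunction of at most $n+|\Pi|$ formulas, where $n=|var(\Pi)|$: one equivalence ($x\equiv\bot$, or $x\equiv \tilde B_1\vee\cdots\vee\tilde B_{m_x}$ where $m_x$ is the number of rules of $\Pi$ with head $x$) for each atom $x$, and one formula $\neg\tilde B$ for each constraint rule $\bot\leftarrow B\in\Pi$. Accordingly, the circuit has a single unbounded fan-in $\wedge$ as its output gate, fed by the subcircuits of these formulas, so it suffices to realise each of them by a constant-depth subcircuit whose size is polynomially bounded.

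Next I would deal with the building blocks. After deleting any rule whose body contains $\bot$ (it never fires), dropping occurrences of $\top$, and rewriting $not\ not\ x$ as $x$, each $\tilde B_i$ is just a conjunction of at most $2+3n$ literals, i.e.\ an unbounded fan-in $\wedge$ over literals: a subcircuit of size $O(n)$ and depth $1$ (not counting the $\neg$ gates on inputs, following the convention used for CNFs in the text). By De Morgan, its negation $\neg\tilde B_i$ is the disjunction of the negated literals, again of size $O(n)$ and constant depth. Consequently each $x\equiv\bot$ is just the gate $\neg x$; each $\neg\tilde B$ is a single $\vee$ over $O(n)$ literals; and for a genuine equivalence I would use $x\equiv\phi \;\Leftrightarrow\; (\neg x\vee\phi)\wedge(x\vee\neg\phi)$ with $\phi=\tilde B_1\vee\cdots\vee\tilde B_{m_x}$, where $\phi$ is a depth-$2$ DNF of size $O(m_x n)$ and $\neg\phi=\bigwedge_i\neg\tilde B_i$ is a depth-$2$ CNF of the same size; combining these two pieces gives a constant-depth subcircuit (depth $\le 4$) of size $O(m_x n)$.

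Finally I would add everything up. The output $\wedge$ contributes one more level, so the whole circuit has constant depth (at most $5$, and smaller still under the convention that $\neg$ gates are not counted). For the size, summing the estimates above yields $O(n)$ for the trivial equivalences, $O(\sum_x m_x \cdot n)=O(|\Pi|\,n)$ for the genuine ones (since $\sum_x m_x$ is the number of non-constraint rules), and $O(|\Pi|\,n)$ for the constraint formulas, hence $O(|\Pi|\,n)$ gates in total; since a body contains at most $2+3n$ rule elements, this is polynomial.

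I expect the only subtle point to be turning the defining equivalences $x\equiv\phi$ into a fixed-depth circuit without the usual DNF-to-CNF blow-up; this is harmless here precisely because each disjunct $\tilde B_i$ is a plain conjunction of literals, so pushing the negation inward by De Morgan keeps both $\phi$ and $\neg\phi$ at depth $2$ and of size $O(m_x n)$. (I would also remark that ``size polynomially bounded by $|\Pi|$'' is to be read together with $n=|var(\Pi)|$, which is the parameter that actually governs body length; in the regime relevant to the main theorem — a sequence $\{\Pi_n\}$ of size polynomial in $n$ — the bound $O(|\Pi_n|\,n)$ is polynomial in $n$, as required.)
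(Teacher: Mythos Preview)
Your proposal is correct and follows essentially the same approach as the paper's own proof, which simply observes that $Comp(\Pi)$, being a propositional formula, is a fan-out-$1$ circuit whose depth is constant and whose size is ``clearly'' polynomially bounded. Your version is considerably more explicit---in particular you spell out how to realise the equivalences $x\equiv\phi$ at constant depth via De~Morgan, a point the paper leaves implicit---but the underlying idea is the same.
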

\begin{proof}
All propositional formulas are circuits of fan-out 1, so $Comp(\Pi)$ is definitely  a circuit. Clearly,  its size  is  polynomially bounded by $|\Pi|$, and its depth  is a constant for arbitrary program $\Pi$. Moreover,
 there are no restrictions on the number of rule elements in a body or the number of rules in $\Pi$, therefore the corresponding gates  in $Comp(\Pi)$ are with unbounded fan-in.
\end{proof}
It is well-known that every answer set of a canonical program $\Pi$ is a \emph{model} of $Comp(\Pi)$,  but the inverse is generally not hold. E.g., the completion of the program $x\leftarrow x$ has two models $\{x\}$ and $\emptyset$, while it has a unique answer set $\emptyset$. It turns out that  $x\leftarrow x$ gives rise to a so-called \emph{loop}, which leads to an inappropriate model.
 It is shown in \cite{Lin2004115,loop-for-disjunctive}  that the so-called \emph{loop formulas}  $LF(\Pi)$ nicely eliminate inappropriate models of $Comp(\Pi)$, s.t. the models of  the union (or conjunction) of $LF(\Pi)$ and $Comp(\Pi)$ are coincided with $Ans(\Pi)$.

%\begin{comment}% Seems not necessary to introduce the definition of a loop.
The \emph{(positive) dependency graph} \cite{Apt:1988:TTD:61352.61354}  of a canonical  program $\Pi$ is a pair $(N,E)$ in which the set of nodes  $N=var(\Pi)$, and $E$ contains a directed edge $(x,x')$ iff there is a rule  $H\leftarrow B$ in $\Pi$ s.t. $H=x$ and $x'\in B$. Note that rule elements of the form $not\ x'$ or $not\ not\ x'$ in $B$ do not contribute to the edges.
 A \emph{non-empty} set of variables $U\subseteq var(\Pi)$ is called a \emph{loop} of  $\Pi$, if i) $U$ is a \emph{singleton} $\{x\}$ and $(x,x)\in E$, or ii) $U$ is not a singleton and the restriction of the graph on $U$ is \emph{strongly connected}.

 Let $U$ be a loop of $\Pi$, define
$R^-(U,\Pi)=\{H\leftarrow B\in\Pi : H\in U,\ \neg\exists\mbox{ variable } x\in B\ \mbox{s.t.}\ x\in U \}.$
Let $\{B_1,\ldots,B_m\}$ be all the bodies of the rules in $R^-(U,\Pi)$,
then the \emph{loop formula} $LF(U,\Pi)$  is the following:
\begin{equation}\label{equ:LoopFormula}
\neg [\tilde{B}_{1}\vee\ldots\vee \tilde{B}_{m}]\supset \bigwedge_{x\in U}\neg x .
\end{equation}
$LF(\Pi)$ denotes the conjunction  of all loop formulas of $\Pi$.
%\end{comment}

\begin{theorem}[Lin-Zhao Theorem\cite{Lin2004115,loop-for-disjunctive}]\label{thm:lin-zhao-sne}
Let $\Pi$ be a canonical  program. Then   $\Pi$ is  equivalent to $Comp(\Pi)\cup LF(\Pi)$, i.e., $Ans(\Pi)=M(Comp(\Pi)\cup LF(\Pi))$.
\end{theorem}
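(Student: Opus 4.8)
The plan is to establish the two inclusions $Ans(\Pi)\subseteq M(Comp(\Pi)\cup LF(\Pi))$ and $M(Comp(\Pi)\cup LF(\Pi))\subseteq Ans(\Pi)$ separately. The whole argument is the canonical-program adaptation of the original Lin--Zhao proof; the only genuinely new point to verify is that rule elements of the form $not\ not\ x$ cause no trouble, and indeed they do not: in the reduct $\Pi^I$ such an element collapses to the constant $\top$ or $\bot$ and so never propagates, in $Comp(\Pi)$ it becomes $\neg\neg x$, and by definition it contributes no edge to the positive dependency graph — so throughout it plays exactly the passive role one needs, and the normal-program proof goes through essentially verbatim.

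For the first inclusion, fix $I\in Ans(\Pi)$, i.e. $I=Cn(\Pi^I)$. That $I\models Comp(\Pi)$ is routine: closure of $I$ under $\Pi$ (equivalently under $\Pi^I$) yields the right-to-left direction of each equivalence $x\equiv\tilde B_1\vee\cdots\vee\tilde B_m$ and the constraint clauses $\neg\tilde B$; and minimality of $I=Cn(\Pi^I)$ forces every $x\in I$ to be \emph{supported}, i.e. to have some surviving reduct rule $x\leftarrow var(B_i)$ with $var(B_i)\subseteq I$, which unwinds to $I\models\tilde B_i$ and hence the left-to-right direction. For the loop formulas, take any loop $U$ of $\Pi$ and assume $I\models\neg[\tilde B_1\vee\cdots\vee\tilde B_m]$ for the bodies $B_1,\dots,B_m$ of the rules in $R^-(U,\Pi)$; the goal is $U\cap I=\emptyset$. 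If not, put $I'=I\setminus U\subsetneq I$ and check that $I'$ is closed under $\Pi^I$: for a surviving rule $H\leftarrow var(B)$ with $var(B)\subseteq I'$, closure of $I$ gives $H\in I$, and if $H\in U$ then $var(B)\cap U=\emptyset$ places the original rule $H\leftarrow B$ in $R^-(U,\Pi)$ with $I\models\tilde B$, contradicting the assumption; hence $H\in I'$. But a closed proper subset of $Cn(\Pi^I)$ cannot exist, so $U\cap I=\emptyset$.

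For the converse, let $I\models Comp(\Pi)\cup LF(\Pi)$. Satisfaction of $Comp(\Pi)$ makes $I$ closed under $\Pi$, hence under $\Pi^I$, so $Cn(\Pi^I)\subseteq I$. Suppose $J:=Cn(\Pi^I)\subsetneq I$ and put $U:=I\setminus J\neq\emptyset$. Each $x\in U$ satisfies some disjunct $\tilde B_x$ of its completion equivalence (as $x\in I\models Comp(\Pi)$), so the reduct rule $x\leftarrow var(B_x)$ survives; since $J$ is closed under $\Pi^I$ but $x\notin J$, we must have $var(B_x)\not\subseteq J$, so some $y\in var(B_x)$ lies in $I\setminus J=U$ — that is, every node of $U$ has an outgoing edge inside $U$ in the positive dependency graph. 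Such a finite graph has a \emph{terminal} strongly connected component $L\subseteq U$, which is a loop of $\Pi$. Now no body of a rule in $R^-(L,\Pi)$ is satisfied by $I$: if $H\leftarrow B\in R^-(L,\Pi)$ with $H\in L$ and $I\models\tilde B$, then $var(B)\cap L=\emptyset$ by definition of $R^-$, and terminality of $L$ forbids edges from $L$ into $U\setminus L$, so in fact $var(B)\cap U=\emptyset$, whence $var(B)\subseteq I\setminus U=J$; but then the surviving reduct rule $H\leftarrow var(B)$ and closure of $J$ give $H\in J$, contradicting $H\in L\subseteq U$. Therefore $I\models\neg[\tilde B_1\vee\cdots\vee\tilde B_m]$ for $L$, so $LF(L,\Pi)$ forces $\bigwedge_{x\in L}\neg x$, contradicting $\emptyset\neq L\subseteq I$. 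Hence $J=I$ and $I\in Ans(\Pi)$.

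The main obstacle is the converse inclusion, and inside it the extraction of the \emph{right} loop: one may not take an arbitrary loop contained in $U=I\setminus Cn(\Pi^I)$, but must isolate a \emph{terminal} strongly connected component $L$ so that the ``external support'' set $R^-(L,\Pi)$ is genuinely external — this is precisely what permits invoking closure of $Cn(\Pi^I)$ to contradict $LF(L,\Pi)$. A secondary, cheap but here necessary point (since we work in CP rather than in normal programs) is the bookkeeping that $not\ not\ x$ never contributes a dependency edge and never survives the reduct as a propagating literal, which is what guarantees that the dependency-graph and support steps above remain valid word for word.
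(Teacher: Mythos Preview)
The paper does not give its own proof of this theorem; it is stated with citations to the original Lin--Zhao paper and its disjunctive extension and is used throughout as a known black box. Your argument is the standard Lin--Zhao proof correctly adapted to canonical programs: the bookkeeping that $not\ not\ x$ contributes neither a dependency edge nor a surviving reduct literal is precisely what is needed, and your terminal-SCC extraction is sound (the out-degree $\geq 1$ condition on $U$ ensures a sink component is a genuine loop even in the singleton case). So there is nothing to compare against, and your proof stands on its own as a faithful reconstruction of the cited result.
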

By Theorem \ref{thm:lin-zhao-sne} (or the (generalized) Fages theorem \cite{fages94,Erdem:2003:TLP:986809.986814,You:2003:EAS:1630659.1630783}), if $\Pi$ has no loops, then $LF(\Pi)$ is a tautology $\top$ and  $\Pi$ is  equivalent to $Comp(\Pi)$ (i.e.,  \emph{completion-equivalent}).
\subsection{PARITY$_n$ Programs for $n\leq 2$}
%As mentioned previously,  every PARITY program in CP can be simplified to be loop-free or completion-equivalent. We first   discuss some special cases.
\begin{proposition}\label{prop:cmp-loop-PARITY1}
Let $\Pi_1$ be a PARITY$_1$ canonical program. Then $\Pi_1$ is equivalent to $Comp(\Pi_1)$.
\end{proposition}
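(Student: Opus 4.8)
The plan is to show that any PARITY$_1$ canonical program $\Pi_1$ has no loops, so that Theorem~\ref{thm:lin-zhao-sne} (equivalently the generalized Fages theorem) immediately gives $Ans(\Pi_1)=M(Comp(\Pi_1))$. Since $var(\Pi_1)=\{x_1\}$, the positive dependency graph of $\Pi_1$ has the single node $x_1$, and the only possible loop is the singleton $\{x_1\}$, which is a loop precisely when $(x_1,x_1)\in E$, i.e.\ when $\Pi_1$ contains a rule $x_1\leftarrow B$ with $x_1\in B$ (as a positive rule element, since $not\ x_1$ and $not\ not\ x_1$ do not contribute edges).

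First I would argue that such a rule cannot occur in a program that actually computes PARITY$_1$. Recall PARITY$_1=\{1\}$, i.e.\ $Ans(\Pi_1)$ must be exactly $\{\{x_1\}\}$. Suppose toward a contradiction that $x_1\in B$ for some rule $x_1\leftarrow B\in\Pi_1$. Consider the candidate answer set $I=\{x_1\}$: the reduct $\Pi_1^{I}$ replaces every $not\ not\ x_1$ by $\top$ and every $not\ x_1$ by $\bot$, so the rule becomes $x_1\leftarrow B'$ where $B'$ still contains the variable $x_1$ (the $\top$/$\bot$ replacements only touch the negated elements). Every rule of $\Pi_1^{I}$ with head $x_1$ whose body mentions $x_1$ is useless for deriving $x_1$ in the least-model computation $Cn(\Pi_1^{I})$; so $x_1\in Cn(\Pi_1^{I})$ only if some \emph{other} rule $x_1\leftarrow C$ with $x_1\notin var(C)$ fires. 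But then I claim $\emptyset$ is also closed under $\Pi_1^{\emptyset}$ in a way that makes $\emptyset$ an answer set, contradicting $Ans(\Pi_1)=\{\{x_1\}\}$ — here I need to check the two reducts carefully and play the bodies off against each other. The cleanest formulation: since $\{x_1\}\in Ans(\Pi_1)$, the loop $\{x_1\}$ would force (by the loop formula~(\ref{equ:LoopFormula})) that $R^-(\{x_1\},\Pi_1)$ has a body satisfied by $\{x_1\}$, but one can show this same body also forces $\emptyset$ to fail to be an answer set only if it does \emph{not} depend on $x_1$ — and if every $x_1$-headed rule depends positively on $x_1$, then $\emptyset\in Ans(\Pi_1)$, contradiction.

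Actually the slicker route, which I would adopt, is: by Theorem~\ref{thm:lin-zhao-sne}, $Ans(\Pi_1)=M(Comp(\Pi_1)\cup LF(\Pi_1))$ always, and $M(Comp(\Pi_1))\supseteq Ans(\Pi_1)=\{\{x_1\}\}$. Since there are only two subsets of $\{x_1\}$, it suffices to rule out $\emptyset\in M(Comp(\Pi_1))$. The completion contains the single equivalence $x_1\equiv\tilde B_1\vee\cdots\vee\tilde B_m$ (or $x_1\equiv\bot$ if $x_1$ heads no rule, but that would make $\{x_1\}$ a non-model, impossible). Under the assignment $\emptyset$, each $\tilde B_i$ either contains a positive literal $x_1$ (hence is false at $\emptyset$), or contains $not\ not\ x_1$ rewritten to $\neg(\neg x_1)$... — one checks directly that if some $\tilde B_i$ is true at $\emptyset$ then the rule $x_1\leftarrow B_i$ fires in $\Pi_1^{\emptyset}$ giving $x_1\in Cn(\Pi_1^{\emptyset})$, so $\emptyset\notin Ans(\Pi_1)$ only via a loop; and conversely if every $\tilde B_i$ is false at $\emptyset$ then $\emptyset\models Comp(\Pi_1)$, and since $\emptyset$ satisfies $LF(\Pi_1)$ vacuously when $\{x_1\}$ is a loop with $R^-=\emptyset$, we'd get $\emptyset\in Ans(\Pi_1)$ — contradiction either way unless $\Pi_1$ is loop-free. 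I expect the main obstacle to be bookkeeping the interaction between the two reducts $\Pi_1^{\{x_1\}}$ and $\Pi_1^{\emptyset}$ and the loop-formula condition cleanly enough to avoid case explosion; the conceptual content is just that a one-variable program computing $\{1\}$ cannot afford a self-loop on $x_1$, because a self-loop is exactly what would let $\emptyset$ sneak back in as a model of the completion without being an answer set. Once loop-freeness is established, the conclusion is immediate from the remark following Theorem~\ref{thm:lin-zhao-sne}.
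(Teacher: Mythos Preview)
Your central claim---that a PARITY$_1$ program must be loop-free---is false. Take $\Pi_1=\{\,x_1\leftarrow,\ x_1\leftarrow x_1\,\}$: its unique answer set is $\{x_1\}$, so it computes PARITY$_1$, yet the second rule puts the edge $(x_1,x_1)$ into the dependency graph and $\{x_1\}$ is a loop. Both of your arguments aim at loop-freeness and therefore cannot succeed. In the first argument, the step ``then $\emptyset$ is also an answer set'' breaks exactly here: the auxiliary rule $x_1\leftarrow$ (with $x_1\notin var(B)$) forces $x_1\in Cn(\Pi_1^{\emptyset})$, so $\emptyset$ is \emph{not} an answer set and no contradiction arises. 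In the second argument you assume that if a loop exists then $R^-(\{x_1\},\Pi_1)=\emptyset$, but $R^-$ collects all $x_1$-headed rules whose body has no positive $x_1$; in the example above $R^-$ contains $x_1\leftarrow$, so the loop formula is not vacuous and your case split collapses.

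The paper does not attempt loop-freeness at all. It simply observes that $LF(\Pi_1)$ is a tautology: the only possible loop formula has the shape $\neg[\tilde B_1\vee\cdots\vee\tilde B_m]\supset\neg x_1$, which $\emptyset$ satisfies trivially (the consequent $\neg x_1$ holds), while $\{x_1\}$ satisfies it because $\{x_1\}\in Ans(\Pi_1)\subseteq M(Comp(\Pi_1)\cup LF(\Pi_1))$. With only two interpretations, $LF(\Pi_1)$ is therefore a tautology, and Lin--Zhao gives $Ans(\Pi_1)=M(Comp(\Pi_1))$. Your ``slicker route'' actually contains a correct alternative: showing directly that $\emptyset\notin M(Comp(\Pi_1))$ suffices (if $\emptyset$ satisfied the completion, every rule body would be false at $\emptyset$, whence $Cn(\Pi_1^{\emptyset})=\emptyset$ and $\emptyset\in Ans(\Pi_1)$, contradiction). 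But you abandon this line to chase loop-freeness again; had you stopped at $\emptyset\notin M(Comp(\Pi_1))$ you would have a valid, if slightly longer, proof.
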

\begin{proof}
By Theorem \ref{thm:lin-zhao-sne}, the unique answer set $\{x_1\}$ of $\Pi_1$ is  a  model of $Comp(\Pi_1)\cup LF(\Pi_1)$, which also is a model of $LF(\Pi_1)$. There are two cases about the loops in $\Pi_1$: (i) $\Pi_1$ has no loops. $LF(\Pi_1)$ is simply $\top$; (ii) $\Pi_1$ has a singleton loop  $\{x_1\}$. Recall that $LF(\Pi_1)$ is a formula of the form $\neg [\tilde{B}_{1}\vee\ldots\vee \tilde{B}_{m}] \supset \neg x_1$, in which $B_{1} \ldots B_{m}$ are all the bodies of  rules in $R^-(\{x_1\},\Pi_1)$. In both cases, $\emptyset$ is a model of $LF(\Pi_1)$, so $LF(\Pi_1)$ is a  tautology. Therefore,  $\Pi_1$ is equivalent to $Comp(\Pi_1)$.
\end{proof}

Observe that Proposition \ref{prop:cmp-loop-PARITY1} does not hold for PARITY$_2$ programs. Consider the following PARITY$_2$ program:
\begin{equation}\label{equ:non-standard-parity2-1-1}
\begin{array}{c}
x_1 \leftarrow   not\  x_2,\\
x_2 \leftarrow   not\  x_1,\\
\end{array}
 \begin{array}{c}
x_1  \leftarrow   x_1,\\
x_2 \leftarrow   x_2.
\end{array}
\end{equation}
Clearly,  $\{x_1,x_2\}$ (i.e., $11$) is not an answer set of  (\ref{equ:non-standard-parity2-1-1}), but a model of its completion  $\{x_1\equiv x_1\vee \neg x_2,\  x_2\equiv x_2\vee \neg x_1\}$.

Note that  the rules $\{x_1 \leftarrow   x_1, x_2 \leftarrow   x_2 \}$ contribute to so-called \emph{singleton loops}. We may check that without the above two rules, program (\ref{equ:non-standard-parity2-1-1}) is  a completion-equivalent PARITY$_2$ program. In fact,  such  ``singleton loop'' rules can be always safely removed, as stated in  Proposition \ref{prop:equiv-non-singleton}.

Let $\Pi$ be a basic program and $I$ be a set of variables, define the Knaster-Tarski operator \cite{Apt:1988:TTD:61352.61354} as
$T_\Pi(I)=\{H\ :\ H\leftarrow B \in \Pi \ \mbox{and}\ I\models B\}$.
The operator  $T$ is monotone w.r.t. $I$ therefore has a  \emph{least fixed point} $T_\Pi^\infty(\emptyset)$, which can be computed by:
(i) $T_\Pi^0(\emptyset)  =  \emptyset$; (ii) $T_\Pi^{i+1}(\emptyset) =  T_\Pi(T_\Pi^i(\emptyset))$ and (iii) $T_\Pi^\infty(\emptyset)  =  \bigcup_{i\geq 0}(T_\Pi^i(\emptyset))$.
Moreover, $T$ is also monotone w.r.t. $\Pi$ for a given  $I$, i.e., $T_\Pi(I)\subseteq T_{\Pi'}(I)$ if $\Pi\subseteq \Pi'$. It is pointed out in \cite{Gelfond88thestable,You:2003:EAS:1630659.1630783} that $I\in Ans(\Pi)$ iff $I=T_{\Pi^I}^\infty(\emptyset)$ for   a canonical program $\Pi$.

\begin{proposition}\label{prop:equiv-non-singleton}
Let $\Pi$ be a canonical program. Then removing each rule $x\leftarrow B\in \Pi$ with $x\in var(B)$ results in a program $\Pi'$ s.t. $Ans(\Pi)=Ans(\Pi')$.
\end{proposition}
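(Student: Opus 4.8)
The plan is to reduce the statement to a claim about \emph{basic} programs via the reduct characterization of answer sets, and then to settle the basic case by a least‑fixpoint induction using the operator $T$.

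First I would fix an arbitrary set of variables $I$ and compare the reducts $\Pi^I$ and $(\Pi')^I$. Because the reduct is computed rule by rule and only rewrites the $not$‑ and $not\ not$‑elements of a body into $\top$ or $\bot$ — never into a bare atom — we have $var(B^I)=var(B)$ for every body $B$, where $B^I$ denotes the reduced body. Hence a rule $x\leftarrow B$ satisfies $x\in var(B)$ iff its reduct $x\leftarrow B^I$ does, so the self‑loop rules of the basic program $\Pi^I$ are exactly the reducts of the self‑loop rules of $\Pi$, and therefore $(\Pi')^I$ is precisely the program obtained from $\Pi^I$ by deleting all of \emph{its} self‑loop rules. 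Since $I\in Ans(\Pi)$ iff $I=Cn(\Pi^I)$ and $I\in Ans(\Pi')$ iff $I=Cn((\Pi')^I)$, it suffices to prove: for every basic program $P$, deleting all rules $x\leftarrow B$ with $x\in var(B)$ produces a program $P'$ with $Cn(P)=Cn(P')$.

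For the basic case, since $P'\subseteq P$, monotonicity of $T$ in the program argument gives $Cn(P')=T_{P'}^\infty(\emptyset)\subseteq T_P^\infty(\emptyset)=Cn(P)$. For the converse I would show by induction on $i$ that $T_P^i(\emptyset)\subseteq Cn(P')$. The base case $i=0$ is trivial. For the step, let $H\in T_P^{i+1}(\emptyset)$, witnessed by a rule $H\leftarrow B\in P$ with $T_P^i(\emptyset)\models B$; note $\bot\notin B$, since otherwise $T_P^i(\emptyset)\nvDash B$. If $H\notin var(B)$ then $H\leftarrow B\in P'$; as $B$ then consists of atoms and $\top$ only and $T_P^i(\emptyset)\subseteq Cn(P')$ by the induction hypothesis, also $Cn(P')\models B$, so closure of $Cn(P')$ under this rule yields $H\in Cn(P')$. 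If instead $H\in var(B)$, then $H\in B$, so $T_P^i(\emptyset)\models B$ forces $H\in T_P^i(\emptyset)$, and the induction hypothesis again gives $H\in Cn(P')$. Taking the union over $i$ gives $Cn(P)\subseteq Cn(P')$, hence $Cn(P)=Cn(P')$.

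The only delicate point is the first step: one must verify that forming the reduct neither creates nor destroys self‑loop rules, so that the deletion really commutes with reduction; after that, the basic case is a routine fixpoint induction. A minor check is needed for rules with head $\bot$ and for bodies containing $\bot$, but such rules are never self‑loop rules (as $\bot\notin var(B)$) and are untouched by the deletion, so they cause no difficulty.
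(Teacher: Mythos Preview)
Your proof is correct and follows essentially the same approach as the paper: both establish $T_{\Pi^I}^\infty(\emptyset)=T_{\Pi'^I}^\infty(\emptyset)$ for every $I$ by a least-fixpoint induction, using monotonicity of $T$ in the program argument for one inclusion and an induction on the stage index for the other. Your presentation is slightly cleaner in that you explicitly isolate the commutation step (reduct formation preserves and reflects self-loop rules, so $(\Pi')^I$ equals $\Pi^I$ with its self-loops removed) before doing the induction purely at the level of basic programs, whereas the paper folds this observation into the induction implicitly; but the mathematical content is the same.
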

\begin{proof}
It is sufficient to show  $T_{\Pi^I}^\infty(\emptyset)=T_{\Pi'^I}^\infty(\emptyset)$ for any set $I$ of variables.
 Suppose $H\in T_{\Pi^I}^\infty(\emptyset)$ for some $I$, then  $\exists i>0$,  $H\in T_{\Pi^I}^i(\emptyset)$ and $H\notin T_{\Pi^I}^{i-1}(\emptyset)$. It is not hard to see that $H$ must be obtained from a rule  $H\leftarrow B$ in $\Pi$
 s.t. $H\notin var(B)$,  $H\leftarrow var(B)\in \Pi^I$ and  $T_{\Pi^I}^{i-1}(\emptyset)\models var(B)$. Note that $H\leftarrow B\in \Pi'$  and
 $H\leftarrow var(B)$ is in $\Pi'^I$ as well.
Now  we show  $H\in T_{\Pi'^I}^\infty(\emptyset)$. Suppose $H\in T_{\Pi^I}^1(\emptyset)$.
 So a rule $H\leftarrow $ is in $\Pi^I$ and $\Pi'^I$, clearly $H\in T_{\Pi'^I}^\infty(\emptyset)$.
Let $k>1$ and assume for all $i<k$, $T_{\Pi^I}^i(\emptyset)\subseteq T_{\Pi'^I}^\infty(\emptyset)$. Suppose $H'\in T_{\Pi^I}^k(\emptyset)$,
 then $\exists H'\leftarrow var(B')\in \Pi^I$ s.t. $T_{\Pi^I}^{k-1}(\emptyset)\models var(B')$.
  Obviously $H'\in T_{\Pi'^I}^\infty(\emptyset)$ since $H'\leftarrow var(B')\in \Pi'^I$ and $T_{\Pi'^I}^\infty(\emptyset)\models var(B')$ by induction hypothesis.
   Therefore $T_{\Pi^I}^\infty(\emptyset)\subseteq  T_{\Pi'^I}^\infty(\emptyset)$.
Note that  $\Pi'^I \subseteq \Pi^I$ since $\Pi'\subseteq \Pi$. It follows that $T_{\Pi'^I}^\infty(\emptyset)\subseteq T_{\Pi^I}^\infty(\emptyset)$ due to the monotonicity of operator $T$.
Hence $T_{\Pi^I}^\infty(\emptyset)=T_{\Pi'^I}^\infty(\emptyset)$.
\end{proof}

It turns out that we have a more general observation: deleting \emph{all rules} with variables in the body (thus removing all loops) does not affect the answer sets of a PARITY$_2$ program!

\begin{proposition}\label{prop:pseudo-loop-parity}
Let $\Pi_2$ be a PARITY$_2$ canonical program.  Then there is a PARITY$_2$ program $\Pi_2'$  which is equivalent to $Comp(\Pi_2')$ and  $|\Pi_2'|\leq |\Pi_2|$.
\end{proposition}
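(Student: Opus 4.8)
The plan is to follow the observation stated just before the proposition: delete from $\Pi_2$ every rule whose body mentions a variable---this certainly destroys all loops---and show the surviving program still computes PARITY$_2$. As a preprocessing step I would first invoke Proposition~\ref{prop:equiv-non-singleton} to delete all ``singleton-loop'' rules $x\leftarrow B$ with $x\in var(B)$; this neither changes $Ans(\Pi_2)$ nor increases the size, so I may assume $\Pi_2$ already contains no such rule. Now let $\Pi_2'$ be the set of rules of $\Pi_2$ with $var(B)=\emptyset$. Then $\Pi_2'\subseteq\Pi_2$, so $|\Pi_2'|\le|\Pi_2|$, and the positive dependency graph of $\Pi_2'$ has no edges, hence $\Pi_2'$ is loop-free; by Theorem~\ref{thm:lin-zhao-sne} (equivalently, the Fages theorem) $\Pi_2'$ is then equivalent to $Comp(\Pi_2')$. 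The remaining task is to check that $Ans(\Pi_2')=\{\{x_1\},\{x_2\}\}$.

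For the inclusion ``$\subseteq$'' I would exploit monotonicity of $T$ in its program argument. Since $\Pi_2'^{I}\subseteq\Pi_2^{I}$ for every $I$, we get $Cn(\Pi_2'^{I})=T_{\Pi_2'^{I}}^{\infty}(\emptyset)\subseteq T_{\Pi_2^{I}}^{\infty}(\emptyset)=Cn(\Pi_2^{I})$, so every $I\in Ans(\Pi_2')$ satisfies $I=Cn(\Pi_2'^{I})\subseteq Cn(\Pi_2^{I})$. If such an $I$ is neither $\{x_1\}$ nor $\{x_2\}$ then $I\in\{\emptyset,\{x_1,x_2\}\}$ and $I\notin Ans(\Pi_2)$, so $I\subsetneq Cn(\Pi_2^{I})$. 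The case $I=\{x_1,x_2\}$ is absurd because $Cn(\Pi_2^{I})\subseteq var(\Pi_2)=\{x_1,x_2\}$. For $I=\emptyset$, $Cn(\Pi_2^{\emptyset})\neq\emptyset$ forces some atom to appear already at the first iteration of $T_{\Pi_2^{\emptyset}}$, hence via a rule $H\leftarrow B\in\Pi_2$ whose reduct body (with respect to $\emptyset$) contains no variable and no $\bot$; in particular $var(B)=\emptyset$, so $H\leftarrow B\in\Pi_2'$ and its reduct fires at the first iteration of $T_{\Pi_2'^{\emptyset}}$ as well, contradicting $\emptyset=Cn(\Pi_2'^{\emptyset})$.

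For ``$\supseteq$'' it suffices to show $\{x_1\}\in Ans(\Pi_2')$, the case of $\{x_2\}$ being symmetric. Since $\{x_1\}\in Ans(\Pi_2)$, the atom $x_1$ is produced at some iteration of $T_{\Pi_2^{\{x_1\}}}$ by (the reduct of) a rule $H\leftarrow B\in\Pi_2$ with $H=x_1$, and the variables occurring in that reduct body lie in the set derived so far, which is contained in $Cn(\Pi_2^{\{x_1\}})=\{x_1\}$. Thus $var(B)\subseteq\{x_1\}$; but a rule with $x_1\in var(B)$ would be a singleton-loop rule, which we have removed, so $var(B)=\emptyset$ and $H\leftarrow B\in\Pi_2'$. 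This rule fires at the first iteration of $T_{\Pi_2'^{\{x_1\}}}$, so $x_1\in Cn(\Pi_2'^{\{x_1\}})$; combined with $Cn(\Pi_2'^{\{x_1\}})\subseteq Cn(\Pi_2^{\{x_1\}})=\{x_1\}$ this gives $Cn(\Pi_2'^{\{x_1\}})=\{x_1\}$, i.e.\ $\{x_1\}\in Ans(\Pi_2')$. Together with ``$\subseteq$'' this finishes the proof.

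The step I expect to be the real obstacle is the ``$\supseteq$'' direction: one must guarantee that throwing away all rules with a variable in the body does not also throw away the rule that \emph{witnesses} membership of $x_1$ (resp.\ $x_2$) in the answer set, and this is precisely where the preliminary removal of singleton loops via Proposition~\ref{prop:equiv-non-singleton} is essential (it forces $var(B)\subseteq\{x_1\}$ to collapse to $var(B)=\emptyset$). I would also record the following shorter, if less illuminating, route that bypasses the whole analysis: take $\Pi_2'=\{x_1\leftarrow not\ x_2,\ x_2\leftarrow not\ x_1\}$, which is loop-free, has size $2$, and computes PARITY$_2$ (one checks the four reducts directly); and every PARITY$_2$ program $\Pi_2$ must contain a rule with head $x_1$ and a rule with head $x_2$, since $x_1\in Cn(\Pi_2^{\{x_1\}})$ and $x_2\in Cn(\Pi_2^{\{x_2\}})$, so $|\Pi_2|\ge 2=|\Pi_2'|$.
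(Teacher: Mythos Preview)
Your main argument follows exactly the paper's construction: drop singleton loops via Proposition~\ref{prop:equiv-non-singleton}, then keep only the rules with $var(B)=\emptyset$. The ``$\supseteq$'' direction and the $I=\emptyset$ case of ``$\subseteq$'' are fine, and in fact you argue more carefully than the paper (which only verifies the inclusion $Ans(\Pi_2)\subseteq Ans(\Pi_2')$). But your treatment of the case $I=\{x_1,x_2\}$ breaks down: you assert $Cn(\Pi_2^{I})\subseteq var(\Pi_2)=\{x_1,x_2\}$, which ignores constraint rules $\bot\leftarrow B$. If such a rule has $var(B)\neq\emptyset$ it is discarded in $\Pi_2'$, yet it may be precisely what prevents $\{x_1,x_2\}$ from being an answer set of $\Pi_2$. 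Concretely, consider
\[
\Pi_2:\quad
\begin{array}{ll}
x_1\leftarrow not\ x_2, & x_1\leftarrow not\ not\ x_1,\,not\ not\ x_2,\\
x_2\leftarrow not\ x_1, & x_2\leftarrow not\ not\ x_1,\,not\ not\ x_2,\\
\multicolumn{2}{l}{\bot\leftarrow x_1,x_2.}
\end{array}
\]
This is a PARITY$_2$ program with no singleton loops; the constructed $\Pi_2'$ drops only the last rule, and then $\Pi_2'^{\{x_1,x_2\}}$ contains $x_1\leftarrow$ and $x_2\leftarrow$, so $\{x_1,x_2\}\in Ans(\Pi_2')$ and $\Pi_2'$ is \emph{not} a PARITY$_2$ program. (The paper's own proof, checking only one inclusion, overlooks the same phenomenon.) A simple repair of the construction is to first replace every constraint $\bot\leftarrow B$ by $\bot\leftarrow B'$, where each variable $x$ in $B$ is rewritten as $not\ not\ x$; this is a standard equivalence for constraints and empties $var(B)$ before the deletion step.

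Your ``shorter route'' at the end, by contrast, is completely correct and cleanly sidesteps the issue: the fixed loop-free program $\{x_1\leftarrow not\ x_2,\ x_2\leftarrow not\ x_1\}$ computes PARITY$_2$ and has size $2$, while any PARITY$_2$ program must contain at least one rule with head $x_1$ and one with head $x_2$ (since $x_1\in Cn(\Pi_2^{\{x_1\}})$ and $x_2\in Cn(\Pi_2^{\{x_2\}})$), hence has size $\geq 2$. This alternative actually proves the proposition; the paper does not take this route.
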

\begin{proof}
W.l.o.g., assume $\Pi_2$ has no singleton loops.  Let $\Pi_2'=\{H\leftarrow B\in\Pi_2:var(B)=\emptyset\}$,
 clearly $\Pi_2'\subseteq \Pi_2$ and thus $|\Pi_2'|\leq |\Pi_2|$. To see  $\Pi_2'$ is also a PARITY$_2$ program, it is sufficient to show for any $I\in Ans(\Pi_2)$, $Cn(\Pi_2^I)=Cn(\Pi_2'^I)$.
Suppose $H\in I$, i.e., $H\in Cn(\Pi_2^I)$. We claim that  $H$ must be obtained from a rule $H\leftarrow B$ in $\Pi_2$
s.t. (i) $I\models B$, and (ii)  $var(B)=\emptyset$. Clearly (i) holds. To see (ii), note that  $\Pi_2$ has exactly two answer sets $\{x_1\}$ and $\{x_2\}$. W.l.o.g., let $I=\{x_1\}$ thus  $H=x_1$.
Since $\Pi_2$ has no singleton loops, $x_1\notin var(B)$, and $x_2\notin var(B)$  since $I\models B$. Hence  $var(B)=\emptyset$.

Now it is easy to see $H\leftarrow B\in  \Pi_2'$ and $H\leftarrow \in \Pi_2'^I$ since $I\models B$ and $var(B)=\emptyset$.
Thus $H\in Cn(\Pi_2'^I)$. Therefore $Cn(\Pi_2^I)\subseteq Cn(\Pi_2'^I)$.
Since $\Pi_2'\subseteq \Pi_2$, we have  $Cn(\Pi_2'^I)\subseteq Cn(\Pi_2^I)$ due to the monotonicity of  operator $Cn(\cdot)$. Consequently, $Cn(\Pi_2^I)=Cn(\Pi_2'^I)$.
Observe that $\Pi_2'$ has no loops, so $\Pi_2'$ is equivalent to $Comp(\Pi_2')$ .
\end{proof}

Consider the following PARITY$_2$ program (\ref{equ:non-standard-parity2-5-1}), which has a non-singleton loop $\{x_1,x_2\}$ but not completion-equivalent. One may see that removing the two rules in the second line  makes it completion-equivalent, without affecting its answer sets.

\begin{equation}\label{equ:non-standard-parity2-5-1}
\begin{array}{l}
x_1 \leftarrow  not\ x_2,\\
x_1 \leftarrow   x_2, not\ not\ x_1,\\
\end{array}
 \begin{array}{l}
x_2 \leftarrow  not\ x_1,\\
x_2 \leftarrow   x_1, not\ not\ x_2.\\
\end{array}
\end{equation}

%However, the above observation  does not apply to PARITY$_3$ programs. E.g., we cannot delete the rule $x_3\leftarrow x_1, x_2$ in program (\ref{SNE:SNE1}), %otherwise it no longer represents PARITY$_3$.
\begin{comment} %2014-11-27
 So it is natural to ask whether removing rules related to loops is harmless to a PARITY program? Unfortunately, the answer is \emph{negative}. Think of the  PARITY$_3$ program below, which has several loops but none of the related rules can be deleted.
\begin{equation}\label{equ:Standard parity3-1}
\begin{array}{c}
x_1 \leftarrow  not\ not\ x_1,\\
x_2 \leftarrow  not\ not\ x_2,\\
x_3 \leftarrow  not\ not\ x_3,\\
\end{array}   \begin{array}{c}
x_3 \leftarrow   x_1, x_2,\\
x_1 \leftarrow   x_2, x_3,\\
x_2 \leftarrow   x_1, x_3,\\
\bot \leftarrow  not\ x_1, not\ x_2, not\ x_3.\\
\end{array}
\end{equation}
\end{comment}
In the following, we shall introduce a general approach to simply an arbitrary PARITY program to be completion-equivalent.

\section{General Simplification of PARITY$_n$  Programs} \label{sec:gen-simply-parity}
Let $B$ be a set of rule elements  built on  associated variables  $V=\{x_1,\ldots,x_n\}$.
 We say $B$ is \emph{consistent} if there is a set of variables $I$ s.t. $I\models B$. Define $S(B)$ to be the set  $\{I\subseteq V :I\models B \}$. E.g., let $V=\{x_1,x_2,x_3,x_4\}$ and $B=\{x_2, not\ x_3, not\ not\ x_4\}$, then $B$ is consistent and $S(B)=\{\{x_1,x_2,x_4\},\{x_2,x_4\}\}=\{1101,0101\}$. Clearly,
if $B$ is not consistent then $S(B)=\emptyset$. Note that if a rule has an inconsistent body, then it is redundant and  can be safely removed.

We say $B$ \emph{covers} a variable $x\in V$ iff $x\in B$ or $not\ x\in B$ or $not\ not\ x\in B$. If $B$ covers every variable in $V$ then
$B$ \emph{fully covers} $V$. E.g.,
$B=\{x_1, not\ x_2, not\ not\ x_3\}$ fully covers $V=\{x_1,x_2,x_3\}$. Obviously,  $B$ is consistent and fully covers $V$ iff
 $S(B)$ contains a unique string.

 In the next section, we stipulate that the set of associated variables is $var(\Pi_n)$ whenever  $\Pi_n$ is the program under discussion,  we also  assume that a PARITY program has no singleton-loops and contains no inconsistent bodies.

\subsection{Simplifying Full Coverage Rules}
A rule $H\leftarrow B\in\Pi_n$ is a \emph{full coverage rule} if $B$ fully covers $var(\Pi_n)$.
\begin{lemma}\label{lema:parity-even-string-notnot}
Let $\Pi_n$ be a PARITY$_n$ program. Suppose there is a rule $x\leftarrow B$ in $\Pi_n$ s.t.
$not\ not\ x\in B$ and $S(B)$ contains a unique even string. Then removing $x\leftarrow B$ from $\Pi_n$  results in a PARITY$_n$ program $\Pi_n'$.
\end{lemma}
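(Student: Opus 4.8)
The plan is to prove the lemma by establishing $Ans(\Pi_n') = Ans(\Pi_n)$ outright; since $Ans(\Pi_n)$ is exactly the set of odd strings of length $n$, this is the same as saying that $\Pi_n'$ computes PARITY$_n$. Write $r$ for the deleted rule $x\leftarrow B$, so $\Pi_n' = \Pi_n\setminus\{r\}$, and let $w$ be the unique string of $S(B)$, which is even by hypothesis; note that (by the standing assumptions together with the remark preceding the lemma) $B$ is consistent and fully covers $var(\Pi_n)$, so $w$ is the \emph{only} subset of $var(\Pi_n)$ that satisfies $B$. Throughout I would use the reduct characterisation $I\in Ans(\Pi)$ iff $I = Cn(\Pi^I) = T_{\Pi^I}^\infty(\emptyset)$, together with two elementary facts: $\Pi_n'^I = \Pi_n^I\setminus\{r^I\}$, where $r^I$ is the reduct of $r$, a rule of the form $x\leftarrow B^I$; and $I$ satisfies $B^I$ — equivalently, $B^I$ contains no $\bot$ and $var(B)\subseteq I$ — exactly when $I\models B$, i.e. exactly when $I = w$.

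First I would prove $Ans(\Pi_n)\subseteq Ans(\Pi_n')$. Let $I\in Ans(\Pi_n)$. Then $I$ is an odd string, hence $I\neq w$, hence $I\nvDash B$; therefore the body $B^I$ of $r^I$ either contains $\bot$ or is not contained in $I = Cn(\Pi_n^I)$. In either case $r^I$ is never applicable at any stage $T_{\Pi_n^I}^i(\emptyset)$ of the least-fixpoint computation, since that computation stays inside $I$; a straightforward induction on $i$ then gives $T_{\Pi_n^I}^i(\emptyset) = T_{\Pi_n'^I}^i(\emptyset)$ for all $i$, whence $Cn(\Pi_n'^I) = Cn(\Pi_n^I) = I$, i.e. $I\in Ans(\Pi_n')$.

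Then I would prove $Ans(\Pi_n')\subseteq Ans(\Pi_n)$. Let $I\in Ans(\Pi_n')$, so $I = Cn(\Pi_n'^I)$. Since $\Pi_n'^I\subseteq\Pi_n^I$, monotonicity of $Cn(\cdot)$ gives $I\subseteq Cn(\Pi_n^I)$, so it remains to show $Cn(\Pi_n^I)\subseteq I$, for which it suffices to verify that $I$ is closed under $\Pi_n^I$. As $I$ is already closed under $\Pi_n'^I$, the only new requirement is closure under $r^I = x\leftarrow B^I$: if $I\nvDash B$ then $I$ does not satisfy $B^I$ and the requirement is vacuous; if $I\models B$ then, since $not\ not\ x\in B$, we have $x\in I$, so the requirement holds. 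Hence $I = Cn(\Pi_n^I)$, i.e. $I\in Ans(\Pi_n)$, and combining the two inclusions finishes the proof.

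I expect the only point needing real care to be the ``deleting an inapplicable rule does not change $Cn$'' step in the first inclusion: one must check that a rule whose reduct body can never be fully derived — here because it contains $\bot$, or because $var(B)\not\subseteq I$ while the fixpoint stays inside $I$ — may be dropped without altering $T_{\Pi_n^I}^\infty(\emptyset)$, which is the short induction above together with the monotonicity of $T$ in the program already recorded in the text. It is worth noting how the two hypotheses divide the labour: the evenness of the unique string $w$ is precisely what forces every (odd) answer set $I$ of $\Pi_n$ to miss $B$, driving the first inclusion, whereas $not\ not\ x\in B$ is precisely what discharges the closure obligation in the second inclusion.
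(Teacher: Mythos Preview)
Your proof is correct and uses essentially the same ingredients as the paper's: monotonicity of $Cn(\cdot)$, the observation that $I\models B$ forces $x\in I$ via $not\ not\ x\in B$, and the incompatibility of an odd $I$ with the unique even string in $S(B)$. The only difference is organisational---the paper wraps both directions into a single contradiction argument showing $Cn(\Pi_n^I)\subseteq Cn(\Pi_n'^I)$, whereas you handle the two inclusions of answer sets separately with direct arguments---but the underlying reasoning is the same.
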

\begin{proof}
We show for any set $I$ of variables, $I=Cn(\Pi_n^I)$ iff $I=Cn(\Pi_n'^I)$.
Observe that  $\Pi_n'\subseteq \Pi_n$, then $Cn(\Pi_n'^I)\subseteq Cn(\Pi_n^I)$ for any $I$. So it is sufficient to show  $Cn(\Pi_n^I)\subseteq Cn(\Pi_n'^I)$.  Assume $Cn(\Pi_n^I)\nsubseteq Cn(\Pi_n'^I)$ for some $I$. It must be the case that $x\in Cn(\Pi_n^I)$ and
 $x\notin Cn(\Pi_n'^I)$ since $\Pi_n'\cup\{x\leftarrow B\}=\Pi_n$. Moreover, we have $x\leftarrow var(B)\in \Pi_n^I$ and $Cn(\Pi_n^I)\models var(B)$. The former implies that $I\models B\setminus var(B)$. Since $not\ not\ x\in B\setminus var(B)$, we have $I\models not\ not\ x$ (i.e., $x\in I$).
 
  %Note that $I\in S(B)$ and $I\models not\ not\ x$ (i.e., $x\in I$) since $not\ not\ x\in B$.
 %and the latter  means that $I\models var(B)$, therefore $I\models B$
Now suppose $I=Cn(\Pi_n^I)$, then $I$ is an odd string. However, recall that $I\models B\setminus var(B)$ and $I=Cn(\Pi_n^I)\models var(B)$. Hence we have $I\models B$, i.e., $I\in S(B)$. This contradicts the fact that $I\in S(B)$ is an  even string. So  $Cn(\Pi_n^I)\subseteq Cn(\Pi_n'^I)$.

Suppose $I=Cn(\Pi_n'^I)$. As mentioned above,  $Cn(\Pi_n^I)\nsubseteq Cn(\Pi_n'^I)$ implies that  $x\notin Cn(\Pi_n'^I)$ and $x\in I$. However,  recall that $I=Cn(\Pi_n'^I)$, hence $x\in Cn(\Pi_n'^I)$, a contradiction.  So  $Cn(\Pi_n^I)\subseteq Cn(\Pi_n'^I)$.
\end{proof}

Note that Lemma \ref{lema:parity-even-string-notnot} also justifies our simplification for (\ref{equ:non-standard-parity2-5-1}).

\begin{lemma}\label{lema:parity-odd-string-notnot}
Let $\Pi_n$ be a PARITY$_n$ program. Suppose there is a rule $x\leftarrow B$ in $\Pi_n$ s.t.
$not\ not\ x\in B$ and $S(B)$ contains a unique odd string. Then replacing its body $B$  with  $B'=B\setminus\{not\ not\ x\}$
  results in a PARITY$_n$ program $\Pi_n'$.
\end{lemma}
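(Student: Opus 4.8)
The plan is to show $Ans(\Pi_n')=Ans(\Pi_n)$; since $\Pi_n$ is a PARITY$_n$ program, this immediately gives that $\Pi_n'$ is one too. Write $\Pi_0=\Pi_n\setminus\{x\leftarrow B\}$, so $\Pi_n=\Pi_0\cup\{x\leftarrow B\}$ and $\Pi_n'=\Pi_0\cup\{x\leftarrow B'\}$ with $B'=B\setminus\{not\ not\ x\}$. The first step is to record the combinatorics of $B$ and $B'$. Since $S(B)$ is a singleton, $B$ is consistent and fully covers $var(\Pi_n)$; since a PARITY program has no singleton loops, $x\notin var(B)$; and since $B$ is consistent while $not\ not\ x\in B$ (so $not\ x\notin B$), the element $not\ not\ x$ is the \emph{unique} element of $B$ that covers $x$. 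Hence $B'$ is still consistent, $var(B')=var(B)$, $x\notin var(B')$, and $B'$ covers exactly $var(\Pi_n)\setminus\{x\}$, so $S(B')=\{I_0,\ I_0\setminus\{x\}\}$ where $\{I_0\}=S(B)$. Put $J_0=I_0\setminus\{x\}$; since $not\ not\ x\in B$ forces $x\in I_0$ and $I_0$ is odd by hypothesis, $J_0$ is \emph{even}. This parity gap between $I_0$ and $J_0$ is the heart of the matter.

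Next I would compare the reducts $\Pi_n^I$ and $\Pi_n'^I$ for an arbitrary set $I$ of variables; they agree on the $\Pi_0$ part, so only the reduct of $x\leftarrow B$ versus $x\leftarrow B'$ matters. From the definition of the reduct, the reduct of $x\leftarrow B$ is the rule $x\leftarrow var(B)$ precisely when $I\models B'\setminus var(B')$ and $x\in I$, and otherwise it is a rule with $\bot$ in its body (which contributes nothing to $Cn$); the reduct of $x\leftarrow B'$ is $x\leftarrow var(B)$ precisely when $I\models B'\setminus var(B')$, and otherwise has $\bot$ in its body. Therefore, if $x\in I$, or if $I\not\models B'\setminus var(B')$, the two reducts coincide up to rules with $\bot$ in the body, so $Cn(\Pi_n^I)=Cn(\Pi_n'^I)$ and $I$ is an answer set of one iff of the other. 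The only remaining case is $x\notin I$ together with $I\models B'\setminus var(B')$; there $Cn(\Pi_n^I)=Cn(\Pi_0^I)$ while $Cn(\Pi_n'^I)=Cn(\Pi_0^I\cup\{x\leftarrow var(B)\})$, i.e., $\Pi_n'$ carries one genuinely extra fact-deriving rule.

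This last case is the main obstacle, and I would handle it with the operator $T_\Pi$. A routine induction on the stages $T_\Pi^k(\emptyset)$ shows: if $var(B)\not\subseteq Cn(\Pi_0^I)$ then the rule $x\leftarrow var(B)$ never fires in the computation of $Cn(\Pi_0^I\cup\{x\leftarrow var(B)\})$, so $Cn(\Pi_n'^I)=Cn(\Pi_0^I)=Cn(\Pi_n^I)$; while if $var(B)\subseteq Cn(\Pi_0^I)$ then $x\in Cn(\Pi_n'^I)$. Now the two directions. Suppose $I\in Ans(\Pi_n)$; then $I=Cn(\Pi_0^I)$ and, because $\Pi_n$ computes PARITY$_n$, $I$ is odd. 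If $var(B)\subseteq I$ then $I\models var(B)=var(B')$, so (with $I\models B'\setminus var(B')$) $I\models B'$, i.e., $I\in S(B')=\{I_0,J_0\}$; since $x\notin I$ this forces $I=J_0$, which is even --- a contradiction. Hence $var(B)\not\subseteq Cn(\Pi_0^I)$, so $Cn(\Pi_n'^I)=Cn(\Pi_0^I)=I$ and $I\in Ans(\Pi_n')$. Conversely, suppose $I\in Ans(\Pi_n')$; since $x\notin I=Cn(\Pi_n'^I)$ we cannot have $x\in Cn(\Pi_n'^I)$, so $var(B)\not\subseteq Cn(\Pi_0^I)$, whence $Cn(\Pi_n'^I)=Cn(\Pi_0^I)=Cn(\Pi_n^I)=I$ and $I\in Ans(\Pi_n)$. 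Collecting the cases yields $Ans(\Pi_n')=Ans(\Pi_n)$, as desired. It is worth contrasting Lemma \ref{lema:parity-even-string-notnot}: there the unique string of $S(B)$ is even and the rule may simply be deleted, whereas here it is odd, so the rule must be retained in the weakened form $x\leftarrow B'$ --- weak enough that firing it forces $x$ into $Cn$ (so the even string $J_0$ is never spuriously stabilized), yet strong enough to keep the odd answer set $I_0$.
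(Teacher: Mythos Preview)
Your proof is correct and follows essentially the same approach as the paper's: both reduce to comparing $Cn(\Pi_n^I)$ with $Cn(\Pi_n'^I)$, and both pivot on the one nontrivial case $x\notin I$ and $I\models B'\setminus var(B')$, using the parity of $I$ together with $S(B')=\{I_0,J_0\}$ for the forward direction and the fact that $x\notin I$ blocks the extra rule for the backward direction. Your organization is a bit cleaner---you first dispose of the cases where the two reducts agree and then treat the single remaining case---whereas the paper runs a more granular case split on $I\models B\setminus var(B)$ versus $I\models B'\setminus var(B')$ in each direction, but the substance is the same.
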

\begin{proof}
We show that $I=Cn(\Pi_n^I)$ iff $I=Cn(\Pi_n'^I)$ for any set $I$ of variables.
 Suppose $I=Cn(\Pi_n^I)$, we shall prove $Cn(\Pi_n^I)=Cn(\Pi_n'^I)$. Consider the following cases:
 \begin{itemize}
 \item $I\models B\setminus var(B)$. Since $B'=B\setminus\{not\ not\ x\}$, clearly, $B'\setminus var(B')\subseteq B\setminus var(B)$, hence $I\models B'\setminus var(B')$. It follows that
  $x\leftarrow var(B)\in \Pi_n^I$ and $x\leftarrow var(B')\in \Pi_n'^I$.
 Furthermore, note that $\Pi_n\setminus\{x\leftarrow B\}=\Pi_n'\setminus\{x\leftarrow B'\}$ and $var(B)=var(B')$, thus  $\Pi_n^I=\Pi_n'^I$. So $Cn(\Pi_n^I)=Cn(\Pi_n'^I)$.
\item $I\nvDash B\setminus var(B)$. Consider the following subcases:
\begin{itemize}
\item $I\nvDash B'\setminus var(B')$. Similarly, we have $\Pi_n'^I=\Pi_n^I$, thus $Cn(\Pi_n'^I)=Cn(\Pi_n^I)$.
\item $I\models B'\setminus var(B')$. Clearly, in this case  $I\nvDash not\ not\ x$.
Now suppose $I\models var(B')$, so we have $I\models B'$. Recall that (i) $not\ not\ x\in B$, (ii) $x\notin B$ since $\Pi_n$ has no singleton loops, (iii) $B'=B\setminus\{not\ not\ x\}$
 and (iv) $S(B)$ contains a unique odd string, say $I'$.
It follows that  $S(B')=\{I', I'\setminus\{x\}\}$. Obviously $I$ must be $I'\setminus\{x\}$ since $I'\models not\ not\ x$. However, this is a contradiction since   $I'\setminus\{x\}$
 is an even string and  $I$  is an odd string since $I$ is an answer set of $\Pi_n$.
So suppose $I\nvDash var(B)$. Note that in this case $\Pi_n'^I=\Pi_n^I\cup\{x\leftarrow var(B)\}$, we show  $Cn(\Pi_n^I)=Cn(\Pi_n^I\cup\{x\leftarrow var(B)\})$, i.e.,  $Cn(\Pi_n^I)=Cn(\Pi_n'^I)$.
Firstly, $Cn(\Pi_n^I)\subseteq Cn(\Pi_n^I\cup\{x\leftarrow var(B)\})$ due to the monotonicity of operator
$Cn(\cdot)$. Assume $Cn(\Pi_n^I\cup\{x\leftarrow var(B)\})\nsubseteq Cn(\Pi_n^I)$, it must be  $Cn(\Pi_n^I)\models var(B)$ and $x\in Cn(\Pi_n^I\cup\{x\leftarrow var(B)\})$,
$x\notin Cn(\Pi_n^I)$. However this is impossible since $I=Cn(\Pi_n^I)$ and  $I\nvDash var(B)$. Therefore $Cn(\Pi_n^I\cup\{x\leftarrow var(B)\})\subseteq Cn(\Pi_n^I)$.
\end{itemize}

\end{itemize}

Suppose $I=Cn(\Pi_n'^I)$, we shall prove $Cn(\Pi_n^I)=Cn(\Pi_n'^I)$. Consider the following cases:
\begin{itemize}
\item $x\leftarrow var(B')\notin \Pi_n'^I$. Clearly, $I\nvDash B'\setminus var(B')$. So $I\nvDash B\setminus var(B)$ since $B'\setminus var(B')\subseteq B\setminus var(B)$. Therefore
$x\leftarrow var(B)\notin \Pi_n^I$ and we have $\Pi_n'^I=\Pi_n^I$. Hence $Cn(\Pi_n'^I)=Cn(\Pi_n^I)$.
\item $x\leftarrow var(B')\in \Pi_n'^I$. There are two subcases:
\begin{itemize}
\item $x\leftarrow var(B)\in \Pi_n^I$. Similarly, $\Pi_n'^I=\Pi_n^I$ and then $Cn(\Pi_n'^I)=Cn(\Pi_n^I)$.
\item $x\leftarrow var(B)\notin \Pi_n^I$. Clearly, $I\nvDash not\ not\ x$, i.e., $x\notin I$. Furthermore, recall that $\Pi_n'^I=\Pi_n^I\cup\{x\leftarrow var(B')\}$,
We shall show $Cn(\Pi_n'^I)=Cn(\Pi_n^I)$. Obviously $Cn(\Pi_n^I)\subseteq Cn(\Pi_n'^I)$.
Now assume $Cn(\Pi_n'^I)\nsubseteq Cn(\Pi_n^I)$. It must be the case that $x\in Cn(\Pi_n'^I)$ but $x\notin Cn(\Pi_n^I)$. However, since $I=Cn(\Pi_n'^I)$, we have $x\in I$, a contradiction.
So $Cn(\Pi_n'^I)\subseteq Cn(\Pi_n^I)$, therefore $Cn(\Pi_n'^I)=Cn(\Pi_n^I)$.
\end{itemize}
\end{itemize}
\end{proof}

\begin{comment}
E.g., consider the PARITY$_3$ program (\ref{equ:Standard parity3-fully-cover}),  we may remove $not\ not\ x_3$ from the last full coverage rule without affecting its answer sets.
\begin{equation}\label{equ:Standard parity3-fully-cover}
\begin{array}{l}
x_1 \leftarrow  not\ x_2, not\ x_3,\ \  x_1 \leftarrow   not\ not\ x_2,  not\ not\ x_3,\\
x_2 \leftarrow  not\ x_1, not\ x_3,\ \ x_2  \leftarrow  not\ not\ x_1, not\ not\ x_3, \\
x_3 \leftarrow  not\ x_1, not\ x_2,\ \
 \\
x_3  \leftarrow  not\ not\ x_1, not\  not\ x_2, not\ not\ x_3.\\
\end{array}
\end{equation}
\end{comment}

\subsection{Standard PARITY$_n$ Programs}
A  PARITY$_n$ program $\Pi_n$ is \emph{standard} if for each rule $x\leftarrow B\in \Pi_n$, $not\ not\ x\notin B$  whenever $S(B\cup\{x\})$ contains a unique string.   E.g., the PARITY program (\ref{SNE:SNE1}) is standard, while (\ref{equ:non-standard-parity2-5-1}) is not. Note that if $\Pi_n$ is standard, then for any rule $x\leftarrow B \in\Pi_n$, $B$ does not cover $x$, i.e., $x\notin B$, $not\ x\notin B$ and $not\ not\ x\notin B$, since $\Pi_n$ has no singleton loops and $S(B\cup\{x\})$ is consistent.

\begin{proposition}\label{prop:standard-program}
Let $\Pi_n$ be a PARITY$_n$ program. Then there is a standard PARITY$_n$ program $\Pi'_n$ s.t. $|\Pi_n'|\leq |\Pi_n|$.
\end{proposition}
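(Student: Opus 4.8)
The plan is to eliminate, one rule at a time, every rule that witnesses the failure of the standard property, invoking Lemma~\ref{lema:parity-even-string-notnot} and Lemma~\ref{lema:parity-odd-string-notnot}, and to organize this as an induction on the number of such rules. Call a rule $x\leftarrow B\in\Pi_n$ \emph{bad} if $not\ not\ x\in B$ and $S(B\cup\{x\})$ contains a unique string; by definition $\Pi_n$ is standard iff it has no bad rules. For a bad rule, the element $not\ not\ x\in B$ already forces $x$ into every member of $S(B)$, so $S(B)=S(B\cup\{x\})$ is a single string $I$ (and in fact $B$ is consistent and fully covers $var(\Pi_n)$). This string $I$ is either even or odd: if $I$ is even, Lemma~\ref{lema:parity-even-string-notnot} says deleting $x\leftarrow B$ yields a PARITY$_n$ program, with the size strictly decreasing; if $I$ is odd, Lemma~\ref{lema:parity-odd-string-notnot} says replacing $B$ by $B'=B\setminus\{not\ not\ x\}$ yields a PARITY$_n$ program of the same size.

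First I would check that each of these two surgeries reduces the number of bad rules by at least one while preserving the standing assumptions (no singleton loops, no inconsistent bodies). Deletion is immediate: it removes one bad rule and leaves every other rule, as well as $var(\Pi_n)$, untouched. For the body replacement, note that whether a rule $y\leftarrow C$ is bad depends only on $C$, $y$ and $var(\Pi_n)$; the set $var(\Pi_n)$ is unchanged and every rule other than $x\leftarrow B$ is unchanged, so no new bad rule appears, and the modified rule $x\leftarrow B'$ has $not\ not\ x\notin B'$ and hence is not bad. Moreover $x\notin var(B)$ since $\Pi_n$ has no singleton loops, and $var(B')\subseteq var(B)$, so $x\notin var(B')$ and no singleton loop is created; and $B'\subseteq B$ is still consistent, so no inconsistent body is created.

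Given this, the argument is a straightforward induction on the number $k$ of bad rules of $\Pi_n$: if $k=0$ then $\Pi_n$ is already standard and we put $\Pi_n'=\Pi_n$; if $k>0$, pick a bad rule, apply the appropriate lemma above to obtain a PARITY$_n$ program (still with no singleton loops and no inconsistent bodies) of size $\leq|\Pi_n|$ and with strictly fewer bad rules, and invoke the induction hypothesis. The main obstacle is exactly the bookkeeping of the previous paragraph: verifying that the two surgeries of Lemmas~\ref{lema:parity-even-string-notnot} and \ref{lema:parity-odd-string-notnot} really do decrease $k$ and keep us inside the class of PARITY$_n$ programs satisfying our standing assumptions. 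The substantive content—that these surgeries preserve the answer sets, hence the PARITY$_n$ property—has already been discharged in those two lemmas, so nothing further is needed there.
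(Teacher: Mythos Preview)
Your proof is correct and takes essentially the same approach as the paper's: process each offending rule via Lemma~\ref{lema:parity-even-string-notnot} (delete it if $S(B)$ is a single even string) or Lemma~\ref{lema:parity-odd-string-notnot} (drop $not\ not\ x$ from $B$ if $S(B)$ is a single odd string). The paper applies this procedure to all such rules at once rather than by induction on their number, but the content is identical; your explicit verification that neither surgery creates new bad rules or violates the standing assumptions is more careful than the paper's one-line argument.
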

\begin{proof}
For each rule $x\leftarrow B\in \Pi_n$ in which $not\ not\ x\in B$ and $S(B)$ contains a unique string: (i) Delete $x\leftarrow B$ from $\Pi_n$ if $S(B)$ contains an even string; (ii) Remove $not\ not\ x$ from $B$ if $S(B)$ contains an odd string.
By  Lemma \ref{lema:parity-even-string-notnot} and \ref{lema:parity-odd-string-notnot}, the above procedure results in
a standard PARITY$_n$ program $\Pi_n'$ and  $|\Pi_n'|\leq |\Pi_n|$.
\end{proof}

\begin{proposition}\label{prop:standard-comp-equiv}
Let $\Pi_n$ be a standard PARITY$_n$ program. Then  $\Pi_n$ is equivalent to its completion $Comp(\Pi_n)$.
\end{proposition}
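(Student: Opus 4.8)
The plan is to feed a hypothetical spurious model of the completion through the Lin--Zhao Theorem and then eliminate it in two stages: first I would locate a ``bad'' loop \emph{inside} that model, then use standardness together with the fact that $\Pi_n$ genuinely computes parity to force a contradiction. Concretely: by Theorem~\ref{thm:lin-zhao-sne}, $Ans(\Pi_n)=M(Comp(\Pi_n)\cup LF(\Pi_n))$, and every answer set is already a model of $Comp(\Pi_n)$; moreover $Ans(\Pi_n)$ consists exactly of the odd strings of length $n$. So it suffices to show $Comp(\Pi_n)$ has no even-string model. Suppose $I\models Comp(\Pi_n)$ with $I$ even; then $I\notin Ans(\Pi_n)$, hence $I\not\models LF(\Pi_n)$, and I fix a $\subseteq$-minimal loop $V$ with $I\not\models LF(V,\Pi_n)$. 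Reading off~(\ref{equ:LoopFormula}), this means $V\cap I\neq\emptyset$ and every rule in $R^-(V,\Pi_n)$ has a body not satisfied by $I$.

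\textbf{Step 1 ($V\subseteq I$).} Look at the subgraph $G$ of the positive dependency graph induced on $V\cap I$. Every $z\in V\cap I$ has an outgoing edge in $G$: since $z\in I\models Comp(\Pi_n)$ there is a rule $z\leftarrow B$ with $I\models B$; it is not in $R^-(V,\Pi_n)$ (its body is satisfied), so $var(B)\cap V\neq\emptyset$, and any $w\in var(B)\cap V$ lies in $I$ and yields an edge $z\to w$ of $G$. Hence $G$ has a sink strongly connected component $W$; as $\Pi_n$ has no singleton loops, $W$ has at least two nodes, so $W$ is itself a loop of $\Pi_n$ with $W\subseteq V\cap I$. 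If $V\cap I\subsetneq V$, then $W\subsetneq V$, so minimality of $V$ gives $I\models LF(W,\Pi_n)$; since $\emptyset\neq W\subseteq I$ its consequent is false, so some rule $z\leftarrow B\in R^-(W,\Pi_n)$ satisfies $I\models B$. Because $W$ is a sink component of $G$, no variable of $V\setminus W$ occurs positively in $B$ (such a variable would lie in $V\cap I$ and create an edge leaving $W$), hence $var(B)\cap V=\emptyset$, i.e.\ $z\leftarrow B\in R^-(V,\Pi_n)$ --- contradicting that $R^-(V,\Pi_n)$-rules are unsatisfied by $I$. Therefore $V\subseteq I$.

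\textbf{Step 2 (the crux).} The key auxiliary fact is: for every variable $z$ with $z\in I$, each rule $z\leftarrow B$ of $\Pi_n$ with $I\models B$ has $not\ not\ z\in B$. Indeed, otherwise $B$ does not cover $z$ (no singleton loops gives $z\notin var(B)$; $z\in I\models B$ gives $not\ z\notin B$), so one can pick an \emph{odd} string $J^*$ with $J^*\models B$ and $z\notin J^*$: it is forced to be $I\setminus\{z\}$ (odd, as $I$ is even) when $B$ covers all of $var(\Pi_n)\setminus\{z\}$, and otherwise the $z$-free strings of $S(B)$ form a subcube of dimension $\geq 1$ and so include an odd one. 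Being odd, $J^*$ is an answer set, $J^*=Cn(\Pi_n^{J^*})$; but $z\leftarrow var(B)\in\Pi_n^{J^*}$ and $var(B)\subseteq J^*$, so this rule fires and forces $z\in J^*$, a contradiction. Applying this to $z\in V$: the rule supporting $z$ in $I$ is $not\ not\ z$-guarded, and since it is not in $R^-(V,\Pi_n)$ it also carries a positive variable of $V$. What remains is to turn ``every atom of $V$ is supported in $I$ only through $not\ not$-guarded, internal rules'' into an outright contradiction; here I would also exploit that $\{z\}$ is an odd string, hence an answer set whose only derivation of $z$ uses a rule $z\leftarrow B^{0}$ with $var(B^{0})=\emptyset$ and $\{z\}\models B^{0}$ (so $z\leftarrow B^{0}\in R^-(V,\Pi_n)$, whence $I\not\models B^{0}$, which pins down a $w\neq z$ with $not\ w\in B^{0}$ and $w\in I$), and then transport support across a suitably chosen odd answer set that agrees with $I$ off $V$, so as to produce a rule of $R^-(V,\Pi_n)$ whose body is satisfied by $I$.

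\textbf{Expected obstacle.} The delicate point is exactly that last combination in Step~2. The naive attack --- exhibit an answer set in whose bottom-up computation a spurious rule ``misfires'' --- is blocked precisely by the $not\ not\ z$ guard that standardness leaves on the internal rules, so the contradiction cannot come from $Comp(\Pi_n)$ alone; it must be routed through an odd answer set close to $I$. Matching the parity of that answer set to $I$'s (even) while keeping it equal to $I$ outside $V$ --- so that the transported body stays satisfied by $I$ --- is the technical heart of the argument, and this is the step that genuinely uses that $\Pi_n$ computes parity, not merely that it is a standard canonical program.
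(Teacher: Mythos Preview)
Your approach is genuinely different from the paper's, and Step~1 together with the ``key auxiliary fact'' in Step~2 are correct and nicely argued. However, the proof is not complete --- and, crucially, it never invokes standardness at all. Your Step~1 uses only the loop-formula machinery and the absence of singleton loops; your auxiliary fact (every supporting rule of $z\in I$ carries $not\ not\ z$) uses only that odd strings are answer sets and that $I$ is even. Neither step touches the definition of ``standard''. Since Proposition~\ref{prop:standard-comp-equiv} fails without standardness (program~(\ref{equ:non-standard-parity2-5-1}) is a PARITY$_2$ program whose completion has the even model $\{x_1,x_2\}$, and your Step~1 and auxiliary fact both go through for it), the real work must happen in the part you left open. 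Your ``transport'' sketch does not yet say how standardness enters; the only concrete consequence of standardness available to you is that whenever $not\ not\ z\in B$, the body $B$ does \emph{not} fully cover $var(\Pi_n)$ (contrapositive of the definition), and it is not clear how this alone lets you manufacture a rule of $R^-(V,\Pi_n)$ whose body $I$ satisfies.

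The paper avoids this semantic analysis entirely. Its route is syntactic: replace every positive body atom $x$ by $not\ not\ x$, obtaining a loop-free program $\Pi_n'$ with $Comp(\Pi_n')=Comp(\Pi_n)$ (trivially, since completion reads $not$ as $\neg$). The substance lies in showing $Ans(\Pi_n')=Ans(\Pi_n)$, done in two stages via Lemmas~\ref{prop:fully-coverage-simplify-no-notnot}--\ref{lemma:body-coverage} and Propositions~\ref{prop:almost-pure-program}--\ref{prop:pure-program}. Lemma~\ref{lemma:body-coverage}(i) is essentially your auxiliary fact, but the paper additionally uses Lemma~\ref{prop:fully-coverage-simplify-no-notnot} (for standard programs, a full-coverage rule's unique string is odd) --- this is where standardness does its work, and your argument has no counterpart to it. If you want to push your direct approach through, that lemma is the missing ingredient you should try to exploit at the point where you ``transport support across a suitably chosen odd answer set''.
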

The proof idea of  Proposition \ref{prop:standard-comp-equiv} is that every standard PARITY$_n$ program $\Pi_n$ can be equivalently rewritten  to a loop-free program  $\Pi_n'$ by replacing each $x\in var(B)$ with $not\ not\ x$ for every  rule body $B$ in $\Pi_n$. By the Lin-Zhao Theorem  or the (generalized) \emph{Fages Theorem}, $\Pi_n'$ is equivalent to its completion $Comp(\Pi_n')$. And then the proposition follows from the fact that $Comp(\Pi_n')=Comp(\Pi_n)$, since $not$ is treated as classical negation $\neg$ in the completion.  The detailed proof is presented in subsection \ref{subsec-detailed-proof}.

\subsection{Proof of Proposition \ref{prop:standard-comp-equiv}}\label{subsec-detailed-proof}
For technical reasons, we divide the rewriting procedure into two steps, in the first step a standard PARITY program is converted to so-called \emph{almost pure} program and in the second step the program is converted to a  \emph{pure} one, i.e., a PARITY program that does not have any loops. Before doing so we show some lemmas.

\begin{lemma}\label{prop:fully-coverage-simplify-no-notnot}
Let $\Pi_n$ be a standard PARITY$_n$ program. For each rule $x\leftarrow B\in \Pi_n$, if $S(B\cup\{x\})$ contains a unique  string, then the string must be odd.
\end{lemma}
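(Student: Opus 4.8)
The plan is to argue by contradiction on the parity of the unique string, using the characterization that $I$ is an answer set of a program $\Pi$ iff $I=Cn(\Pi^I)$. Fix a rule $x\leftarrow B\in\Pi_n$ for which $S(B\cup\{x\})$ contains a unique string $I$. Then $I\models B\cup\{x\}$, so $I\models B$ and $x\in I$. Since $S(B\cup\{x\})$ is nonempty it is consistent, so $not\ x\notin B$; since $\Pi_n$ has no singleton loops, $x\notin B$; and since $\Pi_n$ is standard and $S(B\cup\{x\})$ contains a unique string, $not\ not\ x\notin B$. Hence $B$ does not cover $x$, and in particular $x\notin var(B)$. Set $I'=I\setminus\{x\}$, so that $x\notin I'$ and $|I'|=|I|-1$.

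Next I would record two elementary facts that hold because $B$ contains no rule element built on $x$. First, deleting $x$ from the model $I$ of $B$ again yields a model of $B$, so $I'\models B$; in particular $I'$ satisfies every $not$- and $not\ not$-element occurring in $B$, that is, $I'\models B\setminus var(B)$. Second, $I\models B$ forces $var(B)\subseteq I$, and since $x\notin var(B)$ this gives $var(B)\subseteq I'$, hence $I'\models var(B)$.

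Now assume, for contradiction, that $I$ is an even string. Then $I'$ is an odd string, and since $\Pi_n$ computes PARITY$_n$ we get $I'\in Ans(\Pi_n)$, i.e.\ $I'=Cn(\Pi_n^{I'})$. Because $I'$ satisfies all the $not$- and $not\ not$-elements of $B$, each of those is replaced by $\top$ when the reduct $\Pi_n^{I'}$ is formed, so $\Pi_n^{I'}$ contains the rule $x\leftarrow var(B)$. But $I'\models var(B)$, and $I'=Cn(\Pi_n^{I'})$ is closed under $\Pi_n^{I'}$, which forces $x\in I'$, contradicting $x\notin I'$. Therefore $I$ must be odd, which is the assertion of the lemma (and the key case needed later for Proposition \ref{prop:standard-comp-equiv}).

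I do not expect a serious obstacle; the argument is short. The point that needs the most care is the bookkeeping on which rule elements a standard, singleton-loop-free program may place in a body $B$ with $S(B\cup\{x\})$ a singleton --- this is exactly what legitimizes $I'\models B$ and $var(B)\subseteq I'$ --- together with checking that the reduct $\Pi_n^{I'}$ genuinely contains $x\leftarrow var(B)$ rather than a rule whose body collapses to $\bot$. Both are immediate from the definitions of \emph{standard program}, of the reduct $\Pi^I$, and of $Cn(\cdot)$.
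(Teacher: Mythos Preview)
Your proof is correct and follows essentially the same route as the paper: show that $B$ does not cover $x$, set $I'=I\setminus\{x\}$, observe $I'\models B$, and derive a contradiction from $I'$ being an odd string (hence an answer set) that fails to satisfy the head $x$. The only difference is presentational: the paper invokes directly that every answer set is closed under each rule of $\Pi_n$, whereas you unpack this via the reduct $\Pi_n^{I'}$ and the fixed-point characterization $I'=Cn(\Pi_n^{I'})$; these amount to the same argument.
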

\begin{proof}
Since $\Pi_n$ is standard, $B$ does not cover $x$.  So we have $S(B)=\{I, I\setminus\{x\}\}$.  Assume $I$ is an even string,
then $I\setminus\{x\}$ must be an odd string. It follows that $I\setminus\{x\}$ is \emph{not} closed under $x\leftarrow B$, since $I\setminus\{x\}\models B$ but $I\setminus\{x\}\nvDash x$.
 However, $\Pi_n$ is a PARITY$_n$ program,  every odd string must be closed under $x\leftarrow B$. A contradiction.
\end{proof}

\begin{lemma}\label{lemma:body-coverage}
Let $\Pi_n$ be a PARITY$_n$ program.
\begin{enumerate}[(i)]
 \item If there is a  rule
  $x\leftarrow B\in \Pi_n$ s.t. $B$ is consistent and $B\cup\{x\}$ does not fully cover $var(\Pi_n)$, then $not\ not\ x\in B$.
\item If there is a  rule   $H\leftarrow B\in \Pi_n$ s.t. $B$ is consistent and $B\cup\{H\}$ is inconsistent, then $B$ fully covers $var(\Pi_n)$.
 \end{enumerate}
\end{lemma}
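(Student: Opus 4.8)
The plan is to prove both items by contradiction, exploiting the defining property of a PARITY$_n$ program: a set of variables $I$ is an answer set iff $I = Cn(\Pi_n^I)$, and the set of all answer sets is exactly $\mathrm{PARITY}_n$, the odd strings of length $n$. The key tool in both parts is the elementary observation that if $B$ is consistent but $B\cup\{x\}$ (resp.\ $B\cup\{H\}$) fails to fully cover $var(\Pi_n)$ or is inconsistent, then $S(B)$ is large enough to contain two strings of opposite parity that agree off the uncovered coordinates, and one can use the ``wrong-parity'' one to violate closure.

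For item (i): suppose toward a contradiction that $not\ not\ x\notin B$. Since $B$ is consistent and $B\cup\{x\}$ does not fully cover $var(\Pi_n)$, there is at least one variable $y$ (possibly $y=x$) that is covered by neither $B$ nor $\{x\}$; in particular $B$ does not cover $x$ by any of $x$, $not\ x$, $not\ not\ x$ (the last by assumption, and the first two would force coverage). Hence every string in $S(B)$ can be freely modified on the coordinate $x$ while remaining in $S(B)$. Pick an odd string $I\in S(B)$ with $x\in I$ if one exists, and otherwise pick an odd string $I$ with $x\notin I$ and then note $I\cup\{x\}\in S(B)$ is even while $I\in S(B)$ is odd --- in either case we obtain an odd string $J\in S(B)$ with $x\notin J$. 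Then $J\models B$ but $J\nvDash x$, so $J$ is not closed under the rule $x\leftarrow B$, hence $J\neq Cn(\Pi_n^J)$; but $J$ is odd, so $J$ should be an answer set of $\Pi_n$ --- contradiction. (The case analysis producing an odd $J\in S(B)$ with $x\notin J$ is routine because $S(B)$, being a subcube of dimension $\geq 1$ in the $x$-direction, contains strings of both parities and in particular an odd one missing $x$.) The main obstacle here is just being careful that $x$ itself is among the uncovered variables so that flipping the $x$-bit stays inside $S(B)$: this is exactly what ``$B$ does not cover $x$'' plus ``$not\ not\ x\notin B$'' gives, since if $B$ covered $x$ via $x$ or $not\ x$ then $B\cup\{x\}$ would cover $x$, and then some \emph{other} uncovered coordinate does the job instead --- either way $S(B)$ has a free coordinate and contains both parities, and one reruns the same closure argument with that coordinate held appropriately.

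For item (ii): suppose toward a contradiction that $B$ does not fully cover $var(\Pi_n)$, so there is a variable $y$ covered by neither $B$ nor (after the flip) forced; since $B$ is consistent, $S(B)$ is a nonempty subcube of dimension $\geq 1$, hence contains a string of each parity, in particular an odd string $I\in S(B)$. Now consider the head $H$. If $H=\bot$, then $B\cup\{H\}$ inconsistent is automatic and we must instead argue from the rule $\bot\leftarrow B$: any odd $I\models B$ then fails to be closed, contradicting that $I$ is an answer set; but this would already force $B$ to fully cover $var(\Pi_n)$ \emph{and} to be inconsistent with every odd string, so the conclusion ``$B$ fully covers $var(\Pi_n)$'' holds. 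If $H$ is a variable $x$, then $B\cup\{x\}$ inconsistent means $not\ x\in B$, so $B$ covers $x$; take an odd $I\in S(B)$ --- then $I\models B$ but $x\notin I$ (since $not\ x\in B$), so $I\nvDash H$, so $I$ is not closed under $H\leftarrow B$, contradicting that the odd string $I$ is an answer set. In all cases the contradiction came from finding an odd string in $S(B)$, which is possible precisely when $S(B)$ is not a singleton, i.e.\ when $B$ does not fully cover $var(\Pi_n)$; hence $B$ must fully cover $var(\Pi_n)$. I expect the only delicate point is the bookkeeping around $H=\bot$ versus $H$ a variable, and making sure that ``$S(B)$ not a singleton $\Rightarrow$ $S(B)$ contains an odd string'' is invoked correctly (it follows since a subcube of dimension $\geq 1$ contains strings of both parities).
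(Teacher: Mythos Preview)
Your approach is the same as the paper's: exhibit an odd string $I\in S(B)$ that is not closed under the rule, contradicting that every odd string is an answer set of $\Pi_n$. Part~(ii) is correct and essentially identical to the paper's argument.

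Part~(i), however, has a gap. The assertion that ``$B$ does not cover $x$'' is unjustified: the no-singleton-loops convention rules out $x\in B$ and the hypothesis rules out $not\ not\ x\in B$, but $not\ x\in B$ remains possible (your parenthetical ``the first two would force coverage'' does not help, since $B$ covering $x$ is perfectly compatible with $B\cup\{x\}$ not fully covering $var(\Pi_n)$; also ``(possibly $y=x$)'' is wrong, as $x$ is always covered by $\{x\}$). As a consequence, the claim that $S(B)$ is ``a subcube of dimension $\geq 1$ in the $x$-direction'' and therefore contains ``an odd one missing $x$'' need not hold: if the only free coordinate were $x$, the unique odd string in $S(B)$ might well contain $x$. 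The paper repairs this with the explicit case split your final hedge only gestures at. If $not\ x\in B$, then every $I\in S(B)$ already omits $x$, and the free coordinate $y\neq x$ guarantees an odd such $I$; if $not\ x\notin B$, then $B$ genuinely does not cover $x$, so together with $y$ there are at least two free coordinates, and restricting to $x\notin I$ still leaves a subcube of positive dimension containing an odd string. Your ``rerun the same closure argument with that coordinate held appropriately'' is too vague to stand in for this distinction.
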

\begin{proof}
Note that for any rule $x\leftarrow B$ in a PARITY$_1$ program, $B\cup\{x\}$ must fully cover $var(\Pi_1)$ since $\Pi_1$ involves only one variable. So in the following we consider $n\geq 2$.

(i) Equivalently, we show that if $B$ is consistent and $not\ not\ x\notin B$,
 then  $B\cup\{x\}$ fully covers $var(\Pi_n)$. Assume  $B\cup\{x\}$ does not fully cover $var(\Pi_n)$.  It follows that $B$ covers $0\leq i<n$ variables in $var(\Pi_n)$ (i.e., $B$ does not fully cover $var(\Pi_n)$).
 Consider the following cases:
\begin{itemize}
    \item  $not\ x\in B$.  Note that $B$ is consistent and $n\geq 2$. It is not hard to see $S(B)$ has exactly $2^{n-i-1}\geq 1$ odd strings.
      It means there is at least one odd string  $I$, $I\models B$ and $I\nvDash x$. Therefore $I$ is not close under $x\leftarrow B$.
     However, since $\Pi_n$ is a PARITY$_n$ program, every odd
string must be closed under $x\leftarrow B$. A contradiction.
    \item $not\ x\notin B$.  $B$ does not cover $x$, since  $not\ not\ x\notin B$ and  $x\notin B$ for $\Pi_n$ has no singleton loops.
  Recall that $B$ is consistent and $n\geq 2$, thus $S(B)$ has exactly $2^{n-i-1}$  odd strings. Obviously, half of these strings do not satisfy $x$.
To be more precise, there are $2^{n-i-2}$ odd strings $I$, $I\models B$ and $I\nvDash x$. We have $2^{n-i-2}\geq 1$ since $i$ is at most $n-2$.
In other words, there is at least one odd string $I$ which is not close under $x\leftarrow B$. Again a contradiction.
\end{itemize}
Consequently, $B\cup\{x\}$ must fully cover $var(\Pi_n)$.

(ii) There are two cases about $H$:
    \begin{itemize}
    \item $H$ is $\bot$. Assume $B$ does not fully cover $var(\Pi_n)$, i.e.,   $B$ covers $i$   variables in $var(\Pi_n)$ with $0\leq i<n$. Since $B$ is consistent, it is easy to see
     $S(B)$ has exactly $2^{n-i-1}\geq 1$  odd  strings. So there exists at least one odd string $I$ is not closed under $\bot\leftarrow B$. A contradiction.
     \item $H$ is a variable $x\in var(\Pi_n)$. Since $B\cup\{x\}$ is inconsistent, we have $not\ x\in B$. It is not hard to see in this case $x\leftarrow B$ can be rewritten as $\bot\leftarrow B$. By an argument similar to the above, $B$ must fully cover  $var(\Pi_n)$.
    \end{itemize}
\end{proof}

\subsubsection{Almost Pure PARITY$_n$ Programs}
Let $\Pi_n$ be a standard PARITY$_n$ program in CP, by $F^-(\Pi_n)$ we denote  the set of rules $H\leftarrow B\in \Pi_n$ s.t. $B\cup\{H\}$ does not fully cover $var(\Pi_n)$, by
 $F^+(\Pi_n)$ we denote $\Pi_n\setminus F^-(\Pi_n)$. If for each  rule $H\leftarrow B\in F^+(\Pi_n)$ we have $var(B)=\emptyset$, then $\Pi_n$ is called \emph{almost pure}.

By Lemma \ref{lemma:body-coverage}, it is not hard to see that every rule  of the form $x\leftarrow B, not\ not\ x$ is  in $F^-(\Pi_n)$, and every rule of the form $\bot\leftarrow B$ or $x\leftarrow B, not\ x$ is  in $F^+(\Pi_n)$.

\begin{proposition}\label{prop:almost-pure-program}
Let $\Pi_n$ be a standard PARITY$_n$ program. Then there is an almost pure PARITY$_n$ program $\Pi_n'$ s.t. $|\Pi_n'|\leq |\Pi_n|$.
\end{proposition}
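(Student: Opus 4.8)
The plan is to push $\Pi_n$ toward almost purity by eliminating, one rule at a time, every positive body atom that still occurs in a rule of $F^+(\Pi_n)$. Concretely, I would pick any rule $H\leftarrow B\in F^+(\Pi_n)$ with $var(B)\neq\emptyset$ and replace it by $H\leftarrow B^\ast$, where $B^\ast$ is obtained from $B$ by rewriting each $y\in var(B)$ as $not\ not\ y$. Since $I\models y$ iff $I\models not\ not\ y$, we have $S(B^\ast)=S(B)$; and since $y$ and $not\ not\ y$ cover the same variable, the partition of the rules into $F^-(\cdot)$ and $F^+(\cdot)$ is unchanged, the rewritten rule stays in $F^+$, and now $var(B^\ast)=\emptyset$. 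One checks routinely that the step introduces neither an inconsistent body nor a singleton loop, and that it preserves standardness: the head $x$ of a variable-headed $F^+$ rule does not occur in its body, so no body atom is ever rewritten into $not\ not\ x$. Iterating over the finitely many offending rules then yields an almost pure program $\Pi_n'$ with $|\Pi_n'|=|\Pi_n|\le|\Pi_n|$, so everything reduces to the single claim that each replacement preserves the property of computing PARITY$_n$, i.e.\ $Ans(\Pi_n)=Ans(\Pi_n')$.

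To establish that claim I would first classify the rewritten rule $H\leftarrow B$ using Lemma~\ref{lemma:body-coverage} and the remarks after it, having first turned any rule $x\leftarrow B$ with $not\ x\in B$ into the equivalent constraint $\bot\leftarrow B$. Either (a) $H=\bot$ and $B$ fully covers $var(\Pi_n)$, so $S(B)=S(B^\ast)=\{I_0\}$ for a single string $I_0$, which is necessarily even, since an odd $I_0$ would be an odd string violating the constraint; or (b) $H$ is an atom $x$ with $x,not\ x,not\ not\ x\notin B$ and $B$ covers exactly $var(\Pi_n)\setminus\{x\}$, so that $S(B)=S(B^\ast)=\{J,\,J\cup\{x\}\}$ with $x\notin J$ and, by Lemma~\ref{prop:fully-coverage-simplify-no-notnot}, $J\cup\{x\}$ odd and $J$ even. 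As $\Pi_n$ and $\Pi_n'$ differ only in this one rule, it suffices to prove $I=Cn(\Pi_n^I)$ iff $I=Cn((\Pi_n')^I)$ for every set $I$ of variables; and the fact that makes this tractable is that $\Pi_n$ is \emph{already} a correct PARITY$_n$ program --- every odd string is one of its answer sets and no even string is --- so one only has to check that the rewrite loses no odd answer set and gains no even one.

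I would split on whether $I\in S(B)$. If $I\notin S(B)$, then $I\not\models B=B^\ast$, so the reduct of the rule, on either side, is a rule whose body contains $\bot$ or --- only on the $\Pi_n$ side --- has the shape $H\leftarrow\ldots,y,\ldots$ with $y\notin I$; such a rule is inapplicable throughout the computation of any least fixed point that stays inside $I$, hence $Cn(\Pi_n^I)$ and $Cn((\Pi_n')^I)$ both equal the $Cn$ of the common, untouched reducts, and the required equivalence follows from $\Pi_n$ being a PARITY$_n$ program. If $I\in S(B)$: in case~(a) the only such $I$ is the even $I_0$, and since $var(B^\ast)=\emptyset$ and $I_0\models B^\ast$, the reduct of $\bot\leftarrow B^\ast$ is the rule $\bot\leftarrow$ with empty body, forcing $\bot\in Cn((\Pi_n')^{I_0})$, so $I_0$ is no answer set of $\Pi_n'$ --- matching $I_0\notin Ans(\Pi_n)$. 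In case~(b), if $I=J$ (even), the reduct of $x\leftarrow B^\ast$ is the fact $x\leftarrow$, so $x\in Cn((\Pi_n')^J)$ while $x\notin J$, again matching $J\notin Ans(\Pi_n)$; and if $I=J\cup\{x\}$ (odd), the reduct of $x\leftarrow B^\ast$ is the fact $x\leftarrow$ whereas $x\leftarrow B$ reduces to $x\leftarrow var(B)$, so $x\leftarrow$ is applicable whenever $x\leftarrow var(B)$ is and hence $Cn((\Pi_n')^I)\supseteq Cn(\Pi_n^I)=I$ (the equality because $I$ is an odd answer set of $\Pi_n$), while $I$ is closed under $(\Pi_n')^I$ since its only changed rule has head $x\in I$; hence $Cn((\Pi_n')^I)=I$ and $I\in Ans(\Pi_n')$. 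That exhausts the cases and gives $Ans(\Pi_n')=Ans(\Pi_n)$.

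The hard part, and essentially the only delicate point, is checking that turning $x$ into a fact (case~(b) with $I=J$) or trivializing a constraint (case~(a) with $I=I_0$) does not create a spurious even answer set; this is exactly where one uses that the reduct change forces $Cn$ strictly above the candidate set --- it then contains $x$, respectively $\bot$ --- together with the standing hypothesis that $\Pi_n$ already computes PARITY$_n$ correctly, which disposes of every odd string for free.
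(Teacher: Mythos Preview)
Your construction is exactly the paper's: replace every positive body atom in each $F^+$ rule by its $not\ not$ counterpart and observe that the result is almost pure with no increase in the number of rules. Where you differ is only in the correctness argument. The paper rewrites all $F^+$ rules simultaneously and then establishes $T_{\Pi_n^I}^\infty(\emptyset)=T_{\Pi_n'^I}^\infty(\emptyset)$ through four separate $T$-operator inductions, invoking Lemmas~\ref{prop:fully-coverage-simplify-no-notnot} and~\ref{lemma:body-coverage} inside those inductions. You instead treat one rule at a time, classify it up front via those same lemmas (constraint with a unique even witness, or variable-headed with $S(B)=\{J,J\cup\{x\}\}$), and then dispatch each candidate $I$ by parity, leaning on the standing fact that $\Pi_n$ already computes PARITY$_n$. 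That buys a shorter and more conceptual proof; the price is the extra bookkeeping you do (preservation of standardness, of the $F^+/F^-$ partition, no new singleton loops), which the paper avoids by rewriting everything at once.

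One imprecision is worth fixing. In the $I\notin S(B)$ branch you assert that ``$Cn(\Pi_n^I)$ and $Cn((\Pi_n')^I)$ both equal the $Cn$ of the common, untouched reducts.'' That is not literally true: if $Cn$ of the common part already happens to contain all of $var(B)$, the surplus rule $H\leftarrow var(B)$ in $\Pi_n^I$ can still fire and push $Cn(\Pi_n^I)$ strictly above $Cn(\text{common})$. What you actually need --- and what your own qualifier ``that stays inside $I$'' is gesturing at --- is the weaker statement $I=Cn(\Pi_n^I)\ \Leftrightarrow\ I=Cn(\text{common})\ \Leftrightarrow\ I=Cn((\Pi_n')^I)$. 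This does hold, because the surplus rule has some body atom $y\notin I$, so it is irrelevant both for $I$ being closed and for $I$ being reached from below; the appended remark about $\Pi_n$ being a PARITY$_n$ program is then not needed in this branch. With that adjustment your argument is complete.
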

\begin{proof}
Let $B'$ be the set obtained from $B$ by replacing every $x\in var(B)$ with $not\ not\ x$. Note that $I\models B$ iff $I\models B'$ for any set of variables $I$.
Let $\Pi_n'$ be the program obtained from $\Pi_n$ by replacing every rule $H\leftarrow B\in F^+(\Pi_n)$  with $H\leftarrow B'$.
 Clearly  $\Pi_n'$ is almost pure and $|\Pi_n'|\leq |\Pi_n|$.
It remains to  prove that $\Pi_n'$ is also a PARITY$_n$ program, i.e., $I=T_{\Pi_n^I}^\infty(\emptyset)$ iff $I=T_{\Pi_n'^I}^\infty(\emptyset)$.

 ($\Rightarrow$) Suppose $I$ is an answer set of $\Pi_n$, i.e., $I=T_{\Pi_n^I}^\infty(\emptyset)$, we shall show $T_{\Pi_n^I}^\infty(\emptyset)=T_{\Pi_n'^I}^\infty(\emptyset)$:
 \begin{itemize}
 \item $T_{\Pi_n^I}^\infty(\emptyset)\subseteq T_{\Pi_n'^I}^\infty(\emptyset)$. Note that  $\bot\notin T_{\Pi_n^I}^\infty(\emptyset)$  since $T_{\Pi_n^I}^\infty(\emptyset)$ is an answer set. Suppose  $x\in T_{\Pi_n^I}^1(\emptyset)$, then $\exists x\leftarrow B\in \Pi_n$ s.t. $var(B)=\emptyset$ and $I\models B$. Clearly, we have $x\leftarrow B\in \Pi_n'$. It follows that   $x\leftarrow \in \Pi_n'^I$
  and then $x\in T_{\Pi_n'^I}^\infty(\emptyset)$. Let $k>1$ and assume  for all $i<k$, $T_{\Pi_n^I}^i(\emptyset)\subseteq T_{\Pi_n'^I}^\infty(\emptyset)$. Suppose
  $x\in T_{\Pi_n^I}^k(\emptyset)$ but $x\notin T_{\Pi_n^I}^{k-1}(\emptyset)$. Then $\exists x\leftarrow B\in\Pi_n$ s.t. $x\leftarrow var(B)\in\Pi_n^I$ and
  $T_{\Pi_n^I}^{k-1}(\emptyset)\models var(B)$. Observe that either $x\leftarrow B\in \Pi_n'$ or $x\leftarrow B'\in \Pi_n'$. The former implies that $x\leftarrow var(B)\in \Pi_n'^I$, clearly,
  $T_{\Pi_n'^I}^\infty(\emptyset)\models var(B)$ by induction hypothesis,  and thus $x\in T_{\Pi_n'^I}^\infty(\emptyset)$. The latter implies that $x\leftarrow \in \Pi_n'^I$,
  trivially, $x\in T_{\Pi_n'^I}^\infty(\emptyset)$. Therefore, $T_{\Pi_n^I}^\infty(\emptyset)\subseteq T_{\Pi_n'^I}^\infty(\emptyset)$.

 \item $T_{\Pi_n'^I}^\infty(\emptyset)\subseteq T_{\Pi_n^I}^\infty(\emptyset)$. We first show  $\bot\notin T_{\Pi_n'^I}^\infty(\emptyset)$. Assume
$\bot\in T_{\Pi_n'^I}^\infty(\emptyset)$, then $\exists \bot\leftarrow B_1\in \Pi_n'$ s.t. $I\models B_1\setminus var(B_1)$. Consider its source $\bot\leftarrow B$ in $\Pi_n$.
 Recall that $\Pi_n$ has no singleton loops and $B$ is consistent since $\Pi_n$ is standard. Furthermore, $B\cup\{\bot\}$ is inconsistent, then  $\bot\leftarrow B\in F^+(\Pi_n)$ by Lemma \ref{lemma:body-coverage} (ii). So $var(B_1)=\emptyset$, $I\models B_1$ and thus $I\models B$. The latter means that   $I$ is not closed under $\bot\leftarrow B\in\Pi_n$, which contradicts the fact that $I$ is an answer set of $\Pi_n$. So $\bot\notin T_{\Pi_n'^I}^\infty(\emptyset)$.
  Now suppose $x\in T_{\Pi_n'^I}^1(\emptyset)$, then  $\exists x\leftarrow B_1\in \Pi_n'$ s.t.
  $x\leftarrow \in \Pi_n'^I$ and  $I\models B_1$. Consider the source of $x\leftarrow B_1$:
 \begin{enumerate}[(i)]
\item  $x\leftarrow B_1\in \Pi_n$, $var(B_1)=\emptyset$.  Since $I\models B_1$, $x\leftarrow \in \Pi_n^I$, we have $x\in T_{\Pi_n^I}^\infty(\emptyset)$.
\item  $x\leftarrow B\in F^+(\Pi_n)$, $var(B)\neq \emptyset$ and $B_1=B'$. Note that $I\models B$ since $I\models B_1$. Furthermore, $I$
   is closed under $x\leftarrow B$ since $I$ is an answer set of $\Pi_n$. So $x\in I$, i.e, $x\in T_{\Pi_n^I}^\infty(\emptyset)$.
   \end{enumerate}
Suppose $x\in T_{\Pi_n'^I}^k(\emptyset)$ but $x\notin T_{\Pi_n'^I}^{k-1}(\emptyset)$ for some $k>1$.
It means that  $\exists x\leftarrow B_1\in \Pi_n'$ s.t. $var(B_1)\neq\emptyset$, $x\leftarrow var(B_1)\in \Pi_n'^I$,  $I\models B_1\setminus var(B_1)$  and
$T_{\Pi_n'^I}^{k-1}(\emptyset)\models var(B_1)$. Note that  $var(B_1)\neq\emptyset$ implies that $x\leftarrow B_1\in F^-(\Pi_n)$, $B_1\cup\{x\}$ does not fully cover $var(\Pi_n)$. By Lemma \ref{lemma:body-coverage} (i),
we have $not\ not\ x\in B_1$. Recall that $I\models B_1\setminus var(B_1)$, so $I\models not\ not\ x$, i.e., $x\in T_{\Pi_n^I}^\infty(\emptyset)$.
Therefore, $T_{\Pi_n'^I}^\infty(\emptyset)\subseteq T_{\Pi_n^I}^\infty(\emptyset)$.

\end{itemize}

($\Leftarrow$) Suppose $I$ is an answer set of $\Pi_n'$, i.e., $I=T_{\Pi_n'^I}^\infty(\emptyset)$, we shall show $T_{\Pi_n'^I}^\infty(\emptyset)=T_{\Pi_n^I}^\infty(\emptyset)$:
\begin{itemize}
\item $T_{\Pi_n'^I}^\infty(\emptyset)\subseteq T_{\Pi_n^I}^\infty(\emptyset)$. Note that $\bot\notin T_{\Pi_n'^I}^\infty(\emptyset)$. Suppose $x\in T_{\Pi_n'^I}^1(\emptyset)$,
then $\exists x\leftarrow B_1\in \Pi_n'$ s.t. $var(B_1)=\emptyset$, $x\leftarrow \in \Pi_n'^I$ and $I\models B_1$.
Now consider the   source of $x\leftarrow B_1$:
\begin{enumerate}[(i)]
\item   $x\leftarrow B_1\in \Pi_n$. So $x\leftarrow \in \Pi_n^I$ and clearly $x\in T_{\Pi_n^I}^\infty(\emptyset)$.
\item  $x\leftarrow B\in F^+(\Pi_n)$, $var(B)\neq\emptyset$ and $B_1=B$. Note that $I\models B_1\cup\{x\}$ since $x\in I$ and $I\models B_1$, it follows that $I\models B\cup\{x\}$.
Clearly, $B\cup\{x\}$ is consistent  and fully covers $var(\Pi_n)$. By Lemma \ref{prop:fully-coverage-simplify-no-notnot}, $I$ is exactly the unique odd string in $S(B\cup\{x\})$. Recall that $\Pi_n$ is a PARITY$_n$ program, so $I$ must be an answer set of $\Pi_n$, i.e., $I=T_{\Pi_n^I}^\infty(\emptyset)$. Therefore $x\in T_{\Pi_n^I}^\infty(\emptyset)$.
\end{enumerate}

Let $k>1$ and assume for all $i<k$, $T_{\Pi_n'^I}^i(\emptyset)\subseteq T_{\Pi_n^I}^\infty(\emptyset)$.
Suppose $x\in T_{\Pi_n'^I}^k(\emptyset)$ but $x\notin T_{\Pi_n'^I}^{k-1}(\emptyset)$. Then $\exists x\leftarrow B_1\in \Pi_n'$ s.t. $var(B_1)\neq\emptyset$, $x\leftarrow var(B_1)\in \Pi_n'^I$,  $I\models B_1\setminus var(B_1)$  and
$T_{\Pi_n'^I}^{k-1}(\emptyset)\models var(B_1)$. Note that $var(B_1)\neq\emptyset$ implies  $x\leftarrow B_1\in F^-(\Pi_n)$. Moreover, $x\leftarrow var(B_1)\in \Pi_n^I$. By inductive hypothesis,
$T_{\Pi_n^I}^\infty(\emptyset)\models var(B_1)$, therefore $x\in T_{\Pi_n^I}^\infty(\emptyset)$. Consequently, $T_{\Pi_n'^I}^\infty(\emptyset)\subseteq T_{\Pi_n^I}^\infty(\emptyset)$.

\item $T_{\Pi_n^I}^\infty(\emptyset)\subseteq T_{\Pi_n'^I}^\infty(\emptyset)$. We first show $x\in T_{\Pi_n^I}^\infty(\emptyset)$ implies $x\in T_{\Pi_n'^I}^\infty(\emptyset)$. Suppose  $x\in T_{\Pi_n^I}^1(\emptyset)$, then $\exists x\leftarrow B\in \Pi_n$, $var(B)=\emptyset$ and $I\models B$. It follows that $x\leftarrow B\in \Pi_n'$ and $x\leftarrow \in \Pi_n'^I$. Clearly, $x\in T_{\Pi_n'^I}^\infty(\emptyset)$.
Let $k>1$ and assume for all $i<k$, $x\in T_{\Pi_n^I}^i(\emptyset)$ implies $x\in T_{\Pi_n'^I}^\infty(\emptyset)$. Suppose $x\in T_{\Pi_n^I}^k(\emptyset)$ but $x\notin T_{\Pi_n^I}^{k-1}(\emptyset)$. Then $\exists x\leftarrow B\in \Pi_n$ s.t.  $x\leftarrow var(B)\in \Pi_n^I$, $var(B)\neq\emptyset$, $I\models B\setminus var(B)$ and $T_{\Pi_n^I}^{k-1}(\emptyset)\models var(B)$. By induction hypothesis,
$T_{\Pi_n'^I}^\infty(\emptyset)\models var(B)$, i.e., $I\models var(B)$.
Now  $I\models B$ since $I\models B\setminus var(B)$ and $I\models var(B)$, hence $I\models B'$. Observe that either $x\leftarrow B'\in \Pi_n'$ or $x\leftarrow B\in \Pi_n'$, in both cases
   $I\models x$ since $I$ is an answer set of $\Pi_n'$ and must be closed under every rule of $\Pi_n'$.Consequently, $x\in T_{\Pi_n^I}^\infty(\emptyset)$ implies $x\in T_{\Pi_n'^I}^\infty(\emptyset)$. It remains to show $\bot\notin T_{\Pi_n^I}^\infty(\emptyset)$. Assume $\bot\in T_{\Pi_n^I}^\infty(\emptyset)$, then $\exists \bot\leftarrow B$ in $F^+(\Pi_n)$ s.t.  $I\models B\setminus var(B)$ and $T_{\Pi_n^I}^k(\emptyset)\models var(B)$ for some $k\geq 1$. Notice that the latter means $I\models var(B)$, since variables $T_{\Pi_n^I}^k(\emptyset)\models var(B)$ implies $T_{\Pi_n'^I}^\infty(\emptyset)\models var(B)$ by the previous result and $I=T_{\Pi_n'^I}^\infty(\emptyset)$. So $I\models B$, i.e.,  $I\models B'$. However,   note that $\bot\leftarrow B'$ in $\Pi_n'$ and  $I$ is not closed under $\bot\leftarrow B'$. This contradicts the fact that  $I$ is an answer set of $\Pi_n'$. Therefore $\bot\notin T_{\Pi_n^I}^\infty(\emptyset)$, and hence $T_{\Pi_n^I}^\infty(\emptyset)\subseteq  T_{\Pi_n'^I}^\infty(\emptyset)$.
\end{itemize}
\end{proof}

\subsubsection{Pure PARITY$_n$ Programs}
Let $\Pi_n$ be an almost pure  PARITY$_n$ program. If for every  rule $H\leftarrow B\in \Pi_n$ we have $var(B)=\emptyset$, then  $\Pi_n$ is called \emph{pure}. Clearly, a pure program has no loops and is hence  completion-equivalent.

\begin{proposition}\label{prop:pure-program}
Let $\Pi_n$ be an almost pure PARITY$_n$ program. Then there is a pure PARITY$_n$ program $\Pi_n'$ s.t. $|\Pi_n'|\leq|\Pi_n|$.
\end{proposition}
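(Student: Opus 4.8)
The plan is to purify the bodies of $\Pi_n$ by double-negating every plain body variable: for each rule $H\leftarrow B$ of $\Pi_n$ with $var(B)\neq\emptyset$ --- by the remark following Lemma~\ref{lemma:body-coverage} and almost-purity these are exactly the $F^-$-rules $x\leftarrow B$ with $not\ not\ x\in B$ --- let $B'$ arise from $B$ by replacing each $y\in var(B)$ with $not\ not\ y$, and use $H\leftarrow B'$ in place of $H\leftarrow B$; every other rule, in particular every rule of $F^+(\Pi_n)$ (which already has $var(B)=\emptyset$), is kept verbatim. No rule is added or removed, so $|\Pi_n'|=|\Pi_n|$, and no body of $\Pi_n'$ contains a plain variable, so $\Pi_n'$ is pure and hence loop-free. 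I would first record two cheap facts: (a) $I\models B$ iff $I\models B'$ for every $I$ (since $I\models y$ iff $I\models not\ not\ y$), so $\Pi_n$ and $\Pi_n'$ have identical heads and element-wise equi-satisfiable bodies; (b) $Comp(\Pi_n')$ and $Comp(\Pi_n)$ are logically equivalent, because $not\ not\ y$ contributes $\neg\neg y\equiv y$ to the completion. Since $\Pi_n'$ is loop-free, Theorem~\ref{thm:lin-zhao-sne} (or the Fages theorem) gives $Ans(\Pi_n')=M(Comp(\Pi_n'))=M(Comp(\Pi_n))$, so the whole claim reduces to $M(Comp(\Pi_n))=\mathrm{PARITY}_n$.

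One inclusion is immediate from Theorem~\ref{thm:lin-zhao-sne} applied to $\Pi_n$: $\mathrm{PARITY}_n=Ans(\Pi_n)=M(Comp(\Pi_n)\cup LF(\Pi_n))\subseteq M(Comp(\Pi_n))$. The substance is the reverse inclusion: every model of $Comp(\Pi_n)$ must be odd. So suppose $I\models Comp(\Pi_n)$ is even; then $I\notin Ans(\Pi_n)=M(Comp(\Pi_n)\cup LF(\Pi_n))$, hence $I$ violates some loop formula $LF(L,\Pi_n)$, i.e.\ $L\cap I\neq\emptyset$ yet no body of a rule in $R^-(L,\Pi_n)$ holds in $I$. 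Starting from $x_0\in L\cap I$, the completion equivalence for $x_0$ forces a rule $x_0\leftarrow C$ of $\Pi_n$ with $I\models C$; since its body then fails the $R^-(L,\Pi_n)$-condition, $var(C)\cap L\neq\emptyset$, so $var(C)\neq\emptyset$, so (almost-purity) $x_0\leftarrow C$ is an $F^-$-rule with $not\ not\ x_0\in C$; pick $x_1\in var(C)\cap L$, noting $x_1\in I$ and $x_1\neq x_0$ (no singleton loops). Iterating inside the finite set $L\cap I$ produces a cycle $x_0\to x_1\to\cdots\to x_{m-1}\to x_0$ of the positive dependency graph, every $x_i\in I$, each witnessed by an $F^-$-rule $x_i\leftarrow C_i$ with $I\models C_i$, $not\ not\ x_i\in C_i$, $x_{i+1}\in var(C_i)$.

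The hard part --- and the only place where $Ans(\Pi_n)=\mathrm{PARITY}_n$ is truly needed --- is to turn this cycle into a contradiction; without the hypothesis the statement is false (e.g.\ $\{x_1\leftarrow x_2, not\ not\ x_1;\ x_2\leftarrow x_1, not\ not\ x_2\}$ over $\{x_1,x_2,x_3\}$ is almost pure but has $M(Comp)=\{000,110\}\neq Ans=\{000\}$, and the purified program gains the even answer set $110$). My plan is to exploit that an even $I$ sits at Hamming distance $1$ from $n$ odd strings, each of which is an answer set of $\Pi_n$ and hence a model of $Comp(\Pi_n)\cup LF(\Pi_n)$: taking the neighbour $I^{*}=I\setminus\{x\}$ for $x$ on the cycle and comparing the reducts $\Pi_n^{I}$ and $\Pi_n^{I^{*}}$ (which differ only through the elements $not\ x$ and $not\ not\ x$), combined with the external-support property of $L$ for $I^{*}$ that $LF(L,\Pi_n)$ imposes and with the coverage facts of Lemmas~\ref{lemma:body-coverage} and~\ref{prop:fully-coverage-simplify-no-notnot} (a full-coverage body with its head pins a model down to the unique, necessarily odd, string it admits), I would propagate constraints around the cycle until one of the completion equivalences $I$ was assumed to satisfy is violated. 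An equivalent route, mirroring the bookkeeping of Proposition~\ref{prop:almost-pure-program}, is to bypass the completion and show $I=T_{\Pi_n^I}^\infty(\emptyset)\Leftrightarrow I=T_{\Pi_n'^I}^\infty(\emptyset)$ by the same stagewise induction: the unchanged $F^+$-rules transfer verbatim, and the sole obstruction is an atom of $I$ whose only support in the purified reduct comes from a transformed $F^-$-rule with originally non-empty $var(B)$, leaving it unsupported in $\Pi_n^I$ precisely when a positive cycle of self-referential $F^-$-rules occurs --- again excluded by PARITY-ness. Everything outside this cycle argument is routine substitution plus monotonicity of $Cn(\cdot)$ and $T_\Pi(\cdot)$, so I expect the cycle bookkeeping to be the single genuinely delicate step.
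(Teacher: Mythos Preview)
Your target program $\Pi_n'$ is exactly the paper's, and your Option~B (a $T$-operator argument) is the paper's route. The substantive difference is \emph{granularity}: the paper does not purify all $F^-$-rules at once but by induction on the number $m$ of non-pure rules, replacing a \emph{single} rule $x\leftarrow B$ by $x\leftarrow B'$ at each step. This is not cosmetic. In the $(\Leftarrow)$ direction, when $I$ is an answer set of the (partially) purified program and the transformed rule contributes $x\leftarrow\ $ to the reduct, one must show every $x'\in var(B)$ is already derivable in the original reduct. Because only \emph{one} rule has changed, every rule of $\Pi_n''^I$ with head $x'\neq x$ is literally a rule of $\Pi^I$ (where $\Pi=\Pi_n\setminus\{x\leftarrow B\}$), and a short sub-induction on the $T$-stages closes the case. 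If all $F^-$-rules are purified simultaneously this localization collapses: the rule deriving $x'$ in $\Pi_n'^I$ may itself be a transformed rule, and you are forced into exactly the global cycle bookkeeping you flag as delicate. So your remark that Option~B ``mirrors Proposition~\ref{prop:almost-pure-program}'' understates the issue: Proposition~\ref{prop:almost-pure-program} can purify all $F^+$-rules at once because Lemma~\ref{prop:fully-coverage-simplify-no-notnot} pins $I$ down to a unique odd string there, whereas $F^-$-rules carry no such full-coverage information, and the paper's one-rule-at-a-time device is precisely what substitutes for it.

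Your Option~A reformulation---reduce to $M(Comp(\Pi_n))=\mathrm{PARITY}_n$---is clean and correct as far as it goes, but the sketch of the contradiction is a genuine gap rather than routine bookkeeping. You extract a positive cycle $x_0\to\cdots\to x_{m-1}\to x_0$ with $I\models C_i$ and $not\ not\ x_i\in C_i$, then propose to pass to the odd neighbour $I^*=I\setminus\{x_i\}$ and ``propagate constraints''. But flipping $x_i$ only directly disturbs body elements mentioning $x_i$; the external-support rule that $I^*\models LF(L,\Pi_n)$ guarantees for $L$ need not interact with your cycle, and Lemmas~\ref{prop:fully-coverage-simplify-no-notnot} and~\ref{lemma:body-coverage} give you nothing about the $F^-$-bodies $C_i$ beyond $not\ not\ x_i\in C_i$ (they are non-full-coverage by definition of $F^-$). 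No contradiction is visible from this data without a further idea. The paper sidesteps this entirely: the one-rule induction replaces one global cycle argument by $m$ local ones, each dissolved by the absence of singleton loops.
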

\begin{proof}
A rule $H\leftarrow B$ is \emph{non-pure} if $var(B)\neq\emptyset$. We show by induction on the number $m$ of non-pure rules in $\Pi_n$.  Base step $m=0$,   the claim trivially holds. Let $m>0$ and assume the claim holds for all almost pure PARITY$_n$ programs containing $j<m$ non-pure rules. Suppose  $\Pi_n$ is an almost pure PARITY$_n$ program with $m$ non-pure rules, and let $H\leftarrow B$ be a non-pure rule in $\Pi_n$.  Note that  $H\leftarrow B$ must be in $F^-(\Pi_n)$ and $H$ is a variable $x$, since $\Pi_n$ is almost pure. Let $B'$ be obtained from $B$ by replacing each variable $x\in var(B)$ with $not\ not\ x$. Note that $I\models B$ iff $I\models B'$ for any set of variables $I$.
Let $\Pi$ be $\Pi_n\setminus \{H\leftarrow B\}$ and let  $\Pi_n''=\Pi\cup\{H\leftarrow B'\}$.
Clearly, $\Pi_n''$ is almost pure and $|\Pi_n''|\leq|\Pi_n|$. We shall show that $\Pi_n''$ is also a PARITY$_n$ program, i.e., $I=T_{\Pi_n^I}^\infty(\emptyset)$ iff $I=T_{\Pi_n''^I}^\infty(\emptyset)$.

% ($\Rightarrow$)  This part  is similar to the ($\Rightarrow$) proof  of Proposition \ref{prop:almost-pure-program},  we omit it here.
 ($\Rightarrow$) Suppose $I=T_{\Pi_n^I}^\infty(\emptyset)$, we prove $T_{\Pi_n^I}^\infty(\emptyset)=T_{\Pi_n''^I}^\infty(\emptyset)$:
 \begin{itemize}
 \item $T_{\Pi_n^I}^\infty(\emptyset)\subseteq T_{\Pi_n''^I}^\infty(\emptyset)$. Note that  $\bot\notin T_{\Pi_n^I}^\infty(\emptyset)$  since $T_{\Pi_n^I}^\infty(\emptyset)$ is an answer set. Suppose $x\in T_{\Pi_n^I}^1(\emptyset)$,
 then $\exists x\leftarrow B_1\in \Pi_n$ s.t. $var(B_1)=\emptyset$ and $I\models B_1$. Clearly, $x\leftarrow B_1\in \Pi_n''$ and  $x\leftarrow \in \Pi_n''^I$,  thus $x\in T_{\Pi_n''^I}^\infty(\emptyset)$. Let $k>1$ and assume for all $i<k$, $T_{\Pi_n^I}^i(\emptyset)\subseteq T_{\Pi_n''^I}^\infty(\emptyset)$. Suppose
  $x\in T_{\Pi_n^I}^k(\emptyset)$ but $x\notin T_{\Pi_n^I}^{k-1}(\emptyset)$. Then $\exists x\leftarrow B_1\in\Pi_n$ s.t. $var(B_1)\neq\emptyset$, $x\leftarrow var(B_1)\in\Pi_n^I$ and $T_{\Pi_n^I}^{k-1}(\emptyset)\models var(B_1)$. Hence if $x=H, B_1=B$ then $x\leftarrow B'\in \Pi_n''$, otherwise $x\leftarrow B_1\in \Pi_n''$. The former implies  $x\leftarrow \in \Pi_n''^I$ since  $var(B')=\emptyset$, and $I\models B'$ due to $I\models B_1$, trivially, $x\in T_{\Pi_n''^I}^\infty(\emptyset)$. The latter implies  $x\leftarrow var(B_1)\in \Pi_n''^I$, clearly,
  $T_{\Pi_n''^I}^\infty(\emptyset)\models var(B_1)$ by induction hypothesis,  and thus $x\in T_{\Pi_n''^I}^\infty(\emptyset)$.  Therefore, $T_{\Pi_n^I}^\infty(\emptyset)\subseteq T_{\Pi_n''^I}^\infty(\emptyset)$.

 \item $T_{\Pi_n''^I}^\infty(\emptyset)\subseteq T_{\Pi_n^I}^\infty(\emptyset)$. We first show that $\bot\notin T_{\Pi_n''^I}^\infty(\emptyset)$. Assume
$\bot\in T_{\Pi_n''^I}^\infty(\emptyset)$. So $\exists \bot\leftarrow B_1\in \Pi_n''$ s.t. $\bot\leftarrow var(B_1)\in \Pi_n''^I$ and $I\models B_1\setminus var(B_1)$. Note that $\bot\leftarrow B_1$ must be in $F^+(\Pi_n)$ since $\Pi_n$ is almost pure. Hence $var(B_1)=\emptyset$, $I\models B_1$ thus  $I\models B$. The latter means that   $I$ is not closed under $\bot\leftarrow B_1\in\Pi_n$, which contradicts the fact that $I$ is an answer set of $\Pi_n$. So $\bot\notin T_{\Pi_n''^I}^\infty(\emptyset)$.  Suppose $x\in T_{\Pi_n''^I}^1(\emptyset)$, then  $\exists x\leftarrow B_1\in \Pi_n''$ s.t.   $x\leftarrow \in \Pi_n''^I$ and  $I\models B_1$. There are two cases about the source of $x\leftarrow B_1$:

(i)  $x\leftarrow B_1\in \Pi_n$, $var(B_1)=\emptyset$.  Since $I\models B_1$, so $x\leftarrow \in \Pi_n^I$,  $x\in T_{\Pi_n^I}^\infty(\emptyset)$.

  (ii)$x\leftarrow B_1$ is obtained from $H\leftarrow B\in F^-(\Pi_n)$, i.e., $H=x$, $var(B)\neq \emptyset$ and $B_1=B'$. Note that $I\models B$ since $I\models B_1$. Recall that $not\ not\ x\in B$, so $x\in I$, i.e, $x\in T_{\Pi_n^I}^\infty(\emptyset)$.

Now suppose $x\in T_{\Pi_n''^I}^k(\emptyset)$ but $x\notin T_{\Pi_n''^I}^{k-1}(\emptyset)$ for some $k>1$.
It means that $\exists x\leftarrow B_1\in \Pi_n''$ s.t. $var(B_1)\neq\emptyset$, $x\leftarrow var(B_1)\in \Pi_n''^I$,  $I\models B_1\setminus var(B_1)$  and
$T_{\Pi_n''^I}^{k-1}(\emptyset)\models var(B_1)$. Since $\Pi_n$ is almost pure, $var(B_1)\neq\emptyset$ implies that $x\leftarrow B_1\in F^-(\Pi_n)$. By Lemma \ref{lemma:body-coverage} (i),  $not\ not\ x\in B_1$. Recall that $I\models B_1\setminus var(B_1)$, so $I\models not\ not\ x$, i.e., $x\in T_{\Pi_n^I}^\infty(\emptyset)$.
Therefore, $T_{\Pi_n''^I}^\infty(\emptyset)\subseteq T_{\Pi_n^I}^\infty(\emptyset)$.
\end{itemize}

($\Leftarrow$) Suppose $I=T_{\Pi_n''^I}^\infty(\emptyset)$, we  show  $T_{\Pi_n''^I}^\infty(\emptyset)=T_{\Pi_n^I}^\infty(\emptyset)$.
\begin{itemize}
\item  $T_{\Pi_n''^I}^\infty(\emptyset)\subseteq T_{\Pi_n^I}^\infty(\emptyset)$. Note that $\bot\notin T_{\Pi_n''^I}^\infty(\emptyset)$. Suppose  $x\in T_{\Pi_n''^I}^1(\emptyset)$,  then $\exists x\leftarrow B_1\in \Pi_n''$ s.t. $x\leftarrow \in \Pi_n''^I$ and $I\models B_1$.
Now consider the  source of $x\leftarrow B_1$:

(i)  $x\leftarrow B_1\in \Pi_n$ and $var(B_1)=\emptyset$. It follows that $x\leftarrow \in \Pi_n^I$ and clearly $x\in T_{\Pi_n^I}^\infty(\emptyset)$.

 (ii) $x\leftarrow B_1$ is obtained from $H\leftarrow B\in F^-(\Pi_n)$, i.e., $H=x$ and $B_1=B'$. So $x\leftarrow var(B) \in\Pi_n^I$ and $I\models B$. The latter implies $T_{\Pi_n''^I}^\infty(\emptyset)\models B$ thus $T_{\Pi_n''^I}^\infty(\emptyset)\models var(B)$. Note that $\Pi_n''^I=\Pi^I\cup\{x\leftarrow \}$, and  $x\notin var(B)$ since $\Pi_n$ is almost pure, it has no singleton loops. We prove by induction that  $T_{\Pi^I}^\infty(\emptyset)\models var(B)$. Suppose $x'\in var(B)$ and $x'\in T_{\Pi_n''^I}^1(\emptyset)$, then $\exists x'\leftarrow B_2\in \Pi''$, $var(B_2)=\emptyset$ and $I\models B_2$. Clearly, $x'\leftarrow \in \Pi^I$, thus $x'\in T_{\Pi^I}^\infty(\emptyset)$. Let $s>1$ and assume for all $t<s$, $x'\in var(B)$ and $x'\in T_{\Pi_n''^I}^t(\emptyset)$ implies $x'\in T_{\Pi_n^I}^\infty(\emptyset)$. Suppose $x'\in var(B)$ and $x'\in T_{\Pi_n''^I}^s(\emptyset)$. Then $\exists x'\leftarrow B_2\in \Pi''$,  $x'\leftarrow var(B_2)\in\Pi''^I$ and $T_{\Pi_n''^I}^{s-1}(\emptyset)\models var(B_2)$. Recall that $x\neq x'$ since $x\notin var(B)$, so  $x'\leftarrow var(B_2)\in \Pi^I$. By induction hypothesis, $T_{\Pi_n^I}^\infty(\emptyset)\models var(B_2)$,  $x'\in T_{\Pi_n^I}^\infty(\emptyset)$.
 Hence  $T_{\Pi^I}^\infty(\emptyset)\models var(B)$. Furthermore, $\Pi_n^I=\Pi^I\cup\{x\leftarrow var(B)\}$, so $T_{\Pi^I}^\infty(\emptyset)\subseteq T_{\Pi_n^I}^\infty(\emptyset)$,
    $T_{\Pi_n^I}^\infty(\emptyset)\models var(B)$. Therefore $T_{\Pi_n^I}^\infty(\emptyset)\models x$, i.e., $x\in T_{\Pi_n^I}^\infty(\emptyset)$.

Let $k>1$ and assume for all $i<k$, $T_{\Pi_n''^I}^i(\emptyset)\subseteq T_{\Pi_n^I}^\infty(\emptyset)$. Suppose $x\in T_{\Pi_n''^I}^k(\emptyset)$, then   $\exists x\leftarrow B_1\in \Pi_n''$, $var(B_1)\neq\emptyset$,  $x\leftarrow var(B_1)\in\Pi_n''^I$, $T_{\Pi_n''^I}^{k-1}(\emptyset)\models var(B_1)$. Clearly,  $x\leftarrow var(B_1)\in \Pi^I$. It follows that $x\leftarrow var(B_1)\in \Pi_n^I$ since $\Pi^I\subseteq \Pi_n^I$. By induction hypothesis,
$T_{\Pi_n^I}^\infty(\emptyset)\models var(B_1)$. Hence $T_{\Pi_n^I}^\infty(\emptyset)\models x$, i.e., $x\in T_{\Pi_n^I}^\infty(\emptyset)$. Therefore, $T_{\Pi_n''^I}^\infty(\emptyset)\subseteq T_{\Pi_n^I}^\infty(\emptyset)$.
%To see  $T_{\Pi_n^I}^\infty(\emptyset)\subseteq T_{\Pi_n''^I}^\infty(\emptyset)$, please consult the ($\Leftarrow$) proof  of Proposition \ref{prop:almost-pure-program}.

\item  $T_{\Pi_n^I}^\infty(\emptyset)\subseteq T_{\Pi_n''^I}^\infty(\emptyset)$. We first show for any variable $x$,  $x\in T_{\Pi_n^I}^\infty(\emptyset)$ implies $x\in T_{\Pi_n''^I}^\infty(\emptyset)$. Suppose $x\in T_{\Pi_n^I}^1(\emptyset)$, then
$\exists x\leftarrow B_1\in \Pi_n$, $var(B_1)=\emptyset$ and $I\models B_1$. It follows that $x\leftarrow B_1\in \Pi_n''$ and $x\leftarrow \in \Pi_n''^I$. So $x\in T_{\Pi_n''^I}^\infty(\emptyset)$.
Let $k>1$ and assume for all $i<k$, $x\in T_{\Pi_n^I}^i(\emptyset)$ implies $x\in T_{\Pi_n''^I}^\infty(\emptyset)$. Suppose $x\in T_{\Pi_n^I}^k(\emptyset)$, then $\exists x\leftarrow B_1\in \Pi_n$ s.t.  $x\leftarrow var(B_1)\in \Pi_n^I$, $var(B_1)\neq\emptyset$, $I\models B_1\setminus var(B_1)$ and $T_{\Pi_n^I}^{k-1}(\emptyset)\models var(B_1)$. By induction hypothesis,
$T_{\Pi_n''^I}^\infty(\emptyset)\models var(B_1)$, i.e., $I\models var(B_1)$.
Now $I\models B_1$ since $I\models B_1\setminus var(B_1)$ and $I\models var(B_1)$, hence $I\models B_1$. Observe that if $H=x$ and $B_1=B$ then $x\leftarrow B'\in\Pi_n''$, otherwise $x\leftarrow B_1\in\Pi_n''$. In both cases  $I\models x$ since $I$ is an answer set of $\Pi_n''$ and it must be closed under every rule of $\Pi_n''$.
Consequently, $x\in T_{\Pi_n^I}^\infty(\emptyset)$ implies $x\in T_{\Pi_n''^I}^\infty(\emptyset)$. It remains to show $\bot\notin T_{\Pi_n^I}^\infty(\emptyset)$.
Assume $\bot\in T_{\Pi_n^I}^\infty(\emptyset)$, then  $\exists \bot\leftarrow B_1$ in $F^+(\Pi_n)$ s.t. $\bot\leftarrow \in\Pi_n^I$ and
 $I\models B_1$. Notice that $var(B_1)=\emptyset$ since $\Pi_n$ is almost pure. Furthermore, $\bot\leftarrow B_1$ must be in $\Pi_n''$, therefore $I\models \bot$ since  $I\models B_1$.However, this contradicts the fact that  $I$ is an answer set of $\Pi_n''$. Therefore $\bot\notin T_{\Pi_n^I}^\infty(\emptyset)$, and hence $T_{\Pi_n^I}^\infty(\emptyset)\subseteq  T_{\Pi_n''^I}^\infty(\emptyset)$.
\end{itemize}
%\end{comment}
Consequently, $\Pi_n''$ is an almost pure PARITY$_n$ program with $m-1$ non-pure rules. By induction hypothesis, there is a pure  PARITY$_n$ program $\Pi_n'$ with $|\Pi_n'|\leq|\Pi_n''|\leq |\Pi_n|$.
\end{proof}

\begin{comment}
The following is an almost pure PARITY$_3$ program, it can be easily transformed to a pure one by the above proposition.
\begin{equation}\label{equ:almost-pure-parity3-1}
\begin{array}{l}
x_3 \leftarrow   not\ not\ x_1,\ not\ not\ x_2,\\
x_3 \leftarrow   not\ x_1, not\ x_2,\\
\bot \leftarrow   not\ not\ x_1, not\ not\ x_2, not\ x_3.\\
\end{array}
\begin{array}{l}
x_1 \leftarrow  x_2,\ not\ not\ x_1,\\
x_1 \leftarrow  not\ x_2,\ not\ not\ x_1,\\
x_2 \leftarrow  not\ not\ x_2,\\
\end{array}
\end{equation}
\end{comment}

\subsection{The Main Results}
The  main lemma below easily follows from  Proposition \ref{prop:standard-program} and \ref{prop:standard-comp-equiv}.
\begin{lemma}[Main Lemma]\label{lma:main-lemma}
Let $\Pi_n$ be a PARITY$_n$ program. Then there is a PARITY$_n$ program $\Pi_n'$ s.t. $\Pi_n'$ is equivalent to $Comp(\Pi_n')$ and  $|\Pi_n'|\leq|\Pi_n|$.
\end{lemma}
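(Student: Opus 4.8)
The plan is to derive the Main Lemma by simply chaining together the two propositions cited in the statement, together with the transitivity of the inequalities $|\Pi_n'|\le|\Pi_n|$. Given an arbitrary PARITY$_n$ program $\Pi_n$, I would first invoke Proposition~\ref{prop:standard-program} to obtain a \emph{standard} PARITY$_n$ program $\Pi_n^{\,\mathrm{std}}$ with $|\Pi_n^{\,\mathrm{std}}|\le|\Pi_n|$. Then I would apply Proposition~\ref{prop:standard-comp-equiv} to $\Pi_n^{\,\mathrm{std}}$, which tells us that $\Pi_n^{\,\mathrm{std}}$ is equivalent to its completion $Comp(\Pi_n^{\,\mathrm{std}})$. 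Setting $\Pi_n'=\Pi_n^{\,\mathrm{std}}$ then gives a PARITY$_n$ program that is equivalent to $Comp(\Pi_n')$ and satisfies $|\Pi_n'|\le|\Pi_n|$, which is exactly the claim.

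One small bookkeeping point I would address before chaining the propositions: the standing assumption introduced at the end of Section~\ref{sec:gen-simply-parity} is that a PARITY program has no singleton loops and contains no inconsistent bodies, and Propositions~\ref{prop:standard-program} and~\ref{prop:standard-comp-equiv} are stated under that convention. So I would first note that, given a genuinely arbitrary PARITY$_n$ program, one may discard rules with inconsistent bodies (harmless, as remarked when $S(B)$ was defined) and remove singleton-loop rules $x\leftarrow B$ with $x\in var(B)$ by Proposition~\ref{prop:equiv-non-singleton}, neither step increasing the size and both preserving $Ans(\cdot)$; after this preprocessing the hypotheses of the two propositions are met. If one simply reads ``PARITY$_n$ program'' as already satisfying the standing convention, this paragraph collapses to a single sentence.

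There is essentially no main obstacle here: the Main Lemma is a pure corollary, and all the real work has already been carried out in establishing Proposition~\ref{prop:standard-program} (via Lemmas~\ref{lema:parity-even-string-notnot} and~\ref{lema:parity-odd-string-notnot}) and Proposition~\ref{prop:standard-comp-equiv} (via the ``almost pure'' and ``pure'' program constructions of Propositions~\ref{prop:almost-pure-program} and~\ref{prop:pure-program} together with the Lin--Zhao/Fages theorem). The only thing to be careful about is that the two size inequalities compose correctly, i.e.\ $|\Pi_n'|\le|\Pi_n^{\,\mathrm{std}}|\le|\Pi_n|$, and that ``equivalent'' is used consistently to mean $Ans(\cdot)=M(\cdot)$ throughout, so that equivalence to the completion is exactly what the lemma asserts. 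Hence the proof is just: apply Proposition~\ref{prop:standard-program}, then Proposition~\ref{prop:standard-comp-equiv}, then conclude.
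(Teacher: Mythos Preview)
Your proposal is correct and matches the paper's own treatment: the paper states that the Main Lemma ``easily follows from Proposition~\ref{prop:standard-program} and~\ref{prop:standard-comp-equiv},'' which is exactly the chaining you describe. Your extra paragraph on first eliminating inconsistent bodies and singleton-loop rules (via Proposition~\ref{prop:equiv-non-singleton}) is a welcome clarification of the standing convention, and since you take $\Pi_n'=\Pi_n^{\,\mathrm{std}}$ only the single inequality $|\Pi_n^{\,\mathrm{std}}|\le|\Pi_n|$ from Proposition~\ref{prop:standard-program} is actually needed.
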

\begin{theorem}[PARITY$\notin$Poly-CP]\label{thm:main-theorem}
PARITY has no polynomial size CP programs.
\end{theorem}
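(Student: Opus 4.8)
The plan is to combine the Main Lemma with the known circuit complexity of PARITY. Suppose for contradiction that PARITY has polynomial size CP programs, i.e., there is a sequence $\{\Pi_n\}$ of canonical programs with $|\Pi_n|\leq p(n)$ for some polynomial $p$, such that for every $w\in\{0,1\}^n$ we have $w\in\mathrm{PARITY}\Leftrightarrow w\in Ans(\Pi_n)$. First I would invoke Lemma \ref{lma:main-lemma} (the Main Lemma) to replace each $\Pi_n$ by an equivalent PARITY$_n$ program $\Pi_n'$ that is completion-equivalent and satisfies $|\Pi_n'|\leq|\Pi_n|\leq p(n)$; so without loss of generality the sequence may be assumed completion-equivalent and polynomial size.

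Next I would pass to completions. Since $\Pi_n'$ is completion-equivalent, $Ans(\Pi_n')=M(Comp(\Pi_n'))$, so $Comp(\Pi_n')$ is a propositional formula (equivalently a circuit) whose set of satisfying assignments over $\{x_1,\dots,x_n\}$ is exactly PARITY$_n$. By Proposition \ref{prop:completion-poly-time}, $Comp(\Pi_n')$ is a constant-depth, unbounded fan-in circuit of size polynomially bounded by $|\Pi_n'|$, hence of size $q(n)$ for some polynomial $q$ (independent of $n$, with constant depth). This exhibits a polynomial-size, constant-depth, unbounded fan-in circuit family computing PARITY, i.e., PARITY$\in\mathsf{AC^0}$.

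Finally I would derive the contradiction from the Furst--Saxe--Sipser / H\aa stad lower bound cited in the paper, namely PARITY$\notin\mathsf{AC^0}$. This contradiction shows no polynomial size CP program family for PARITY can exist, proving the theorem. Combined with Theorem \ref{SNE:ub} (PARITY$\in$CP with exponential size) and the textbook fact PARITY$\in\mathsf{NC^1/poly}=\mathrm{Poly\text{-}PF}$, it follows that PARITY separates PF from CP.

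I expect the real work to be entirely upstream, inside the Main Lemma (Propositions \ref{prop:standard-program} and \ref{prop:standard-comp-equiv}): showing that an arbitrary PARITY program can be simplified, without size increase and without new variables, to a loop-free (pure) program whose completion still captures PARITY. Once that machinery is in place, the proof of Theorem \ref{thm:main-theorem} itself is a short three-line chain: Main Lemma $\to$ Proposition \ref{prop:completion-poly-time} (completion is a polynomial-size $\mathsf{AC^0}$ circuit) $\to$ PARITY$\notin\mathsf{AC^0}$. The only subtlety to keep track of is that "polynomial size'' for programs must translate to "polynomial size, \emph{constant} depth'' for the completion circuits, which is exactly what Proposition \ref{prop:completion-poly-time} guarantees uniformly over all programs; no hidden depth blowup occurs because the completion has a fixed depth regardless of $n$.
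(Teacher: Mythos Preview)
Your proposal is correct and follows essentially the same approach as the paper's own proof: assume a polynomial-size CP family for PARITY, apply the Main Lemma to obtain a completion-equivalent family of no greater size, then invoke Proposition~\ref{prop:completion-poly-time} to conclude PARITY$\in\mathsf{AC^0}$, contradicting the Furst--Saxe--Sipser/H\aa stad lower bound. Your additional remarks about where the real work lies (upstream in Propositions~\ref{prop:standard-program} and~\ref{prop:standard-comp-equiv}) and about the constant-depth guarantee are accurate and match the paper's structure.
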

\begin{proof}
Assume the contrary that there is a sequence of programs $\{\Pi_n\}$ in CP which represents PARITY, and $|\Pi_n|$ is bounded by a polynomial $p(n)$. By Lemma \ref{lma:main-lemma}, there is a sequence of completion-equivalent PARITY programs $\{\Pi_n'\}$ in which  $|\Pi_n'|$ is also bounded by the polynomial $p(n)$. By Proposition \ref{prop:completion-poly-time}, $\{\Pi_n'\}$ represents a language in $\mathsf{AC^0}$. This contradicts PARITY$\notin\mathsf{AC^0}$.
\end{proof}
\begin{corollary}
PARITY separates PF from CP.
\end{corollary}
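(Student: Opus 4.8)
The plan is to read the statement straight off Definition~\ref{def:succint}, using Theorem~\ref{thm:main-theorem} for the hard half and the classical fact that PARITY has small propositional formulas for the easy half. Unwinding Definition~\ref{def:succint}, ``PARITY separates PF from CP'' means three things: (i) PF and CP are equally expressive, i.e. a problem $L$ is representable in PF iff it is representable in CP, so that the succinctness comparison is even defined; (ii) PARITY $\in\mbox{Poly-PF}$; and (iii) PARITY $\notin\mbox{Poly-CP}$. I would verify these in turn.

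For (i) I would prove the stronger fact that \emph{every} problem $L$ is representable in both formalisms, so that the biconditional holds vacuously. On the PF side, for each length $n$ the finite set $\{w\in\{0,1\}^n : w\in L\}$ is the model set of the DNF obtained by disjoining, over each such $w$, the conjunction of the literals that $w$ fixes; hence $L\in$ PF. On the CP side, the choice rules $x_i\leftarrow not\ not\ x_i$ for $1\le i\le n$ have exactly the $2^n$ subsets of $\{x_1,\dots,x_n\}$ as answer sets, and appending, for each $n$-bit string $T\notin L$, the constraint $\bot\leftarrow B_T$ with $B_T$ containing $x_i$ for every $x_i\in T$ and $not\ x_i$ for every $x_i\notin T$ removes precisely the answer set $T$ (a standard property of constraints); hence $L\in$ CP. These representations are in general exponentially large, which is harmless since expressiveness ignores size.

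For (ii) I would invoke the textbook balanced-tree construction for PARITY: writing $a\oplus b$ for $(a\wedge\neg b)\vee(\neg a\wedge b)$ and recursing via $\mathrm{PARITY}(x_1,\dots,x_n)=\mathrm{PARITY}(x_1,\dots,x_{\lceil n/2\rceil})\oplus\mathrm{PARITY}(x_{\lceil n/2\rceil+1},\dots,x_n)$ gives a sequence of propositional formulas of depth $O(\log n)$ and polynomial size computing PARITY$_n$, so PARITY $\in\mathsf{NC^1/poly}=\mbox{Poly-PF}$ \cite{ccama,DBLP:books/daglib/0028687}. For (iii) I would simply cite Theorem~\ref{thm:main-theorem}. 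Combining (i)--(iii) with Definition~\ref{def:succint} then yields the corollary.

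I do not expect any real obstacle at this level: all the substance is already inside Theorem~\ref{thm:main-theorem} --- the Main Lemma~\ref{lma:main-lemma} (shrinking an arbitrary PARITY$_n$ program to an equally short completion-equivalent one), Proposition~\ref{prop:completion-poly-time} (the completion of a program is a constant-depth, unbounded-fan-in circuit of polynomially bounded size), and the lower bound PARITY $\notin\mathsf{AC^0}$. For the corollary itself, the only point needing a moment's care is the observation in (i) that PF and CP can each list an arbitrary finite family of models at the cost of an exponential blow-up, which is exactly what makes Definition~\ref{def:succint} applicable to the pair PF and CP.
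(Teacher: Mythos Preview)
Your proposal is correct and matches the paper's approach: the corollary is stated there without proof, as an immediate consequence of Theorem~\ref{thm:main-theorem} together with the textbook fact PARITY $\in$ Poly-PF. Your explicit verification of the equal-expressiveness precondition~(i) from Definition~\ref{def:succint} is more careful than what the paper spells out (the paper only remarks in the introduction that CP is ``equally expressive as PF''), but it is routine and does not constitute a different route.
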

\begin{corollary}\label{main-corollary}
Suppose $\mathsf{P}\nsubseteq\mathsf{NC^1/poly}$. Then CP and PF are succinctly incomparable.
\end{corollary}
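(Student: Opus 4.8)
The plan is to assemble this corollary from two failures of ``$\preceq$'', one already proved in this paper and one imported from the literature, and then to invoke Definition~\ref{def:succint}. The first half, $\mathrm{PF}\not\preceq\mathrm{CP}$, is exactly the preceding Corollary: by Theorem~\ref{thm:main-theorem} we have PARITY$\notin$Poly-CP, while PARITY$\in$Poly-PF is the textbook construction recalled in Section~\ref{sec:background}, so PARITY separates PF from CP. Nothing further is needed here.

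For the second half, $\mathrm{CP}\not\preceq\mathrm{PF}$, I would lift the Lifschitz--Razborov separation~\cite{Lifschitz:2006:WSM:1131313.1131316} along the syntactic inclusion $\mathrm{LP}\subseteq\mathrm{CP}$. They exhibit a variant of the $\mathsf{P}$-complete problem PATH that lies in Poly-LP but, under the hypothesis $\mathsf{P}\nsubseteq\mathsf{NC^1/poly}$, does not lie in Poly-PF. Since a normal program is precisely a canonical program in which the connective $not\ not$ never occurs, every LP sequence witnessing membership in Poly-LP is already a CP sequence; hence Poly-LP$\subseteq$Poly-CP and so PATH$\in$Poly-CP. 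Combined with PATH$\notin$Poly-PF, this says PATH separates CP from PF, i.e.\ $\mathrm{CP}\not\preceq\mathrm{PF}$.

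Finally I would check the standing hypothesis of Definition~\ref{def:succint}, namely that CP and PF compute the same class of languages: by the Lin--Zhao Theorem (Theorem~\ref{thm:lin-zhao-sne}) every canonical program $\Pi$ is equivalent to the set of propositional formulas $Comp(\Pi)\cup LF(\Pi)$ over $var(\Pi)$, and conversely any propositional formula can be rewritten, over the same variables, as an equivalent ``generate-and-test'' canonical program (choice rules $x\leftarrow not\ not\ x$ together with constraints $\bot\leftarrow B$ encoding the terms of the negation in DNF). With this in place, Definition~\ref{def:succint} applies to the pair (CP,~PF), and the two halves above give $\mathrm{CP}\not\preceq\mathrm{PF}$ and $\mathrm{PF}\not\preceq\mathrm{CP}$, so CP and PF are succinctly incomparable. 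I do not expect a genuine obstacle in this argument; the only points needing care are that the non-uniform notion of ``polynomial size PF representation'' used in~\cite{Lifschitz:2006:WSM:1131313.1131316} coincides with Poly-PF $=\mathsf{NC^1/poly}$ as used here, and that $\mathsf{P}$-completeness of PATH is with respect to reductions weak enough that PATH$\in$Poly-PF would collapse $\mathsf{P}\subseteq\mathsf{NC^1/poly}$ --- both of which are settled in~\cite{Lifschitz:2006:WSM:1131313.1131316}, so I would simply cite them.
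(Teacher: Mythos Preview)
Your proposal is correct and follows essentially the same route the paper intends: the paper leaves this corollary without an explicit proof, relying on the just-established ``PARITY separates PF from CP'' together with the Lifschitz--Razborov separation (PATH separates LP, hence CP, from PF under $\mathsf{P}\nsubseteq\mathsf{NC^1/poly}$) already flagged in the introduction. Your additional check of the equal-expressiveness precondition in Definition~\ref{def:succint} is a welcome bit of care that the paper glosses over with the remark that CP ``is equally expressive as PF''.
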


\section{Discussion and Some More Results}\label{sec:more-suc-result}
 Interestingly, our main result may at first appear counter-intuitive: the \textsf{P}-complete problem PATH has Poly-CP representations, while this does not hold for an ``easy''  problem PARITY. Actually, there is no contradiction. As noted in \cite{DBLP:books/aw/AbiteboulHV95,Dantsin:2001:CEP:502807.502810}, a \emph{complete} problem in a complexity class  can be represented in a formalism $\mathcal{C}$, \emph{does not} imply that \emph{all} problems in that class can be represented in $\mathcal{C}$.

Generally speaking, the research of succinctness \cite{Gogic95thecomparative,DBLP:conf/kr/Coste-MarquisLLM04,DBLP:journals/lmcs/GroheS05,DBLP:journals/ai/FrenchHIK13} gives us a deeper understanding about KR formalisms, for it reveals their (in)abilities of concisely representing different problems under the condition that the encoded models are the \emph{same}. In terms of the theory of computation, succinctness essentially concerns with the computational power of different formalisms (i.e., models of computation).  This is particularly interesting if the formalisms are \emph{equally expressive} and share the same reasoning complexity. E.g., logic programs with \emph{cardinality constraints} and \emph{choice rules} (CC, without classic negation $\neg$) \cite{DBLP:journals/ai/SimonsNS02}, \emph{(simple) definite causal theories} (S/DT) \cite{Giunchiglia04nonmonotoniccausal} and \emph{two-valued} programs (TV) \cite{DBLP:conf/iclp/Lifschitz12} are as expressive as PF and \textsf{NP}-complete for consistency checking. But they have a non-trivial succinctness picture, see Fig. \ref{pic:succ}.

\begin{figure}%{l}{4cm}
\begin{center}
\includegraphics[scale=0.7]{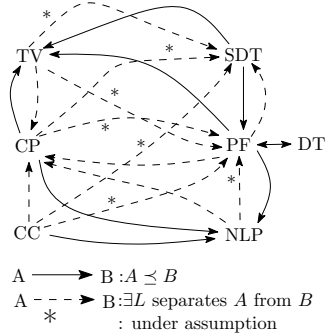}
\end{center}
\caption{A Succinctness Picture}\label{pic:succ}
\end{figure}

Besides the theoretical interests, succinctness also tells us something like ``which for what is the best'' in choosing KR formalisms for a given application. E.g.,  one should choose ASP instead of SAT  if the application involves reasoning about PATH or Transitive Closure\footnote{An $\mathsf{NL}$-\emph{complete} problem. It is believed that $\mathsf{NL\nsubseteq NC^1/Poly}$.}, because the former provides compact representations to avoid unnecessary overload. Recall that from the complexity viewpoint, even \emph{one} extra variable may \emph{double} the search space for intractable problems.

In the following we shall briefly discuss some succinctness results illustrated in Fig. \ref{pic:succ}, note that all mentioned formalisms have the same expressive power and same reasoning complexity.
\subsection{Logic Programs with Cardinality Constraints (CC)}
Simply speaking, CC extends normal programs (LP) with so-called  \emph{cardinality constraints} and \emph{choice rules} \cite{DBLP:journals/ai/SimonsNS02}. A \emph{choice rule}
\begin{equation}\label{equ:simple-choice}
\{x\}\leftarrow
\end{equation}
has two answer sets $\{x\}$ and $\emptyset$, i.e., same as  $x\leftarrow not\ not\ x$. Moreover, a choice
rule $\{x_1,\ldots,x_n\}\leftarrow$ produces $2^n$ answer sets, i.e., all subsets of $\{x_1,\ldots,x_n\}$. A \emph{cardinality constraint} is an expression of the form
\begin{equation}\label{equ:cardinality-constraint}
l\leq B \leq u
\end{equation}
in which $B$ is a finite set of rule elements of the form $x$ or $not\ x$, and integer $l$ (resp. $u$) is the lower (resp. upper) bound on $B$. In this paper we assume the magnitude of $l$ (and $u$) is polynomially bounded by $n$.

Intuitively, a set of variables $I$ satisfies (\ref{equ:cardinality-constraint}), if the number of satisfied rule elements in $B$ fulfills the related bounds. E.g., $\{x_1\}$ satisfies $1\leq\{x_1,x_2,x_3\}\leq 1$ but not $2\leq\{x_1,x_2,x_3\}\leq 3$, while $\{x_2,x_3\}$ satisfies the latter. Informally, we may think of (\ref{equ:cardinality-constraint}) as a special kind of rule element, and the answer set semantics is defined accordingly.

The following is a PARITY$_3$ program in CC:
\begin{equation}\label{equ:parity3-cc}
\begin{array}{l}
\{x_1,x_2,x_3\} \leftarrow  ,\\
\bot \leftarrow  0\leq\{x_1,x_2,x_3\}\leq 0,\\
\bot \leftarrow  2\leq\{x_1,x_2,x_3\}\leq 2,\\
\end{array}
\end{equation}
Clearly, the pattern applies to all PARITY$_n$ and the program  grows linearly. We define the size of a CC program to be the number of cardinality constraints occur in it.
\begin{theorem}[PARITY$\in$Poly-CC]\label{thm:cc-parity}
PARITY has polynomial size programs in CC.
\end{theorem}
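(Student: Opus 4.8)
The plan is to generalize the PARITY$_3$ program (\ref{equ:parity3-cc}) to arbitrary $n$ in the obvious way and then verify that the resulting sequence of CC programs computes PARITY with only linearly many cardinality constraints. Concretely, for each $n$ I would let $\Pi_n$ consist of the single choice rule $\{x_1,\ldots,x_n\}\leftarrow$ together with one constraint $\bot\leftarrow 2k\leq\{x_1,\ldots,x_n\}\leq 2k$ for every integer $k$ with $0\leq 2k\leq n$. This program contains $1+(\lfloor n/2\rfloor+1)$ cardinality constraints, so $|\Pi_n|=O(n)$, which is the polynomial bound required by Definition \ref{def:LPcomp}.

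The remaining task is to show $Ans(\Pi_n)=\mbox{PARITY}_n$. For this I would invoke two standard semantic facts about CC from \cite{DBLP:journals/ai/SimonsNS02}: (a) the program consisting of the single choice rule $\{x_1,\ldots,x_n\}\leftarrow$ has exactly the $2^n$ subsets of $\{x_1,\ldots,x_n\}$ as its answer sets; and (b) adding a constraint rule $\bot\leftarrow C$ to a CC program deletes precisely those answer sets $I$ of the original program that satisfy the body $C$, and introduces no new ones. Applying (a) and then (b) once for each constraint above, the answer sets of $\Pi_n$ are exactly the sets $I\subseteq\{x_1,\ldots,x_n\}$ that fail to satisfy $2k\leq\{x_1,\ldots,x_n\}\leq 2k$ for every admissible $k$; since $I$ satisfies this constraint iff $|I|=2k$, the surviving $I$ are precisely those with $|I|$ odd, i.e., $I\in\mbox{PARITY}_n$. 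The small cases $n=1,2$ are subsumed and can also be checked by hand.

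The main obstacle is not the construction, which is immediate, but pinning down the semantics of choice rules and cardinality constraints precisely enough to justify facts (a) and (b), since the paper describes them only informally. I would handle this either by (i) citing the reduct-based definitions and the corresponding lemmas of \cite{DBLP:journals/ai/SimonsNS02}, or (ii) giving a short self-contained argument: a choice rule contributes no derivation of any atom in any reduct, so every subset is closed and $\subseteq$-minimal, which yields (a); and a constraint $\bot\leftarrow C$ whose body $C$ uses only elements of the form $x$ or $not\ x$ reduces, relative to a candidate $I$, to either nothing or $\{\bot\leftarrow\}$ according solely to whether $I\models C$, so it eliminates exactly the answer sets with $I\models C$, which yields (b). Either route makes the verification routine once the definitions are fixed.
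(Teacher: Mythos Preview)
Your proposal is correct and follows exactly the approach the paper intends: the paper simply exhibits the PARITY$_3$ program (\ref{equ:parity3-cc}) and asserts ``Clearly, the pattern applies to all PARITY$_n$ and the program grows linearly,'' whereas you spell out the general $\Pi_n$ and verify $Ans(\Pi_n)=\mbox{PARITY}_n$ using the standard semantics of choice rules and constraints. Your write-up is strictly more detailed than the paper's own justification.
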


An equivalent translation from CC to NLP was presented in \cite{Ferraris:2005:WCN:1041209.1041211}, however, the translation may involve exponential size blowup, since every cardinality constraint is simply converted to a formula  via a brute force enumeration. In fact, such a translation can be reduced to be  polynomial by adopting a non-trivial, sophisticated encoding for so-called \emph{threshold functions}\footnote{E.g., see Chapter 2 of \cite{savage1998models}.}. Therefore, we have:

\begin{theorem}\label{thm:cc-nlp}
NLP is at least as succinct as CC.
\end{theorem}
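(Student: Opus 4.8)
The goal is to establish $\mathrm{CC}\preceq\mathrm{NLP}$, i.e., that every problem with polynomial-size CC programs has polynomial-size NLP programs; both classes have the expressive power of PF, so the comparison is meaningful. The plan is to translate an arbitrary CC program $\Pi_n$ into a nested program $\Pi_n'$ over the \emph{same} $n$ variables, with $Ans(\Pi_n')=Ans(\Pi_n)$ and $|\Pi_n'|$ bounded by a polynomial in $|\Pi_n|$ and $n$. The backbone is the answer-set-preserving translation of cardinality-constraint programs into nested programs due to Ferraris and Lifschitz \cite{Ferraris:2005:WCN:1041209.1041211}, which introduces no new atoms and is entirely \emph{local}: a choice head $\{a_1,\dots,a_k\}\leftarrow\beta$ expands into the $k$ rules $a_i\leftarrow not\ not\ a_i,\widehat{\beta}$ (a head cardinality constraint is handled similarly, adding one constraint rule $\bot\leftarrow\widehat{\beta},\,not\ \sigma(\cdot)$), while each cardinality constraint $C=(l\le B\le u)$ occurring in a rule body is replaced \emph{in situ} by a nested expression $\sigma(C)$; here $\widehat{\beta}$ is $\beta$ with every such $C$ so replaced. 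Every ingredient here is already polynomial except $\sigma(C)$, which \cite{Ferraris:2005:WCN:1041209.1041211} realizes by a brute-force enumeration over the subsets of $B$ that witness the constraint, and is hence of exponential size. The entire problem is to make $\sigma(C)$ small.

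The point is that $\sigma(C)$ is in essence a conjunction of two threshold functions on the literals of $B$: the lower bound ``$l\le B$'' is the monotone threshold $\mathrm{Th}_{l}$ of the indicator literals of $B$, and the upper bound ``$B\le u$'' is $not$ applied to the monotone threshold $\mathrm{Th}_{u+1}$ of those literals (equivalently, $\mathrm{Th}_{|B|-u}$ applied to their complements). Since $l$ and $u$ are polynomially bounded and $|B|\le\mathrm{poly}(n)$, and since monotone threshold functions possess polynomial-size monotone formulas --- for instance by extracting a suitable output of a logarithmic-depth sorting network, or by a balanced tree of binary adders that forms the count of satisfied literals of $B$ in binary and then compares it with $l$ and $u$ (Chapter~2 of \cite{savage1998models}) --- there is a nested expression $\theta(C)$ of size $\mathrm{poly}(n)$, built only from $\top,\bot,\wedge,\vee,not$ and the atoms of $B$, whose ``lower'' part is positive and whose ``upper'' part is $not$ applied to a positive formula, and which is classically equivalent to $\sigma(C)$. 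Replacing every $\sigma(C)$ by $\theta(C)$ (and likewise the head-constraint encoding by its polynomial counterpart) yields the required $\Pi_n'$ of polynomial size.

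Correctness reduces to two facts: in a nested program a subexpression may be replaced by a \emph{strongly equivalent} one without affecting any answer set, and strong equivalence is equivalence in the logic of here-and-there (HT), which is moreover a congruence. Now the HT-value of a positive formula (built from atoms, $\top$, $\bot$, $\wedge$, $\vee$) at an HT pair is determined by its classical values at the two worlds, so classically equivalent positive formulas are HT-equivalent; and the HT-value of $not\ \varphi$ (and of $not\ not\ \varphi$) depends on $\varphi$ only through the classical truth of $\varphi$ at the ``there'' world, so $not\ \varphi\equiv_{HT}not\ \varphi'$ whenever $\varphi\equiv_c\varphi'$. Both $\sigma(C)$ and $\theta(C)$ can be put into the shape ``positive formula $\wedge$ ($not$ applied to a positive formula)'' --- for $\sigma(C)$ after the De Morgan rewritings of negated subformulas that are themselves HT-valid by the previous sentence --- with the two positive parts classically equivalent across the encodings; by congruence, $\sigma(C)\equiv_{HT}\theta(C)$. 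Hence the substitution preserves $Ans(\cdot)$, and the choice- and head-constraint expansions are plainly answer-set preserving and polynomial.

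I expect the actual difficulty to be bookkeeping rather than insight: fixing the precise form of the Ferraris--Lifschitz encoding --- in particular where the shielding $not\ not$ sits and how the upper-bound part is written --- and then casting $\theta(C)$ in the matching ``positive / negation-of-positive'' shape, using only HT-valid manipulations, so that the congruence argument applies; classical equivalence alone would \emph{not} license an arbitrary replacement, which is exactly why the positivity of the two halves of a cardinality constraint is the essential structural fact. A second, routine point is to check that the threshold construction genuinely yields a \emph{formula}, hence a nested expression, of polynomial --- and not merely quasipolynomial --- size once the per-level adder formulas are composed.
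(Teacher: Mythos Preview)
Your proposal is correct and matches the paper's one-sentence sketch exactly: take the Ferraris--Lifschitz translation from CC to NLP and replace the brute-force encoding of each cardinality constraint by a polynomial-size threshold formula (the paper simply points to Chapter~2 of \cite{savage1998models} for the latter and says nothing further). Your strong-equivalence/HT argument fills in what the paper leaves entirely implicit; the only small refinement is that when $B$ contains elements of the form $not\ x$, the lower-bound part is not literally a positive formula over atoms but an $\wedge,\vee$-combination of the rule elements of $B$ taken as leaves, and HT-equivalence follows because both encodings compute the same monotone Boolean function of those leaves---which is clearly the argument you intend.
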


\subsection{Definite Causal Theories (DT)}
A variable $x$ or negated variable $\neg x$ is called a \emph{literal}.
A \emph{definite (causal) theory}  $D_n$ on \emph{signature} $\{x_1,\ldots,x_n\}$ is a finite set of (causal) rules of the form
\begin{equation}
H\Leftarrow G
\end{equation}
in which $H$ is either a literal or $\bot$, and $G$ is a propositional formula. If every $G$ is a conjunction of variables or negated variables, then $D_n$ is called \emph{simple} (SDT)\footnote{SDT is originally named as \emph{Objective Programs} in \cite{mccain97thesis}.}.

 The \emph{reduct} $D_n^I$ of $D_n$ w.r.t. a set of variables $I$, is the set of the heads $H$ of all rules in $D_n$ whose bodies $G$ are satisfied by $I$. Say $I$ is a \emph{model} of $D_n$ if $I$ is the unique model of $D_n^I$. The following theory:
\begin{equation}\label{equ:dt-choice}
\begin{array}{cc}
x\Leftarrow x, & \neg x\Leftarrow \neg x
\end{array}
\end{equation}
has two models $\{x\}$ and $\emptyset$, which is equivalent to program $x\leftarrow not\ not\ x$ or $\{x\}\leftarrow$.

If a definite theory $D_n$ is simple, then its size $|D_n|$  is defined as  the number of rules in it, otherwise $|D_n|$ is the number of connectives in it.
It is well-known that $D_n$ is equivalent to its (\emph{literal}) completion $Comp(D_n)$, in which $Comp(D_n)$ is similarly defined as for logic programs \cite{mccain97thesis,Giunchiglia04nonmonotoniccausal}.
It means that definite theories are fragments of PF, i.e., DT$\preceq$ PF. Therefore, the problems that can be represented by Poly-DT are in $\mathsf{NC^1/poly}$ as well. Moreover,  the completion of a simple definite theory  is also a constant depth, unbounded fan-in circuit whose size is polynomially bounded. By a proof similar to that of Theorem \ref{thm:main-theorem}, we have the following theorem:
\begin{theorem}[PARITY$\notin$Poly-SDT]
PARITY has no polynomial size theories in SDT.
\end{theorem}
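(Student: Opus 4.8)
The plan is to follow the proof of Theorem~\ref{thm:main-theorem}, but the argument is in fact shorter: for \emph{simple} definite causal theories the passage to the completion is \emph{unconditional}, so there is nothing corresponding to the Main Lemma (no loops to remove, no simplification in the style of Section~\ref{sec:gen-simply-parity}). So I would argue by contradiction. Suppose there is a sequence $\{D_n\}$ of theories in SDT that computes PARITY, with $|D_n|$ (the number of causal rules in $D_n$) bounded by a polynomial $p(n)$.

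First I would invoke the completion characterization of definite causal theories \cite{mccain97thesis,Giunchiglia04nonmonotoniccausal}: the models of $D_n$ coincide with $M(Comp(D_n))$, where $Comp(D_n)$ is the literal completion. Since this holds with no side conditions, $\{Comp(D_n)\}$ already computes PARITY in the sense of Definition~\ref{def:LPcomp} (read for formulas rather than programs). Next I would establish the analogue of Proposition~\ref{prop:completion-poly-time} for SDT: because $D_n$ is simple, every body $G$ is a conjunction of literals of length at most $n$; hence, for each atom $x$, the biconditionals $x\equiv\bigvee\{G : (x\Leftarrow G)\in D_n\}$ and $\neg x\equiv\bigvee\{G : (\neg x\Leftarrow G)\in D_n\}$, together with the negated bodies coming from rules with head $\bot$, make $Comp(D_n)$ a formula of constant depth built from gates of unbounded fan-in, with a total of $O(n\cdot|D_n|)$ connectives. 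Viewed as a circuit of fan-out~$1$, $Comp(D_n)$ thus has constant depth, unbounded fan-in, and size polynomial in $n$.

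Putting the two observations together, $\{Comp(D_n)\}$ is a polynomial-size, constant-depth, unbounded fan-in circuit family computing PARITY, i.e.\ PARITY~$\in\mathsf{AC^0}$, contradicting PARITY~$\notin\mathsf{AC^0}$ \cite{DBLP:journals/mst/FurstSS84,Hastad:1986:AOL:12130.12132}; this proves the theorem.

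I do not anticipate a real obstacle here, since everything is parallel to the CP case already settled. The one point worth stating carefully is \emph{where simplicity is used}: for a simple theory each body is a conjunction of literals, which keeps $Comp(D_n)$ at constant depth; for a general definite theory $Comp(D_n)$ is a polynomial-size but arbitrary-depth formula, placing the represented language only in $\mathsf{NC^1/poly}$ rather than $\mathsf{AC^0}$ --- this is exactly why the theorem is restricted to SDT. The only other thing to check is the faithfulness of the literal completion (no loop formulas are needed for causal theories), which is quoted from \cite{mccain97thesis,Giunchiglia04nonmonotoniccausal} and is what lets us bypass all of Section~\ref{sec:gen-simply-parity}.
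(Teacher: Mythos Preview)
Your proposal is correct and follows essentially the same approach the paper sketches: use the unconditional equivalence of a definite causal theory with its (literal) completion, observe that for \emph{simple} theories this completion is a constant-depth, unbounded fan-in circuit of polynomial size, and derive a contradiction with PARITY~$\notin\mathsf{AC^0}$. Your explicit remark on where simplicity is used (bodies are conjunctions of literals, keeping the completion at constant depth) is exactly the point the paper is gesturing at when it restricts the claim to SDT.
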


Consider the (non-simple) causal theory (\ref{dc:parity3}) for  PARITY$_2$, where the body of the last rule is the negation of a PARITY$_2$ formula:
\begin{equation}\label{dc:parity3}
\begin{array}{l}
x_1\Leftarrow x_1,\ \ \   \neg x_1\Leftarrow \neg x_1, \\
 x_2\Leftarrow x_2,\ \ \  \neg x_2\Leftarrow \neg x_2, \\
\bot\Leftarrow  \neg((x_1\wedge\neg x_2)\vee(\neg x_2\wedge x_1)).
\end{array}
\end{equation}
Recall that PARITY have polynomial formulas in PF, therefore it is not hard to see  we can have polynomial DT theory for PARITY by the above pattern.

\begin{theorem}[PARITY$\in$Poly-DT]\label{thm:parity-polygdt}
PARITY has  polynomial size theories in DT.
\end{theorem}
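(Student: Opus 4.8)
The plan is to lift a polynomial-size propositional formula for PARITY into a definite causal theory of essentially the same size, using the ``guess-and-check'' pattern already illustrated by theory~(\ref{dc:parity3}). Fix, for each $n$, a propositional formula $\phi_n$ over $\{x_1,\dots,x_n\}$ with $M(\phi_n)=\mbox{PARITY}_n$ and with $|\phi_n|$ bounded by a polynomial in $n$; such $\phi_n$ exist because PARITY $\in\mathsf{NC^1/poly}$. Then set
\[
D_n \;=\; \{\, x_i\Leftarrow x_i,\ \neg x_i\Leftarrow\neg x_i \ :\ 1\le i\le n \,\}\ \cup\ \{\, \bot\Leftarrow\neg\phi_n \,\}.
\]
The $2n$ ``choice'' rules act as a generator --- every subset of $\{x_1,\dots,x_n\}$ becomes a candidate model, exactly as the two rules in~(\ref{equ:dt-choice}) make both $\{x\}$ and $\emptyset$ models --- while the single constraint $\bot\Leftarrow\neg\phi_n$ discards the candidates that falsify $\phi_n$. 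Since $D_n$ is not simple, $|D_n|$ is the number of connectives occurring in it: each choice rule carries a bounded number of connectives and $\bot\Leftarrow\neg\phi_n$ carries $1+|\phi_n|$ connectives, so $|D_n|$ is polynomial in $n$ --- this is precisely why importing the compact $\phi_n$ matters rather than a brute-force normal form.

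To see $Mod(D_n)=\mbox{PARITY}_n$ I would go through the literal completion, using the stated fact that every definite theory is equivalent to its completion $Comp(D_n)$. For each literal $x_i$ the only rule with that head is $x_i\Leftarrow x_i$, contributing the tautology $x_i\equiv x_i$ to $Comp(D_n)$; likewise $\neg x_i\Leftarrow\neg x_i$ contributes $\neg x_i\equiv\neg x_i$; and $\bot\Leftarrow\neg\phi_n$ contributes $\neg\neg\phi_n$, i.e., $\phi_n$. Hence $Comp(D_n)$ is logically equivalent to $\phi_n$, so $Mod(D_n)=M(\phi_n)=\mbox{PARITY}_n$, and therefore $\{D_n\}$ represents PARITY with polynomially bounded size, i.e., PARITY $\in\mbox{Poly-DT}$. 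As a cross-check one can argue directly from the reduct: for any set of variables $I$, the reduct $D_n^I$ consists of the literals describing $I$, together with $\bot$ precisely when $I\nvDash\phi_n$; thus $D_n^I$ has $I$ as its unique model iff $I\models\phi_n$, which again identifies $Mod(D_n)$ with $\mbox{PARITY}_n$.

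I do not anticipate a genuine obstacle: this is the natural \textsf{NP}-style encoding and its semantics is completely determined by completion. The only delicate points are the size bookkeeping for non-simple theories noted above, and verifying that the self-rules $x_i\Leftarrow x_i$, $\neg x_i\Leftarrow\neg x_i$ inject nothing but tautologies into $Comp(D_n)$, so that the completion collapses to exactly $\phi_n$ with no model gained or lost. For $n\le 2$ one may instead just write explicit small theories (the $n=2$ case is~(\ref{dc:parity3})), but the construction above handles all $n\ge 1$ uniformly.
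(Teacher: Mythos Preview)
Your proposal is correct and follows exactly the approach the paper sketches: the paper exhibits the pattern~(\ref{dc:parity3}) for $n=2$ and then states that, since PARITY has polynomial formulas in PF, the pattern with $2n$ choice rules plus $\bot\Leftarrow\neg\phi_n$ yields polynomial DT theories for all $n$. You have simply supplied the details the paper omits (the size count, and the verification of correctness via completion and via the reduct), so there is nothing to add.
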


Since PATH is $\mathsf{P}$-complete \cite{Lifschitz:2006:WSM:1131313.1131316}, therefore if PATH has polynomial representations in Poly-DT, then $\mathsf{P}\subseteq\mathsf{NC^1/poly}$, which is  believed impossible.

\begin{theorem}[PATH$\notin$Poly-DT] \label{them:path-not-dt} Suppose $\mathsf{P}\nsubseteq\mathsf{NC^1/poly}$.
Then PATH has  no polynomial size definite theories.
\end{theorem}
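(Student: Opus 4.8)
The plan is to combine the downward-translation fact DT $\preceq$ PF with the assumed separation of LP from PF by PATH, exploiting the earlier result of Lifschitz and Razborov. Concretely, DT theories are equivalent to their literal completions, which are propositional formulas of polynomially bounded size; hence Poly-DT $\subseteq$ Poly-PF $= \mathsf{NC^1/poly}$. So it suffices to recall that PATH $\notin \mathsf{NC^1/poly}$ under the hypothesis $\mathsf{P}\nsubseteq\mathsf{NC^1/poly}$.

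First I would invoke the completion characterization for definite causal theories cited in the excerpt (McCain; Giunchiglia et al.): every definite theory $D_n$ is equivalent to $Comp(D_n)$, and $|Comp(D_n)|$ is polynomial in $|D_n|$ (whether $D_n$ is simple or not, since in the non-simple case $|D_n|$ already counts connectives). Thus a sequence $\{D_n\}$ of polynomial-size definite theories computing a language $L$ yields a sequence $\{Comp(D_n)\}$ of polynomial-size propositional formulas computing the same $L$, i.e. Poly-DT $\subseteq$ Poly-PF. Second, I would recall that PATH (the $\mathsf{P}$-complete variant used by Lifschitz and Razborov) is \emph{not} in Poly-PF assuming $\mathsf{P}\nsubseteq\mathsf{NC^1/poly}$: this is exactly the statement that PATH separates LP from PF, since Poly-PF $= \mathsf{NC^1/poly}$ and PATH is $\mathsf{P}$-hard. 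Putting the two together: if PATH had polynomial-size definite theories, it would have polynomial-size propositional formulas, placing a $\mathsf{P}$-hard language in $\mathsf{NC^1/poly}$ and forcing $\mathsf{P}\subseteq\mathsf{NC^1/poly}$, contrary to hypothesis. Hence PATH $\notin$ Poly-DT.

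There is essentially no new obstacle here — the theorem is a corollary packaging two previously established facts. The one point that deserves a line of care is confirming that the completion construction for \emph{non-simple} definite theories still produces only a polynomial blowup; this holds because the body formulas $G$ are written out explicitly in $D_n$, so the completion merely reorganizes and conjoins these already-present subformulas (plus the per-literal equivalences and the falsity clauses), contributing only linearly many extra connectives. With that noted, the proof is a two-step chain: DT $\preceq$ PF via completion, then apply PATH $\notin$ Poly-PF from Lifschitz–Razborov.
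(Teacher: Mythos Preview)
Your proposal is correct and matches the paper's own argument essentially verbatim: the paper justifies the theorem in a single sentence preceding it, noting that PATH is $\mathsf{P}$-complete and that Poly-DT $\subseteq$ Poly-PF $=\mathsf{NC^1/poly}$ via the (literal) completion, so PATH $\in$ Poly-DT would force $\mathsf{P}\subseteq\mathsf{NC^1/poly}$. Your additional remark checking that the completion of a non-simple definite theory still incurs only polynomial blowup is a welcome clarification the paper leaves implicit.
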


By the fact that PATH has polynomial size CP programs, we have:
\begin{corollary} \label{them:LP-DT-In}
Suppose $\mathsf{P}\nsubseteq\mathsf{NC^1/poly}$. Then CP and DT are succinctly incomparable.
\end{corollary}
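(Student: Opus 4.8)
The plan is to obtain this corollary purely by assembling the two one-directional separations already established, after first checking that the succinctness preorder $\preceq$ of Definition~\ref{def:succint} is even applicable to the pair $(\mathrm{CP},\mathrm{DT})$, i.e., that CP and DT represent exactly the same class of problems. No new combinatorics on programs or circuits is needed here; the argument is a short bookkeeping step.

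First I would verify equal expressiveness. CP is equally expressive as PF (as noted in the introduction), so the problems representable in CP are precisely those in $\mathsf{NC^1/poly}$. For DT, the literal completion yields $\mathrm{DT}\preceq\mathrm{PF}$, hence every DT-representable problem lies in $\mathsf{NC^1/poly}$; conversely, the pattern behind Theorem~\ref{thm:parity-polygdt} --- the choice rules $x\Leftarrow x$ and $\neg x\Leftarrow\neg x$ for each variable together with a single constraint $\bot\Leftarrow\neg\phi$ --- converts an arbitrary propositional formula $\phi$ into an equivalent definite theory, so PF is representable in DT. Thus CP, DT and PF all capture exactly $\mathsf{NC^1/poly}$, and in particular $L\in\mathrm{CP}\Leftrightarrow L\in\mathrm{DT}$ for every problem $L$, so $\preceq$ between CP and DT is meaningful.

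Next I would invoke the two separating problems. By Theorem~\ref{thm:parity-polygdt}, PARITY has polynomial size DT theories, while by Theorem~\ref{thm:main-theorem} it has no polynomial size CP programs; hence PARITY separates DT from CP, i.e., $\mathrm{DT}\not\preceq\mathrm{CP}$. Conversely, PATH has polynomial size CP programs (it has polynomial size normal programs by Lifschitz--Razborov and $\mathrm{LP}\subseteq\mathrm{CP}$), so PATH is in $\mbox{Poly-}\mathrm{CP}$; and by Theorem~\ref{them:path-not-dt}, under the hypothesis $\mathsf{P}\nsubseteq\mathsf{NC^1/poly}$, PATH has no polynomial size definite theories, so PATH separates CP from DT, i.e., $\mathrm{CP}\not\preceq\mathrm{DT}$. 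Combining $\mathrm{CP}\not\preceq\mathrm{DT}$ and $\mathrm{DT}\not\preceq\mathrm{CP}$ with the equal-expressiveness check, Definition~\ref{def:succint} gives that CP and DT are succinctly incomparable.

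The real difficulty lies not in this corollary but in its two ingredients: Theorem~\ref{thm:main-theorem}, whose proof runs through the Main Lemma and $\mathrm{PARITY}\notin\mathsf{AC^0}$, and Theorem~\ref{them:path-not-dt}, which inherits the $\mathsf{P}$-hardness of PATH from Lifschitz--Razborov. Within the corollary itself the only point needing care is the expressiveness matching, since ``$\mathcal{C}\not\preceq\mathcal{C'}$'' and ``succinctly incomparable'' are only defined when the two formalisms capture the same problems; everything else is immediate.
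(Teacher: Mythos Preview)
Your argument is correct and mirrors the paper's: the corollary is obtained by combining PATH $\in$ Poly-CP (inherited from Lifschitz--Razborov via LP $\subseteq$ CP) with Theorem~\ref{them:path-not-dt} for one direction, and Theorems~\ref{thm:parity-polygdt} and~\ref{thm:main-theorem} for the other. One slip in your expressiveness paragraph: you write that the problems representable in CP (and DT, PF) are ``precisely those in $\mathsf{NC^1/poly}$,'' but $\mathsf{NC^1/poly}$ is Poly-PF, the class of problems with \emph{polynomial-size} PF representations; since every boolean function has a DNF, each of PF, CP and DT represents \emph{all} languages, so the precondition $L\in\mathrm{CP}\Leftrightarrow L\in\mathrm{DT}$ of Definition~\ref{def:succint} holds for this trivial reason rather than via $\mathsf{NC^1/poly}$ (indeed, Poly-CP $=$ Poly-DT would contradict the very incomparability you are proving).
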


It is worth  to point out that  some  difficulties observed in the literature could be nicely  explained by the above succinctness results. E.g., DT has been observed hard to concisely encode Transitive Closure (TC) \cite{Giunchiglia04nonmonotoniccausal,almostdef}. Recall that  Poly-DT represents problems in $\mathsf{NC^1/Poly}$, and  TC is a problem in $\mathsf{NC^2/poly}$ \cite{p-complete-book}, a class widely believed strictly contains $\mathsf{NC^1/poly}$. So unless the two classes coincide, TC has no polynomial size definite theories.

\subsection{Two-Valued Logic Programs (TV)}
A (\emph{two-valued}) program \cite{DBLP:conf/iclp/Lifschitz12} $\Pi_n$ on signature $\{x_1,\ldots,x_n\}$ is a finite set of (two-valued) rules of the form:
\begin{equation}\label{equ:tv-rule}
H\leftarrow B: G
\end{equation}
in which  $B\cup\{H\}$ is a finite set of literals and  $G$ is a formula. The \emph{reduct} $\Pi_n^I$ of $\Pi_n$ w.r.t. a set of variables $I$, is the set of rules
\begin{equation}\label{equ:tv-rule-reduct}
H\leftarrow B
\end{equation}
from $\Pi_n$ s.t. $I$ satisfies $G$. A set of literals $J$ is \emph{closed} under rule (\ref{equ:tv-rule-reduct}) if $H\in J$ whenever $B\subseteq J$. We say $I$ is a \emph{model} of $\Pi_n$ if $I$ is the \emph{unique} model of the minimal closure $J$ under every rule of $\Pi_n^I$. The following program $\Pi_2$ in TV
\begin{equation}\label{tv:choice}
\begin{array}{c}
x\leftarrow :  x,\ \ \  \ \ \  \neg x\leftarrow : \neg x
\end{array}
\end{equation}
has two models $\{x\}$ and $\emptyset$, which is equivalent to (\ref{equ:dt-choice}).

The following observations were pointed out in \cite{DBLP:conf/iclp/Lifschitz12}.
A formula $\phi_n$ can be rewritten in TV ($i\in\{1,\ldots,n\}$)\footnote{$\bot\leftarrow: \neg \phi_n$ is a shorthand of $x_1\leftarrow: \neg \phi_n,\ \  \neg x_1\leftarrow: \neg \phi_n$.}:
\begin{equation}\label{tv:2pf}
\begin{array}{c}
x_i\leftarrow :  x_i,\ \ \ \neg x_i\leftarrow : \neg x_i, \ \  \ \bot\leftarrow: \neg \phi_n.
\end{array}
\end{equation}
A causal rule $H\Leftarrow G$ can be equivalently rewritten as $H\leftarrow :G$. Moreover, to equivalently rewrite a CP program $\Pi_n$, each rule:
\begin{equation}\label{tv:cp2tv}
\begin{array}{cl}
H\leftarrow u_{1},\ldots, u_{j}, & not\ y_{j+1},\ldots, not\ y_{m},  \\
& not\ not\ z_{m+1},\ldots, not\ not\ z_{k}
\end{array}
\end{equation}
can be  translated  as:
\begin{equation}\label{tv:cp2tv-1}
\begin{array}{c}
H\leftarrow u_{1},\ldots, u_{j}: \neg y_{j+1}\wedge, \ldots \neg y_{m}\wedge z_{m+1}\wedge \ldots z_{k}  \\
\end{array}
\end{equation}
and add $\neg x\leftarrow :\neg x$ for every $x\in var(\Pi_n)$. All together, we have:
\begin{theorem}\label{thm:tv}
Two valued programs are strictly more succinct than: (i) propositional formulas and definite  theories, if $\mathsf{P}\nsubseteq\mathsf{NC^1/poly}$; (ii) canonical programs.
\end{theorem}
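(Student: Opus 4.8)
The plan is to prove each of the three strictness claims by checking the two conditions of Definition \ref{def:succint}: a polynomial, model-preserving translation \emph{into} TV (which gives ``TV at least as succinct as''), together with a problem that separates TV from the formalism in question. Equal expressive power will not be an issue, since PF, DT, CP and TV all capture exactly the model sets definable over a fixed signature, so only the size comparison needs work.

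First I would record the ``at least as succinct'' directions, all of which are the observations of \cite{DBLP:conf/iclp/Lifschitz12}. For PF $\preceq$ TV, the rewriting (\ref{tv:2pf}) turns a formula $\phi_n$ into a TV program of size linear in $|\phi_n|$ (the $2n$ rules $x_i\leftarrow:x_i$, $\neg x_i\leftarrow:\neg x_i$ plus $\bot\leftarrow:\neg\phi_n$) whose models are exactly $M(\phi_n)$. For DT $\preceq$ TV, the one-to-one map $H\Leftarrow G\mapsto H\leftarrow:G$ is size-preserving and, by comparing the two notions of reduct, model-preserving. For CP $\preceq$ TV, each rule of the form (\ref{tv:cp2tv}) becomes the single TV rule (\ref{tv:cp2tv-1}), and one adds the $n$ rules $\neg x\leftarrow:\neg x$ for $x\in var(\Pi_n)$; the result has size $|\Pi_n|+n$ and the same answer sets as $\Pi_n$. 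Hence Poly-PF, Poly-DT and Poly-CP are all contained in Poly-TV.

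Next I would exhibit the separating problems, reusing the two hard problems already in play. For part (ii): PARITY is in Poly-PF (a textbook fact), hence in Poly-TV by the PF $\preceq$ TV translation, whereas PARITY $\notin$ Poly-CP by Theorem \ref{thm:main-theorem}; so PARITY separates TV from CP, and together with CP $\preceq$ TV this gives CP $\prec$ TV with no complexity-theoretic assumption. For part (i): PATH is in Poly-CP \cite{Lifschitz:2006:WSM:1131313.1131316}, hence in Poly-TV by the CP $\preceq$ TV translation, while under the hypothesis $\mathsf{P}\nsubseteq\mathsf{NC^1/poly}$ we have PATH $\notin$ Poly-PF \cite{Lifschitz:2006:WSM:1131313.1131316} and PATH $\notin$ Poly-DT by Theorem \ref{them:path-not-dt}; so PATH separates TV from both PF and DT, yielding PF $\prec$ TV and DT $\prec$ TV under that hypothesis.

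The main obstacle is the semantic verification of the three translations — especially CP-to-TV, where one must track the interplay of the CP reduct (replacing $not\ not\ x$ and $not\ x$ by $\top/\bot$) with the TV reduct (selecting those rules whose formula $G$ holds in $I$), and confirm that the auxiliary rules $\neg x\leftarrow:\neg x$ force the minimal closure to be two-valued without adding or killing any answer set. Since these checks are carried out in \cite{DBLP:conf/iclp/Lifschitz12}, the present proof can simply invoke them; everything else is just assembling Theorem \ref{thm:main-theorem}, Theorem \ref{them:path-not-dt} and the separation of Lifschitz and Razborov.
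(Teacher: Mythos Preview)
Your proposal is correct and matches the paper's own argument, which is not written out as a formal proof but is exactly the combination of the three polynomial translations (\ref{tv:2pf}), $H\Leftarrow G\mapsto H\leftarrow:G$, and (\ref{tv:cp2tv})--(\ref{tv:cp2tv-1}) recorded just before the theorem, together with the separating problems PATH (for PF and DT, via Lifschitz--Razborov and Theorem~\ref{them:path-not-dt}) and PARITY (for CP, via Theorem~\ref{thm:main-theorem}). Your identification of the semantic correctness of the CP-to-TV translation as the only nontrivial step, deferred to \cite{DBLP:conf/iclp/Lifschitz12}, is also in line with how the paper handles it.
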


\section{Conclusions}
The main result of the paper is that the PARITY problem separates PF from CP, i.e., PARITY has no polynomial size CP programs, but  has  polynomial size PF formulas. Together with Lifschitz and Razborov's separation result, i.e., there exists a problem separates CP from PF (assuming $\mathsf{P}\nsubseteq \mathsf{NC^1/poly}$), we conclude that the two well-known KR formalisms are succinctly incomparable. In other words, if we consider CP and PF as two different models of computation, the above result just states that they are incomparable in terms of computational power. We also give a non-trivial succinctness picture on a family of logic program classes which posses the same expressive power and same reasoning complexity as PF.

 In  future work, we plan to investigate some missing connections in Fig. \ref{pic:succ}, e.g., we conjecture that there is a problem separates NLP from CP, SDT and CP are succinctly incomparable.

\subsection*{Acknowledgement}
We are grateful to the anonymous reviewers for their valuable comments. Thanks to Shiguang Feng, Yan Zhang, Jiankun He,  Guangrui Dang and Xiaolong Liang for their helpful discussions. The research was partially supported by NSFC Grant  61272059,  MOE Grant 11JJD720020, NSSFC Grant 13\&ZD186, 14CZX058 and the Fundamental Research Funds for the Central Universities Grant 1409025.

\bibliography{KR2014}
\bibliographystyle{plain} % latex default
\end{document}